\def\StdwidthBdy       {1.3pt}  
\def\StdwidthBdyfield  {0.88}   
\def\StdwidthEllipseX  {0.56}   
\def\StdwidthEllipseY  {0.31}   
\def\StdwidthFieldline {1.5pt}  
\def\StdwidthSepline   {0.8pt}  
\def\widthboxline      {1.5pt}
\def\ResetToStdWidths{%
    \def\widthBdy       {\StdwidthBdy}       
    \def\widthBdyfield  {\StdwidthBdyfield}  
    \def\widthEllipseX  {\StdwidthEllipseX}  
    \def\widthEllipseY  {\StdwidthEllipseY}  
    \def\widthFieldline {\StdwidthFieldline} 
    \def\widthSepline   {\StdwidthSepline}   
    }
\def\widthObj       {1.9pt}  
\def\colorBdy       {blue!81!black}
\def\colorBdyfieldB {brown!71!blue}
\def\colorBdyfield  {brown!81}
\def\colorBdyLite   {blue!26}
\def\colorBulkfield {brown!81}
\def\colorBulkfieldDark {brown!71!black}
\def\colorC         {green!61!black}
\def\colorDefect    {purple!80}
\def\colorDeffield  {brown!81}
\def\colorFace      {yellow!77!brown}
\def\colorFaceDark  {yellow!64!brown}
\def\colorFaceLite  {yellow!93!brown}
\def\colorHole      {yellow!11}
\def\colorHoleDark  {yellow!41}
\def\colorHoleLite  {yellow!36}
\def\colorM         {blue!81!black}
\def\colorSepline   {yellow!72!black}
\def\colorZ         {green!47!black}
\def\arrowDefect    {stealth'} 
\newcommand\scopeArrow[2] {\begin{scope}[decoration={markings,mark=at position #1
                 with \arrow{#2}}]}  
\newcommand\Bdyfield[3]{%
     \begin{scope}[shift={(#1,#2)}]
     \draw[line width=\widthFieldline,\colorBulkfield,rotate=#3]
     (-0.5*\widthBdyfield,0) -- (0.5*\widthBdyfield,0) ; \end{scope}}
\newcommand\BdyfieldBroad[3]{%
     \begin{scope}[shift={(#1,#2)}]
     \draw[line width=\widthFieldline,\colorBulkfield,rotate=#3]
     (-0.8*\widthBdyfield,0) -- (0.8*\widthBdyfield,0) ; \end{scope}}
\newcommand\BdyfieldLabeled[5]{%
     \begin{scope}[shift={(#1,#2)}]
     \draw[line width=\widthFieldline,\colorBulkfield,rotate=#3]
     (-0.5*\widthBdyfield,0) -- (0.5*\widthBdyfield,0) ;
     \node[rotate=#3,yshift=8pt,\colorBulkfieldDark] {$\BBx^{#4,#5}$} ;
     \end{scope}}
\newcommand\BdyfieldLabeledBottom[5]{%
     \begin{scope}[shift={(#1,#2)}]
     \draw[line width=\widthFieldline,\colorBulkfield,rotate=#3]
     (-0.5*\widthBdyfield,0) -- (0.5*\widthBdyfield,0) ;
     \node[rotate=#3,yshift=-8pt,\colorBulkfieldDark] {$\BBx^{#4,#5}$} ;
     \end{scope}}
\newcommand\Bulkfield[3]{%
     \begin{scope}[shift={(#1,#2)}]
     \filldraw[fill=\colorHole,line width=\widthFieldline,draw=\colorBulkfield,rotate=#3]
     (0,0) circle (\widthEllipseX cm and \widthEllipseY cm) ; \end{scope}}
\newcommand\BulkfieldLite[3]{%
     \begin{scope}[shift={(#1,#2)}]
     \filldraw[fill=\colorFaceLite,line width=\widthFieldline,draw=\colorBulkfield,rotate=#3]
     (0,0) circle (\widthEllipseX cm and \widthEllipseY cm) ; \end{scope}}
\newcommand\BulkfieldLabeled[3]{%
     \begin{scope}[shift={(#1,#2)}]
     \filldraw[fill=\colorHole,line width=\widthFieldline,draw=\colorBulkfield,rotate=#3]
     (0,0) circle (\widthEllipseX cm and \widthEllipseY cm) node {$\FFx$} ; \end{scope}}
\newcommand\DeffieldLabeled[5]{%
     \begin{scope}[shift={(#1,#2)}]
     \filldraw[fill=\colorHole,line width=\widthFieldline,draw=\colorBulkfield,rotate=#3]
     (0,0) circle (\widthEllipseX cm and \widthEllipseY cm)
     node [rotate=#3,yshift=-17pt,\colorBulkfieldDark] {$\DDx^{#4,#5}_{}$} ; \end{scope}}
\newcommand\DeffieldLabeledTop[5]{%
     \begin{scope}[shift={(#1,#2)}]
     \filldraw[fill=\colorHole,line width=\widthFieldline,draw=\colorBulkfield,rotate=#3]
     (0,0) circle (\widthEllipseX cm and \widthEllipseY cm)
     node [rotate=#3,yshift=13pt,\colorBulkfieldDark] {$\DDx^{#4,#5}_{}$} ; \end{scope}}
\newcommand\ThreeBdyFieldsOnDiskTopLabeled[4]{ 
  \def\BBx{#4}
  \fill[\colorFace]
     (-\locpX+0.38*\widthBdyfield,\locpY+0.33*\widthBdyfield)
     [out=-55,in=-125] to node[sloped,midway,yshift=-8pt,color=\colorM] {$#3$}
     (\locpX-0.38*\widthBdyfield,\locpY+0.33*\widthBdyfield)
     -- (\locpX+0.35*\widthBdyfield,\locpY-0.35*\widthBdyfield)
     [out=-135,in=95] to node[sloped,midway,yshift=8pt,color=\colorM] {$#1$}
     (0.5*\widthBdyfield,0) -- (-0.5*\widthBdyfield,0)
     [out=85,in=-45] to node[sloped,midway,yshift=8pt,color=\colorM] {$#2$}
     (-\locpX-0.35*\widthBdyfield,\locpY-0.35*\widthBdyfield) -- cycle ;
  \draw[line width=\widthBdy,\colorBdy]
     (-\locpX+0.38*\widthBdyfield,\locpY+0.33*\widthBdyfield)
          [out=-55,in=-125] to (\locpX-0.38*\widthBdyfield,\locpY+0.33*\widthBdyfield)
     (0.48*\widthBdyfield,0) [out=95,in=-135] to (\locpX+0.34*\widthBdyfield,\locpY-0.33*\widthBdyfield)
     (-0.48*\widthBdyfield,0) [out=85,in=-45] to (-\locpX-0.34*\widthBdyfield,\locpY-0.33*\widthBdyfield) ;
  \BdyfieldLabeled 0 0 0 {#2} {#1}
  \BdyfieldLabeledBottom {-\locpX} {\locpY} {40} {#3} {#2}
  \BdyfieldLabeledBottom {\locpX} {\locpY} {-40} {#1} {#3}
  }
\def\ThreeBdyFieldsOnDisk{ 
  \fill[\colorFace]
     (-\locpX+0.38*\widthBdyfield,\locpY+0.33*\widthBdyfield)
     [out=-55,in=-125] to (\locpX-0.38*\widthBdyfield,\locpY+0.33*\widthBdyfield)
     -- (\locpX+0.35*\widthBdyfield,\locpY-0.35*\widthBdyfield)
     [out=-135,in=95] to (0.5*\widthBdyfield,0) -- (-0.5*\widthBdyfield,0)
     [out=85,in=-45] to (-\locpX-0.35*\widthBdyfield,\locpY-0.35*\widthBdyfield) -- cycle ;
  \draw[line width=\widthBdy,\colorBdy]
     (-\locpX+0.38*\widthBdyfield,\locpY+0.33*\widthBdyfield)
          [out=-55,in=-125] to (\locpX-0.38*\widthBdyfield,\locpY+0.33*\widthBdyfield)
     (0.48*\widthBdyfield,0) [out=95,in=-135] to (\locpX+0.34*\widthBdyfield,\locpY-0.33*\widthBdyfield)
     (-0.48*\widthBdyfield,0) [out=85,in=-45] to (-\locpX-0.34*\widthBdyfield,\locpY-0.33*\widthBdyfield) ;
  \Bdyfield 0 0 0
  \Bdyfield {-\locpX} {\locpY} {40}
  \Bdyfield {\locpX} {\locpY} {-40}
  }
\def\TwoplusonebroadBdyFieldsOnDisk{ 
  \fill[\colorFace]
     (-\locpX+0.38*\widthBdyfield,\locpY+0.33*\widthBdyfield)
     [out=-55,in=-125] to (\locpX-0.38*\widthBdyfield,\locpY+0.33*\widthBdyfield)
     -- (\locpX+0.35*\widthBdyfield,\locpY-0.35*\widthBdyfield)
     [out=-135,in=95] to (0.8*\widthBdyfield,0) -- (-0.8*\widthBdyfield,0)
     [out=85,in=-45] to (-\locpX-0.35*\widthBdyfield,\locpY-0.35*\widthBdyfield) -- cycle ;
  \draw[line width=\widthBdy,\colorBdy]
     (-\locpX+0.38*\widthBdyfield,\locpY+0.33*\widthBdyfield)
          [out=-55,in=-125] to (\locpX-0.38*\widthBdyfield,\locpY+0.33*\widthBdyfield)
     (0.78*\widthBdyfield,0) [out=95,in=-135] to (\locpX+0.34*\widthBdyfield,\locpY-0.33*\widthBdyfield)
     (-0.78*\widthBdyfield,0) [out=85,in=-45] to (-\locpX-0.34*\widthBdyfield,\locpY-0.33*\widthBdyfield) ;
  \BdyfieldBroad 0 0 0
  \Bdyfield {-\locpX} {\locpY} {40}
  \Bdyfield {\locpX} {\locpY} {-40}
  }
\def\TwoplusoneBdyFieldsOnDisk{ 
  \fill[\colorFace]
     (-\locpX,\locpY+0.5*\widthBdyfield) [out=-35,in=-145] to (\locpX,\locpY+0.5*\widthBdyfield)
     -- (\locpX,\locpY-0.5*\widthBdyfield) [out=-185,in=95] to (0.5*\widthBdyfield,0)
     -- (-0.5*\widthBdyfield,0) [out=85,in=5] to (-\locpX,\locpY-0.5*\widthBdyfield) -- cycle ;
  \draw[line width=\widthBdy,\colorBdy]
     (-\locpX,\locpY+0.5*\widthBdyfield) [out=-35,in=-145] to (\locpX,\locpY+0.5*\widthBdyfield)
     (0.48*\widthBdyfield,0) [out=95,in=-185] to (\locpX,\locpY-0.5*\widthBdyfield)
     (-0.48*\widthBdyfield,0) [out=85,in=5] to (-\locpX,\locpY-0.5*\widthBdyfield) ;
  \Bdyfield 0 0 0
  \Bdyfield {-\locpX} {\locpY} {90}
  \Bdyfield {\locpX} {\locpY} {-90}
  }
\def\ThreeBulkFieldsOnSphere{ 
  \shade[shading=ball,left color=\colorFaceDark]
     (-\locpx+0.99*\widthEllipseX,\locpy+0.39*\widthEllipseY)
     [out=280,in=260] to (\locpx-0.99*\widthEllipseX,\locpy+0.39*\widthEllipseY)
     -- (\locpx+0.85*\widthEllipseX,\locpy-0.95*\widthEllipseY)
     [out=215,in=95] to (\widthEllipseX,0) -- (-\widthEllipseX,0)
    [out=85,in=-35] to (-\locpx-0.85*\widthEllipseX,\locpy-0.95*\widthEllipseY) -- cycle;
  \draw[line width=\widthSepline,color=\colorSepline]
     (-\locpx+1.02*\widthEllipseX,\locpy+0.43*\widthEllipseY)
         [out=280,in=260] to (\locpx-1.02*\widthEllipseX,\locpy+0.43*\widthEllipseY)
     (1.05*\widthEllipseX,0) [out=95,in=215] to (\locpx+0.87*\widthEllipseX,\locpy-0.95*\widthEllipseY)
     (-1.05*\widthEllipseX,0) [out=85,in=-35] to (-\locpx-0.87*\widthEllipseX,\locpy-0.95*\widthEllipseY);
  \Bulkfield 0 0 0
  \Bulkfield {-\locpx} {\locpy} {20}
  \Bulkfield {\locpx} {\locpy} {-20}
  }
\newcommand\FourDefectFieldsOnSphereLabeled[5]{ 
         \def\locpa {31}   
         \def\locpb {27}   
         \def\locpCX {0.55*\widthEllipseX}
         \def\locpCY {0.53*\widthEllipseY}
         \def\locpDX {0.35*\widthEllipseX}
         \def\locpDY {1.07*\widthEllipseY}
         \def\locpEX {0.69*\widthEllipseX}
         \def\locpEY {1.33*\widthEllipseY}
  \def\DDx{#5}
  \shade[shading=ball,left color=\colorFaceDark]
     (-\locpx-\locpEX,-\locpy+\locpEY)
     [out=\locpa,in=-\locpa] to (-\locpx-\locpEX,\locpy-\locpEY)
     -- (-\locpx+\locpEX,\locpy+\locpEY)
     [out=-90+\locpa,in=-90-\locpa] to (\locpx-\locpEX,\locpy+\locpEY)
     -- (\locpx+\locpEX,\locpy-\locpEY)
     [out=180+\locpa,in=180-\locpa] to (\locpx+\locpEX,-\locpy+\locpEY)
     -- (\locpx-\locpEX,-\locpy-\locpEY)
     [out=90+\locpa,in=90-\locpa] to (-\locpx+\locpEX,-\locpy-\locpEY) -- cycle ;
  \draw[line width=\widthSepline,color=\colorSepline]
     (-\locpx-\locpEX,-\locpy+\locpEY)
     [out=\locpa,in=-\locpa] to (-\locpx-\locpEX,\locpy-\locpEY)
     (-\locpx+\locpEX,\locpy+\locpEY)
     [out=-90+\locpa,in=-90-\locpa] to (\locpx-\locpEX,\locpy+\locpEY)
     (\locpx+\locpEX,\locpy-\locpEY)
     [out=180+\locpa,in=180-\locpa] to (\locpx+\locpEX,-\locpy+\locpEY)
     (\locpx-\locpEX,-\locpy-\locpEY)
     [out=90+\locpa,in=90-\locpa] to (-\locpx+\locpEX,-\locpy-\locpEY) ;
  \draw[line width=\widthFieldline,color=\colorDefect]
     (-\locpx-\locpDX,-\locpy+\locpDY) [out=\locpb,in=-\locpb]
     to node[midway,sloped,below=-1pt] {$#1$} (-\locpx-\locpDX,\locpy-\locpDY)
     (-\locpx+\locpCX,\locpy+\locpCY) [out=-90+\locpb,in=-90-\locpb]
     to node[midway,sloped,below=-1pt] {$#4$} (\locpx-\locpCX,\locpy+\locpCY)
     (\locpx+\locpDX,\locpy-\locpDY) [out=180+\locpa,in=180-\locpa]
     to node[midway,sloped,below=-1pt] {$#3$} (\locpx+\locpDX,-\locpy+\locpDY)
     (\locpx-\locpCX,-\locpy-\locpCY) [out=90+\locpa,in=90-\locpa]
     to node[midway,sloped,above=-2pt] {$#2$} (-\locpx+\locpCX,-\locpy-\locpCY) ;
  \DeffieldLabeled {-\locpx} {-\locpy} {-45} {#1}{#2}
  \DeffieldLabeled {\locpx} {-\locpy} {45} {#2}{#3}
  \DeffieldLabeledTop {\locpx} {\locpy} {-45} {#3}{#4}
  \DeffieldLabeledTop {-\locpx} {\locpy} {45} {#4}{#1}
  }
\def\FourBulkFieldsOnSphere{ 
         \def\locpa {31}   
         \def\locpb {27}   
         \def\locpCX {0.55*\widthEllipseX}
         \def\locpCY {0.53*\widthEllipseY}
         \def\locpDX {0.35*\widthEllipseX}
         \def\locpDY {1.07*\widthEllipseY}
         \def\locpEX {0.69*\widthEllipseX}
         \def\locpEY {1.33*\widthEllipseY}
  \shade[shading=ball,left color=\colorFaceDark]
     (-\locpx-\locpEX,-\locpy+\locpEY)
     [out=\locpa,in=-\locpa] to (-\locpx-\locpEX,\locpy-\locpEY)
     -- (-\locpx+\locpEX,\locpy+\locpEY)
     [out=-90+\locpa,in=-90-\locpa] to (\locpx-\locpEX,\locpy+\locpEY)
     -- (\locpx+\locpEX,\locpy-\locpEY)
     [out=180+\locpa,in=180-\locpa] to (\locpx+\locpEX,-\locpy+\locpEY)
     -- (\locpx-\locpEX,-\locpy-\locpEY)
     [out=90+\locpa,in=90-\locpa] to (-\locpx+\locpEX,-\locpy-\locpEY) -- cycle ;
  \draw[line width=\widthSepline,color=\colorSepline]
     (-\locpx-\locpEX,-\locpy+\locpEY)
     [out=\locpa,in=-\locpa] to (-\locpx-\locpEX,\locpy-\locpEY)
     (-\locpx+\locpEX,\locpy+\locpEY)
     [out=-90+\locpa,in=-90-\locpa] to (\locpx-\locpEX,\locpy+\locpEY)
     (\locpx+\locpEX,\locpy-\locpEY)
     [out=180+\locpa,in=180-\locpa] to (\locpx+\locpEX,-\locpy+\locpEY)
     (\locpx-\locpEX,-\locpy-\locpEY)
     [out=90+\locpa,in=90-\locpa] to (-\locpx+\locpEX,-\locpy-\locpEY) ;
  \Bulkfield {-\locpx} {-\locpy} {-45}
  \Bulkfield {\locpx} {-\locpy} {45}
  \Bulkfield {\locpx} {\locpy} {-45}
  \Bulkfield {-\locpx} {\locpy} {45}
  }
\newcommand\CardyPipeTopLabeled[3]{ 
  \def\BBx{#2}
  \def\FFx{#3}
  \def\widthBdyfield{2*\widthEllipseX}
  \shade[shading=ball,left color=\colorFaceDark]
     (-\widthEllipseX,0) -- ++(0,\locpz) -- ++(2*\widthEllipseX,0) -- ++(0,-\locpz) -- cycle ;
  \fill[\colorHole]
     (\widthEllipseX,0) [out=130,in=0] to (0,\locph)
     [out=180,in=50] to (-\widthEllipseX,0) -- cycle ;
  \draw[line width=\widthSepline,color=\colorSepline]
     (\widthEllipseX,0) [out=90,in=270] to +(0,\locpz)
     (-\widthEllipseX,0) [out=90,in=270] to (-\widthEllipseX,\locpz) ;
  \draw[line width=\widthBdy,\colorBdy]
     (\widthEllipseX,0) [out=130,in=0] to (0,\locph) node [below=-1pt] {$#1$}
     (-\widthEllipseX,0) [out=50,in=180] to (0,\locph) ;
  \BulkfieldLabeled 0 {\locpz} 0
  \BdyfieldLabeled 0 0 0 {#1} {#1}
  }
\def\CardyPipe{ 
\def\widthBdyfield{2*\widthEllipseX}
  \shade[shading=ball,left color=\colorFaceDark]
     (-\widthEllipseX,0) -- ++(0,\locpz) -- ++(2*\widthEllipseX,0) -- ++(0,-\locpz) -- cycle ;
  \fill[\colorHole]
     (\widthEllipseX,0) [out=130,in=0] to (0,\locph)
     [out=180,in=50] to (-\widthEllipseX,0) -- cycle ;
  \draw[line width=\widthSepline,color=\colorSepline]
     (\widthEllipseX,0) [out=90,in=270] to +(0,\locpz)
     (-\widthEllipseX,0) [out=90,in=270] to (-\widthEllipseX,\locpz) ;
  \draw[line width=\widthBdy,\colorBdy]
     (\widthEllipseX,0) [out=130,in=0] to (0,\locph)
     (-\widthEllipseX,0) [out=50,in=180] to (0,\locph) ;
  \Bulkfield 0 {\locpz} 0
  \Bdyfield 0 0 0
  }
\def\CardyPipeFilledBdy { 
\def\widthBdyfield{2*\widthEllipseX}
  \shade[shading=ball,left color=\colorFaceDark]
     (-\widthEllipseX,0) -- ++(0,\locpz) -- ++(2*\widthEllipseX,0) -- ++(0,-\locpz) -- cycle ;
  \fill[\colorFace]
     (\widthEllipseX,0) [out=130,in=0] to (0,\locph)
     [out=180,in=50] to (-\widthEllipseX,0) -- cycle ;
  \draw[line width=\widthSepline,color=\colorSepline]
     (\widthEllipseX,0) [out=90,in=270] to +(0,\locpz)
     (-\widthEllipseX,0) [out=90,in=270] to (-\widthEllipseX,\locpz) ;
  \draw[line width=\widthBdy,\colorBdy]
     (\widthEllipseX,0) [out=130,in=0] to (0,\locph)
     (-\widthEllipseX,0) [out=50,in=180] to (0,\locph) ;
  \Bulkfield 0 {\locpz} 0
  \Bdyfield 0 0 0
  }
\newcommand\CardyPipeFilledBulkLabeled[2] { 
\def\widthBdyfield{2*\widthEllipseX}
  \def\BBx{#2}
  \Bulkfield 0 {\locpz} 0
  \shade[shading=ball,left color=\colorFaceDark]
     (-\widthEllipseX,0) -- ++(0,\locpz) arc (180:0:\widthEllipseX cm and \widthEllipseY cm)
     -- ++(0,-\locpz) -- cycle ;
  \fill[\colorHole]
     (\widthEllipseX,0) [out=130,in=0] to (0,\locph)
     [out=180,in=50] to (-\widthEllipseX,0) -- cycle ;
  \draw[line width=\widthSepline,color=\colorSepline]
     (\widthEllipseX,0) [out=90,in=270] to +(0,\locpz)
     (-\widthEllipseX,0) [out=90,in=270] to (-\widthEllipseX,\locpz) ;
  \draw[line width=\widthBdy,\colorBdy]
     (\widthEllipseX,0) [out=130,in=0] to (0,\locph) node [below=-1pt] {$#1$}
     (-\widthEllipseX,0) [out=50,in=180] to (0,\locph) ;
  \BdyfieldLabeled 0 0 0 {#1} {#1}
  }
\newcommand\LabelBdy[4]{%
     \node[rotate=#3,\colorBdy] at (#1,#2) {$#4$} ;}
\newcommand\LabelBdyField[5]{%
     \node[rotate=#3,\colorBulkfieldDark] at (#1,#2) {$\BB^{#4,#5}$} ;}
\def\Dinatmor  {
     \draw[line width=0.7*\widthBdy,draw=\colorBdyfieldB,fill=\colorBdyLite]
     (\widthDinatmor,0) [out=130,in=0] to (0,1.5*\widthDinatmor) [out=180,in=50]
     to (-\widthDinatmor,0) -- cycle ;}
\newcommand\Cite[2] {\cite[#1]{#2}}
\def\act           {\,{.}\,}
\def\Act           {{.}}
\def\Amod          {\text{mod-}A}
\def\BB            {{\mathbb B}}
\def\be            {\begin{equation}}
\def\bearl         {\begin{array}{l}}
\def\bearll        {\begin{array}{ll}}
\def\boti          {\,{\boxtimes}\,}
\def\bs            {\raisebox{3pt}{$\chi$}}
\def\C             {{\ensuremath\calc}}
\def\cala          {{\mathcal A}}
\def\calc          {{\mathcal C}}
\def\calcm         {\calc^{\star_{\phantom l}}_\Calm}
\def\calm          {{\mathcal M}}
\def\Calm          {{\!\mathcal M}}
\def\caln          {{\mathcal N}}
\def\calz          {{\mathcal Z}}
\def\cb            {\sigma}     
\def\cB            {\gamma}     
\def\Cb            {{\ensuremath{\mathcal C^\text{rev}_{\phantom|}}}}
\def\CbC           {{\ensuremath{\mathcal C^\text{rev}_{\phantom|}\boti\mathcal C}}}
\def\cir           {\,{\circ}\,}
\def\Colon         {:\quad}
\def\complex       {{\mathbbm C}}
\def\DD            {{\mathbb D}}
\def\dsty          {\displaystyle }
\def\ee            {\end{equation}}
\def\eear          {\end{array}}
\def\Enumerate     {\def\leftmargini{1.34em}~\\[-1.42em]\begin{enumerate}}
\def\eq            {\,{=}\,}
\def\ev            {{\mathrm{ev}\!}}   
\def\FF            {{\mathbb F}}
\def\Fun           {{\mathcal Fun}}
\def\Funre         {{\mathcal Rex}}
\def\GC            {{\ensuremath{\Xi_\C}}}
\def\Hom           {\mathrm{Hom}}
\def\HomC          {\ensuremath{\Hom_\calc}}
\def\HomM          {\ensuremath{\Hom_\calm}}
\def\id            {{\mathrm{id}}}
\def\Id            {{\mathrm{Id}}}
\def\iDelta        {\underline\Delta}
\def\iHom          {\underline{\Hom}}
\def\iHomM         {\underline{\Hom}_\calm}
\def\iev           {\underline{\mathrm{ev}}}
\def\ievF          {\underline{\mathrm{ev}}^\FF}
\def\imu           {\underline\mu}
\def\iN            {\,{\in}\,}
\def\Itemize       {\def\leftmargini{1.05em}~\\[-1.42em]\begin{itemize}}
\def\Itemizeiii    {\def\leftmargini{2.14em}~\\[-1.52em]\begin{enumerate}
                   \addtolength\itemsep{-6pt}}
\def\ko            {{\ensuremath{\Bbbk}}}
\def\M             {{\ensuremath\calm}}
\def\N             {{\ensuremath\caln}}
\def\Nat           {\mathrm{Nat}}
\newcommand\nxl[1] {\\[#1mm]}
\newcommand\Nxl[1] {\\[-1.3em]\\[#1mm]}
\def\Ol            {}  
\def\one           {{\bf1}}
\def\ootimes       {{\overline\otimes}}
\def\oti           {\,{\otimes}\,}
\def\ra            {^{\rm r.a.}}
\newcommand\rarr[1]{\xrightarrow{~#1~}}
\def\relNat        {\underline{\Nat}}
\def\Times         {\,{\times}\,}
\def\To            {\,{\to}\,}
\def\tr            {\mathrm{tr}}
\def\UC            {\mathrm U_\C}
\def\Vee           {{}^{\vee\!}}
\def\ZC            {{\ensuremath{\calz(\calc)}}}
\def\ZCM           {\calz(\calcm)}
\theoremstyle{plain}
\newtheorem*{Cor}{Corollary}
\theoremstyle{definition}
\newtheorem*{Defi}{Definition}
\newtheorem*{Prop}{Proposition}
\newtheorem*{Rem}{Remark}
\newtheorem*{Proposal}{Proposal}
\newtheorem{Assumption}{Assumption}
\newtheorem*{Problem}{Problem}
\begin{document}
 \numberwithin{equation}{section}

\begin{flushright}
   {\sf ZMP-HH/20-24}\\
   {\sf Hamburger$\;$Beitr\"age$\;$zur$\;$Mathematik$\;$Nr.$\;$878}\\[2mm] December 2020
\end{flushright}

\vskip 2.0em

\begin{center}
{\bf \Large Bulk from boundary in finite CFT}
\\[7pt]
{\bf \Large by means of pivotal module categories}

\vskip 18mm

{\large \  \ J\"urgen Fuchs\,$^{\,a} \quad$ and $\quad$ Christoph Schweigert\,$^{\,b}$ }

\vskip 12mm

 \it$^a$
 Teoretisk fysik, \ Karlstads Universitet\\
 Universitetsgatan 21, \ S\,--\,651\,88\, Karlstad \\[9pt]
 \it$^b$
 Fachbereich Mathematik, \ Universit\"at Hamburg\\
 Bereich Algebra und Zahlentheorie\\
 Bundesstra\ss e 55, \ D\,--\,20\,146\, Hamburg

\end{center}

\vskip 3.2em

\noindent{\sc Abstract}\\[3pt]
We present explicit mathematical structures that allow for the reconstruction of the 
field content of a full local conformal field theory from its boundary fields. Our 
framework is the one of modular tensor categories, without requiring semisimplicity, 
and thus covers in particular finite rigid logarithmic conformal field theories. We
assume that the boundary data are described by a pivotal module category over the
modular tensor category, which ensures that the algebras of boundary fields are
Frobenius algebras. Bulk fields and, more generally, defect fields inserted on defect
lines, are given by internal natural transformations between the functors that label
the types of defect lines. We use the theory of internal natural transformations to identify
candidates for operator products of defect fields (of which there are two types, either
along a single defect line, or accompanied by the fusion of two defect lines), and
for bulk-boundary OPEs. We show that the so obtained OPEs pass various consistency 
conditions, including in particular all genus-zero constraints in Lewellen's list.

  \newpage \tableofcontents \newpage


\section{Introduction}

Motivated by their fundamental importance in areas like condensed matter physics, 
statistical mechanics and string theory, two-dimensional conformal field theories --
CFTs, for short -- have been under intense scrutiny for several decades. The particular
class of rational conformal field theories, i.e.\ models for which the representations
of the chiral symmetry algebra form a semisimple modular tensor category, is now very
well understood. On the other hand, applications like the theory of critical polymers,
percolation, sandpile models and various critical disordered systems, rely on CFTs whose
chiral data (that is, fusion rules, fusing and braiding matrices, and fractional parts
of conformal weights) are encoded in a \emph{non-semisimple} tensor
category. Owing to the appearance of logarithmic branch cuts in their conformal blocks,
such chiral conformal field theories are often called \emph{logarithmic} conformal
field theories. Provided that suitable finiteness conditions are met, the tensor category
of chiral data of such CFTs is still modular, albeit non-semisimple; this is e.g.\ the 
case for the $c\,{=}\,{-}2$ CFT used in the study of critical dense polymers
\cite{dupl2,reSa4}. In this paper we restrict our attention to such models, which
still goes far beyond the rational case. Adopting the terminology of \cite{fgsS},
we refer to this class of CFTs as \emph{finite} conformal field theories.

One and the same \emph{chiral} conformal field theory can yield several different 
\emph{full} local conformal field theories. Initially, the quest for classifying the full 
CFTs that share the same chiral rational CFT concentrated on the search for modular invariants, 
i.e.\ modular invariant non-ne\-ga\-tive integral combinations of chiral characters 
with unique vacuum, corresponding to obtaining bulk fields by different ways of ``combining 
left- and right-moving degrees of freedom''. However, it was eventually recognized that
the problem of classifying modular invariants has many spurious unphysical solutions 
which cannot realize the torus partition function of a consistent full CFT (see e.g.\
\cite{fusS,gann17,soSc,davy27}). It is now known \cite{fuRs4,fjfrs,fjfrs2} that 
in the case of rational CFTs, the appropriate datum that is needed to specify a 
full conformal field theory with chiral data given by a (semisimple) modular tensor 
category \C\ is an indecomposable semisimple module category \M\ over \C.  
In the present paper, we provide evidence that, similarly, within the more general
framework of finite conformal field theories an appropriate datum is an indecomposable
\emph{pivotal module category} \M\ over the (generically non-semisimple) modular tensor
category \C\ of chiral data. This is the first result of our paper.

We arrive at this evidence by making a concrete proposal for the field content of the 
full CFT. This includes boundary fields and bulk fields, but in our context
it is most natural to admit world sheets with topological defect lines and consider also 
general defect fields which can change the type of defect line. Bulk fields can be
understood as particular defect fields, namely those which preserve the transparent 
defect line. Defect fields play an important role in applications, e.g.\ disorder fields
(defect fields on which a defect line starts or ends, meaning that it is changed to a 
transparent defect line) naturally appear as partners of bulk fields in Kramers-Wannier 
dualities. Moreover, they shed much light on the genuine mathematical structure of the theory.
The proposal for the boundary fields and defect fields is the second result of this paper. 
We furthermore show that our proposal reproduces the known field content for the case that
the category of chiral data is semisimple, and that it satisfies the genus-zero 
bulk-boundary sewing constraints. We also briefly discuss the resulting boundary states.
	     
Our final goal is to ensure, for the proposed field content, the existence of a consistent
set of correlation functions, and thereby complete the construction of a full local conformal
field theory from a given chiral theory. Several techniques for achieving this goal are 
available in the literature: using the relation with three-dimensional topological field 
theories \cite{fuRs4}, string nets \cite{scYa,traub}, or Lego-Teichm\"uller games \cite{fuSc22}.
The first two of these constructions have so far been sufficiently developed only for 
rational CFTs. Accordingly we work in the context of Lego-Teichm\"uller games, in which 
the correlators are expressed in terms of basic building blocks (generators) and
consistency conditions (relations) among them. In the physics literature, a traditional 
way of formulating the building blocks is in terms of \emph{operator product expansions}
(OPEs). For the case of bulk and boundary fields this has been done in \cite{lewe3,prss3}.
The formulation of \cite{lewe3} has to be adapted in order to account also for defects and
defect fields \cite{fuRs10,fjfs}, and to be refined \cite{kolR} in order to implement
a concise notion of world sheet, including in particular the proper distinction between 
incoming and outgoing field insertions.
  
For our present purposes, for simplicity we stick with the elementary formulation
of \cite{lewe3}. This involves three building blocks: the bulk OPE, the boundary OPE,
and the bulk-boundary OPE, corresponding to the correlator of three bulk fields on a
sphere, of three boundary fields on a disk, and of one bulk and one boundary field on
a disk, respectively. (In the precise setting of \cite{kolR}, each of these comes in two 
variants related by the exchange of incoming and outgoing fields and there are six
further building blocks with a smaller number of field insertions \Cite{Prop.\,2.6}{kolR}.)
We will use the following pictorial description of the three building blocks
(compare Figure 1 in \cite{lewe3}):
    \def\locph  {0.62}   
    \def\locpx  {1.6}    
    \def\locpX  {1.8}    
    \def\locpy  {2.5}    
    \def\locpY  {2.8}    
    \def\locpz  {2.85}   
  \be
  \text{bulk:} \hspace*{-1.3em}
  \raisebox{-4.1em} {\begin{tikzpicture} \begin{scope}[rotate=180] 
       \ThreeBulkFieldsOnSphere \end{scope} \end{tikzpicture}}
  \hspace*{0.1em} \text{boundary:} \hspace*{-1.2em}
  \raisebox{-4.1em} {\begin{tikzpicture} \begin{scope}[rotate=180]
       \ThreeBdyFieldsOnDisk \end{scope} \end{tikzpicture}}
  \hspace*{0.1em} \text{bulk-boundary:~~}
  \raisebox{-4.1em} {\begin{tikzpicture} \begin{scope}[rotate=180]
       \CardyPipe \end{scope} \end{tikzpicture}}
  \label{eq:3opes}
  \ee
Here the circles and straight intervals which are part of the boundary of the world sheet
(also called \emph{gluing boundaries}) stand for the insertion of bulk and boundary fields,
respectively, while the remaining segments of the boundary of the disk (which are drawn in
a different color) are \emph{physical boundaries} on which a boundary condition has to be 
specified. Thus denoting the space of bulk fields by $\FF$, using labels $m$, $n$ etc.\
for the possible boundary conditions, and denoting the space of boundary fields
that change the boundary condition from $m$ to $n$ by $\BB^{n,m}$, 
a more detailed graphical description of the boundary and bulk-boundary operator products is
  \be
  \text{boundary:} \hspace*{-0.8em}
  \raisebox{-5.3em}
  {\begin{tikzpicture} \begin{scope}[rotate=180]
  \ThreeBdyFieldsOnDiskTopLabeled {m}{m''}{m'}{\BB} \end{scope} \end{tikzpicture}}
  \hspace*{1.4em} \text{bulk-boundary:~~~~}
  \raisebox{-4.0em} {\begin{tikzpicture} \begin{scope}[rotate=180]
       \CardyPipeTopLabeled m{\BB}{\FF} \end{scope} \end{tikzpicture}}
  \label{eq:2opes}
  \ee

Based on results of \cite{fuSc25}, our proposal for the field content -- which includes
also defect fields -- leads very naturally to a proposal for the OPEs \eqref{eq:3opes}
as well as for the  two types of OPEs of defect fields. 
This is the third result of the present paper. We furthermore show that the
OPEs we propose satisfy all genus-0 constraints which the building blocks must satisfy,
namely crossing symmetries of the following correlators: four bulk fields on a sphere,
four boundary fields on a disk, one bulk and two boundary fields on a disk, and
one boundary and two bulk fields on a disk. In the pictorial description \eqref{eq:3opes}
these constraints look as follows:
 \\[3pt]
(1) Crossing symmetry for the correlator of four bulk fields on the sphere:
    \def\locpx  {1.7}    
    \def\locpy  {1.4}    
    \def\locpX  {0.18*\locpx}   
    \def\locpY  {0.81*\locpy}   
    \def\locXshift {0.04*\locpx}
  \be
  \raisebox{-4.4em} {\begin{tikzpicture}
  \FourBulkFieldsOnSphere
  \draw[\colorHoleLite,line width=2*\widthFieldline]
       (0,-\locpY) arc (-90:90:\locpX cm and \locpY cm) ; 
  \draw[\colorHoleDark,line width=2*\widthFieldline,dashed]
       (0,\locpY) arc (90:270:\locpX cm and \locpY cm) ; 
  \draw[\colorBulkfield,line width=\widthFieldline]
       (-\locXshift,-\locpY) arc (-90:90:\locpX cm and \locpY cm)
       (\locXshift,-\locpY) arc (-90:90:\locpX cm and \locpY cm) ; 
  \draw[\colorBulkfield,line width=0.8*\widthFieldline,dashed]
       (-\locXshift,\locpY) arc (90:270:\locpX cm and \locpY cm)
       (\locXshift,\locpY) arc (90:270:\locpX cm and \locpY cm) ; 
  \end{tikzpicture}
  }
  ~~=~~~
    \def\locpX  {0.93*\locpx}   
    \def\locpY  {0.24*\locpy}   
    \def\locYshift {0.04*\locpx}
  \raisebox{-4.4em} {\begin{tikzpicture}
  \FourBulkFieldsOnSphere
  \draw[\colorHoleLite,line width=2*\widthFieldline]
       (\locpX,0) arc (0:180:\locpX cm and \locpY cm) ;
  \draw[\colorHoleDark,line width=2*\widthFieldline,dashed]
       (-\locpX,0) arc (180:360:\locpX cm and \locpY cm) ;
  \draw[\colorBulkfield,line width=\widthFieldline]
       (\locpX,-\locYshift) arc (0:180:\locpX cm and \locpY cm)
       (\locpX,\locYshift) arc (0:180:\locpX cm and \locpY cm) ;
  \draw[\colorBulkfield,line width=0.8*\widthFieldline,dashed]
       (-\locpX,-\locYshift) arc (180:360:\locpX cm and \locpY cm)
       (-\locpX,\locYshift) arc (180:360:\locpX cm and \locpY cm) ;
  \end{tikzpicture}
  }
  ~~=~~~
    \def\locpX  {1.21*\locpx}   %
    \def\locpxl {0.48*\locpx}   
    \def\locpXl {0.77*\locpx}   
    \def\locpxm {0.46*\locpx}   
    \def\locpXm {0.73*\locpx}   
    \def\locpxs {0.44*\locpx}   
    \def\locpXs {0.70*\locpx}   
    \def\locpY  {0.18*\locpy}   %
    \def\locpyl {0.20*\locpy}   
    \def\locpYl {0.30*\locpy}   
    \def\locpym {0.20*\locpy}   
    \def\locpYm {0.265*\locpy}  
    \def\locpys {0.20*\locpy}   
    \def\locpYs {0.23*\locpy}   
    \def\locpcx {0.78*\locpx}   
    \def\locpcy {0.75*\locpy}   
    \def\locpdx {0.27*\locpx}   
    \def\locpdy {0.23*\locpy}   
    \def\locps  {1.10}          
  \raisebox{-4.4em} {\begin{tikzpicture}
  \FourBulkFieldsOnSphere
  \begin{scope}[rotate=-29]
  \draw[\colorHoleLite,line width=1.5*\widthFieldline,dashed]
       (0.272*\locpx-\locpxm,0.845*\locpy) arc (-180:0:\locpxm cm and \locpym cm)
       (-0.272*\locpx-\locpxm,-0.845*\locpy) arc (180:0:\locpxm cm and \locpym cm) ;
  \draw[\colorBulkfield,line width=0.7*\widthFieldline,dashed]
       (0.285*\locpx-\locpxs,0.88*\locpy) arc (-180:0:\locpxs cm and \locpys cm)
       (0.26*\locpx-\locpxl,0.81*\locpy) arc (-180:0:\locpxl cm and \locpyl cm)
       (-0.285*\locpx-\locpxs,-0.88*\locpy) arc (180:0:\locpxs cm and \locpys cm)
       (-0.26*\locpx-\locpxl,-0.81*\locpy) arc (180:0:\locpxl cm and \locpyl cm) ;
  \end{scope}
  \begin{scope}[rotate=40]
  \draw[\colorHoleDark,line width=1.5*\widthFieldline]
       (-0.18*\locpx-\locpXm,0.525*\locpy) arc (-180:0:\locpXm cm and \locpYm cm)
       (0.18*\locpx-\locpXm,-0.525*\locpy) arc (180:0:\locpXm cm and \locpYm cm) ;
  \draw[\colorBulkfield,line width=0.9*\widthFieldline]
       (-0.18*\locpx-\locpXs,0.52*\locpy) arc (-180:0:\locpXs cm and \locpYs cm)
       (-0.18*\locpx-\locpXl,0.52*\locpy) arc (-180:0:\locpXl cm and \locpYl cm)
       (0.18*\locpx-\locpXs,-0.53*\locpy) arc (180:0:\locpXs cm and \locpYs cm)
       (0.18*\locpx-\locpXl,-0.52*\locpy) arc (180:0:\locpXl cm and \locpYl cm) ;
  \end{scope}
  \end{tikzpicture}
  }
  \label{eq:relation9a}
  \ee
(2) Crossing symmetry for the correlator of four boundary fields on the disk:
    \def\locpX  {1.54}   
    \def\locpY  {2.14}   
    \def\locpXl {2.36}   
    \def\locpXm {2.01}   
    \def\locpYl {1.66}   
    \def\locpYm {0.75}   
    \def\widthBdyfield {0.8*\StdwidthBdyfield}
    \def\locXshift {0.05*\locpY}
    \def\locYshift {0.05*\locpY}
  \be
  \raisebox{-5.3em} {\begin{tikzpicture}
  \fill[\colorFaceLite]
       (-\locXshift,-0.8*\widthBdyfield) rectangle (2*\locXshift,0.8*\widthBdyfield) ;
  \begin{scope}[rotate=90] \TwoplusonebroadBdyFieldsOnDisk \end{scope}
  \begin{scope}[shift={(\locXshift,0)}]
       \begin{scope}[rotate=-90] \TwoplusonebroadBdyFieldsOnDisk
  \end{scope} \end{scope}
  \LabelBdy {0} {\locpYm}    {0} {m_1} 
  \LabelBdy {-\locpXm} {0}  {90} {m_2} 
  \LabelBdy {0} {-\locpYm}   {0} {m_3} 
  \LabelBdyField {-\locpXl} {\locpYl}   {50} {m_1} {m_2} 
  \LabelBdyField {-\locpXl} {-\locpYl} {310} {m_2} {m_3} 
  \begin{scope}[shift={(\locXshift,0)}]
  \LabelBdy {\locpXm} {0}  {270} {m_4} 
  \LabelBdyField {\locpXl}  {-\locpYl}  {50} {m_3} {m_4} 
  \LabelBdyField {\locpXl}  {\locpYl}  {310} {m_4} {m_1} 
  \end{scope}
  \end{tikzpicture}
  }
  \qquad = \quad
    \def\locpX  {1.66}   
    \def\locpY  {2.03}   
    \def\locpXl {1.80}   
    \def\locpXm {0.77}   
    \def\locpYl {2.22}   
    \def\locpYm {1.84}   
  \raisebox{-6.3em} {\begin{tikzpicture} 
  \fill[\colorFaceLite]
       (-0.8*\widthBdyfield,-\locYshift) rectangle (0.8*\widthBdyfield,2*\locYshift) ;
  \TwoplusonebroadBdyFieldsOnDisk
  \begin{scope}[shift={(0,-\locYshift)}]
       \begin{scope}[rotate=180] \TwoplusonebroadBdyFieldsOnDisk
  \end{scope} \end{scope}
  \LabelBdy {0} {\locpYm}    {0} {m_1} 
  \LabelBdy {-\locpXm} {0}  {90} {m_2} 
  \LabelBdy {\locpXm} {0}  {270} {m_4} 
  \LabelBdyField {-\locpXl} {\locpYl}   {40} {m_1} {m_2} 
  \LabelBdyField {\locpXl}  {\locpYl}  {320} {m_4} {m_1} 
  \begin{scope}[shift={(0,-\locYshift)}]
  \LabelBdy {0} {-\locpYm}   {0} {m_3} 
  \LabelBdyField {-\locpXl} {-\locpYl} {320} {m_2} {m_3} 
  \LabelBdyField {\locpXl}  {-\locpYl}  {40} {m_3} {m_4} 
  \end{scope}
  \end{tikzpicture}
  }
  \label{eq:relation9c}
  \ee
~\\ 
(3) Compatibility of moving a bulk field to different segments of
the boundary of a disk with two boundary field insertions:
    \def\locph  {0.82}   
    \def\locpX  {2.36}   
    \def\locpXB {0.26}   
    \def\locpY  {1.77}   
    \def\locpYm {0.22}   
    \def\locpz  {1.65}   
    \def\widthEllipseX {\StdwidthEllipseX}
    \def\widthEllipseY {\StdwidthEllipseY}
    \def\widthBdyfield {2.12*\widthEllipseX}
    \def\locYshift {0.05*\locpY}
  \be
  \raisebox{-6.8em} {\begin{tikzpicture}
  \fill[\colorFaceLite]
       (-\locYshift,-0.5*\widthBdyfield) rectangle (2*\locYshift,0.5*\widthBdyfield) ;
  \begin{scope}[rotate=-90]\TwoplusoneBdyFieldsOnDisk \end{scope}
  \begin{scope}[shift={(-\locYshift,0)}] \begin{scope}[rotate=90]
      \CardyPipeFilledBdy
  \end{scope} \end{scope}
  \LabelBdy {1.03*\locpY} {0}  {-90} {m_1} 
  \LabelBdy {-0.25*\locpY} {-0.07*\locpX} {-19} {m_2} 
  \LabelBdy {0.41*\locpY} {-0.55*\locpX} {-61} {m_2} 
  \LabelBdy {0.42*\locpY} {0.53*\locpX} {64} {m_2} 
  \LabelBdyField {\locpY+0.02*\widthBdyfield} {-\locpX-\locpXB} {0} {m_2} {m_1} 
  \LabelBdyField {\locpY+0.03*\widthBdyfield} {\locpX+\locpXB}  {0} {m_1} {m_2}
  \end{tikzpicture}
  }
  \qquad = \qquad
  \raisebox{-6.8em} {\begin{tikzpicture}
  \fill[\colorFaceLite]
       (-\locYshift,-0.5*\widthBdyfield) rectangle (2*\locYshift,0.5*\widthBdyfield) ;
  \begin{scope}[rotate=90] \TwoplusoneBdyFieldsOnDisk \end{scope}
  \begin{scope}[shift={(\locYshift,0)}] \begin{scope}[rotate=-90]
       \CardyPipeFilledBdy
  \end{scope} \end{scope}
  \LabelBdy {-\locpz-0.85*\locpYm} {0} {90} {m_2} 
  \LabelBdy {0.28*\locpY} {0.11} {-19} {m_1} 
  \LabelBdy {-0.43*\locpY} {-0.55*\locpX} {65} {m_1} 
  \LabelBdy {-0.44*\locpY} {0.56*\locpX} {-60} {m_1} 
  \LabelBdyField {-\locpY-0.02*\widthBdyfield} {-\locpX-\locpXB} {0} {m_2} {m_1} 
  \LabelBdyField {-\locpY-0.03*\widthBdyfield} {\locpX+\locpXB}  {0} {m_1} {m_2}
  \end{tikzpicture}
  }
  \label{eq:relation9d}
  \ee
(4) Compatibility of the boundary OPE and bulk OPE:
    \def\locph  {0.62}   
    \def\locpx  {1.6}    
    \def\locpX  {1.44}   
    \def\locpy  {2.0}    
    \def\locpY  {2.24}   
    \def\locpz  {1.65}   
    \def\widthEllipseX {0.8*\StdwidthEllipseX}
    \def\widthEllipseY {0.8*\StdwidthEllipseY}
    \def\widthBdyfield {2*\widthEllipseX}
  \be
  \raisebox{-5.1em} {\begin{tikzpicture}
  \begin{scope}[shift={(.5*\locpX,-1.25*\locpY)}]
  \fill[rotate=40,\colorFaceLite] (0.5*\widthBdyfield,-0.3) rectangle ++(\widthBdyfield,0.5) ;
  \end{scope}
  \begin{scope}[shift={(-1.05*\locpX,-1.05*\locpY)}]
  \fill[rotate=-40,\colorFaceLite] (-0.5*\widthBdyfield,-0.3) rectangle ++(\widthBdyfield,0.5) ;
  \end{scope}
      \begin{scope}[rotate=180] \ThreeBdyFieldsOnDiskTopLabeled mmm {\BB} \end{scope}
  \begin{scope}[shift={(1.05*\locpX,-1.05*\locpY)}] \begin{scope}[rotate=220]
      \CardyPipeFilledBdy
  \end{scope} \end{scope}
  \begin{scope}[shift={(-1.05*\locpX,-1.05*\locpY)}] \begin{scope}[rotate=-220]
      \CardyPipeFilledBdy
  \end{scope} \end{scope}
  \end{tikzpicture}
  }
  \quad = \quad
  \raisebox{-5.1em} {\begin{tikzpicture}
  \begin{scope}[rotate=180] \ThreeBulkFieldsOnSphere \end{scope}
  \BulkfieldLite 0 {-0.01*\locpX} 0
  \Bulkfield 0 {0.09*\locpX} 0
  \begin{scope}[shift={(0,0.91*\locpy)}] \begin{scope}[rotate=-180]
	  \CardyPipeFilledBulkLabeled m {\BB}
  \end{scope} \end{scope}
  \end{tikzpicture}
  }
  \label{eq:relation9e}
  \ee
Besides these four genus-0 relations, there are two further relations at genus 1.
In the setting of \cite{kolR}, there is a total if 32 relations, see Section 2.4 
and Remark 3.4 of \cite{kolR}.

\medskip

The two genus-1 constraints, given by the items (b) and (f) in the list in Figure 9 
of \cite{lewe3}, ensure the compatibility of correlation functions on higher-genus surfaces:
Relation (b) which requires the modular invariance of a one-bulk field correlator on a torus
amounts to the statement that the object of bulk fields is a \emph{modular} Frobenius algebra
in the sense of \Cite{Sect.\,3.1}{koRu2} and \Cite{Def.\,4.9}{fuSc22}, while relation (f) 
-- the so-called Cardy condition for a two-boundary field correlator on an annulus -- describes 
the compatibility of handle-generating sewings that involve bulk and boundary fields, 
respectively. As is generally true for higher-genus issues, these relations are 
considerably more subtle than the genus-0 constraints.  We do expect that they can 
be derived from our proposal as well, but have to leave their discussion to future work.

\medskip

This paper is organized as follows. We start in Section \ref{sec:ass-chiral} with 
presenting the requirements we impose on the underlying chiral conformal field theory
(Assumption 1). Given this assumption, we can work with finite categories or, more 
specifically, with finite tensor categories and finite module categories over them. In 
Section \ref{sec:bc+bf} we then explain that an indecomposable pivotal module category
\M\ provides a consistent boundary theory (Assumption 2), and how boundary fields and
their OPE are expressed in terms of \M\ (Assumption 3). The remaining steps may be 
summarized as the statement that we then construct the bulk theory, including defect fields,
from the boundary theory. For doing so various results of \cite{fuSc25} are crucial. We first
expound, in Section \ref{sec:dc+df}, that defect conditions should be interpreted as 
right exact module functors (Assumption 4). Section \ref{sec:theproblem} is devoted 
to a precise statement of the problem of reconstructing the bulk from the boundary.
After an overview of pertinent mathematical structures and results in Section \ref{sec:iNat}, 
we are then ready to state, in Section \ref{sec:theproposal}, our proposal. In the
remainder of Section \ref{sec:3} we perform several consistency checks which
corroborate the validity of our proposal. In the final Section \ref{sec:out}
we conclude with an outlook on open issues and future directions of research.


\section{Field content and operator products in full CFT}

In this section we carefully formulate all requirements that will be assumed 
in our proposal. As already pointed out, these assumptions are satisfied for a large 
class of models, including in particular all rational CFTs as well as many
logarithmic CFTs. In passing, we also provide various pertinent background information.

\subsection{Assumptions on the chiral data} \label{sec:ass-chiral}

We first state our assumptions about the chiral data of the class
of conformal field theories for which we formulate our proposal.

\begin{Assumption} \label{ass1}
The chiral data of a chiral conformal field theory are given by a not necessarily semisimple
modular tensor category \C.
\end{Assumption}

The notion of a modular tensor category arises as an abstraction of the structure and 
properties of the representations of the chiral symmetry algebra of the CFT (concretely, 
a vertex operator algebra with appropriate properties, including in particular 
$C_2$-cofiniteness). We do not fully unravel its definition,
referring to Section 2.1 of \cite{fuSc25} for further pertinent mathematical 
details. Instead we just highlight those aspects that are most relevant to our proposal. 
First of all, a modular tensor category is \emph{linear} over some ground field \ko, 
in particular the morphism sets are \ko-vector spaces. In the CFT context, \ko\ is given
by the complex numbers $\complex$. It is also worth mentioning that in the semisimple case 
the 6j-symbols (or fusing matrices, in CFT terminology \cite{mose3}) are already encoded 
in the monoidal structure, namely in the associativity constraint for the tensor product.
Next we recall that a modular tensor category $\calc$ is in particular a finite 
ribbon category. The ribbon structure comprises a \emph{braiding}, i.e.\ a family of
isomorphisms $\cb_{c,c'}\colon c\oti c'\To c' \oti c$ that is natural in both arguments
$c,c'\iN\calc$ and obeys the two standard hexagon identities. The 
structure of a braiding accounts for the fundamental fact that chiral conformal 
field theories realize braid group statistics. Examples of braided tensor categories
are given by the Drinfeld center $\calz(\cala)$ of any monoidal category $\cala$. An
object of $\calz(\cala)$ is a pair $(a,\cB)$, consisting of an object $a \iN \cala$ 
and a half-braiding. (A half-braiding  for an object $a_0\iN\cala$ is a natural family
$\cB \eq (\cB_a)_{a\in \cala}$ of morphisms $\cB_a \colon a \oti a_0\To a_0 \oti a$ 
obeying a single hexagon identity.) Besides the braiding there are two other ingredients of a
ribbon structure: first, a \emph{ribbon twist}, i.e.\ a natural family $\theta_c\colon c\To c$
of endomorphisms, for any $c\iN\calc$, which keeps track of the exponentials of the conformal 
weights; and second, a \emph{rigidity structure}, i.e.\ for any $c\iN\calc$ an assignment of
a left dual object $\Vee c$ and a right dual object $c^\vee$ together with corresponding
evaluation and coevaluation morphisms.

The braiding in a modular tensor category is non-degenerate. In a finitely semisimple 
modular tensor category, this property amounts to invertibility 
of the modular S-matrix, which in the context of three-dimensional topological field theory
describes the invariants associated to the Hopf link in the three-manifold
${\mathbb S}^3$, colored by simple objects of \C. In the present paper we do \emph{not}
impose semisimplicity and accordingly use a different non-degeneracy condition on the 
braiding.\,%
 \footnote{~Several other non-degeneracy conditions on a braiding have been enunciated.
 It has been shown \cite{shimi10} that for braided finite tensor categories
 all those non-degeneracy conditions are equivalent.}
To formulate the latter, denote by $\Cb$ the \emph{reverse} of a braided category, i.e.\
the same category, but with inverse braiding $\cb^\text{rev}_{c,c'} \,{:=}\, \cb_{c',c}^{-1}
\colon c\oti c'\To c' \oti c$. There is a canonical braided functor
  \be
  \GC: \quad \Cb \boti \C \xrightarrow~ \ZC
  \label{eq:CbC->ZC}
  \ee
from the \emph{enveloping category} of \C, i.e.\ the Deligne product of \Cb\ with \C,
to the Drinfeld center of \C. As a functor, \GC\ maps an object 
$\Ol u \boti v \iN \Cb \boti \C$ to the tensor product $\Ol u \oti v \iN \C$ endowed 
with the half-braiding $\cB_{\Ol u\otimes v}$ whose components are $\cB_{\Ol u\otimes v;c}
\,{:=}\, ( \id_{\Ol u} \oti \cb_{c,v}^{} ) \cir (\cb_{\Ol u,c}^{-1} \oti \id_v )$ for $c \iN \C$,
with $\cb$ the braiding in \C. Note that this implies in particular that the composition of \GC\
with the \emph{forgetful functor} $\UC\colon \ZC \,{\to}\, \C$ -- the functor that ignores the
half-braiding, i.e.\ acts as $\UC(c,\cB) \eq c$, is nothing but the tensor product in \C,
  \be
  \begin{array}{rl}
  \UC \circ \GC : \quad \CbC \!\!\! & \rarr{} \, \C \,,
  \Nxl2
  c \boti c' \!\!\! & \xmapsto{{~~}} \, c \oti c' .
  \eear
  \label{eq:UC.GC}
  \ee

\begin{Defi}
A \emph{modular tensor category}
is a finite ribbon category such that the braided mo\-no\-idal functor $\GC$ is an equivalence.
\end{Defi}

Bulk fields (and, more generally, defect fields) are obtained by combining left and right 
movers or, put differently, carry two commuting representations of the chiral algebra. They
are thus naturally objects in the enveloping category \CbC. In the sequel it will be
important that by using the equivalence \eqref{eq:CbC->ZC} we can alternatively study bulk 
fields as objects in the Drin\-feld center \ZC\ of the monoidal category that encodes
the chiral data.

Being a finite abelian category, a modular category \C\ has various finiteness properties: 
the number of isomorphism classes of simple objects is finite, all morphism spaces are
finite-di\-men\-si\-onal, and all objects have finite length (compare \Cite{Ch.\,1.5}{EGno}).
These requirements can be summarized as the statement that, as a \ko-linear abelian 
category, \C\ is equivalent to the category of finite-dimensional modules over a
finite-dimensional \ko-algebra. 
Being a finite \emph{tensor} category, \C\ is, quite importantly, in addition 
rigid, so that in particular the tensor product functor is exact in both variables.

In a modular tensor category, the double dual is trivialized. This is formalized
by the notion of a pivotal structure.

\begin{Defi}
A \emph{pivotal structure} on a right rigid monoidal category \C\ is a monoidal	
natural iso\-morphism $\pi\colon \Id_\C \,{\xrightarrow~}\, -^{\!\vee\vee}$ from the
identity functor to the double-dual functor.
\end{Defi}

A modular tensor category comes with a canonical pivotal structure. We tacitly regard it as
a pivotal category endowed with this pivotal structure and use it to identify an object 
with its double dual or, equivalently, the left and right duals of an object.

\medskip

Admittedly, Assumption \ref{ass1} excludes interesting types of chiral conformal 
field theories, like e.g.\ the uncompactified free boson, Liouville
theory, critical percolation, WZW models at fractional level, and ghost systems, to name
just a few popular ones. Let us stress, however, that we do not impose semisimplicity. As
a consequence, there is still a very large class of examples to which our arguments
apply. It includes on the one hand all semisimple modular categories (corresponding to
rational chiral CFTs), and precise criteria are known for a vertex algebra \cite{huan21}
or a net of observables \cite{kalM} to have a semisimple modular category as its
category of representations. On the other hand, screening charge constructions yield
many examples that are not semisimple \cite{galO}.
Also note that a central tool of our construction is given by the internal Homs.
These still exist when \C\ is no longer a finite tensor category (such as
for categories that are not rigid but still have a Grothendieck-Verdier duality);
this suggests that some of our structural insights can survive in more general situations
than those we consider here.


\subsection{Boundary conditions and boundary fields} \label{sec:bc+bf}

Recall that our final goal is to construct a full local conformal field theory 
from a given chiral theory. This requires in particular a description of bulk fields 
and, more generally, defect fields, in which left and right movers are combined. Already 
the basic example of unitary Virasoro minimal models shows that to this end additional 
data need to be specified. It has been clear since long that -- as witnessed by the 
existence of unphysical modular invariants which we mentioned in the Introduction -- the
additional datum in question cannot be a modular invariant. And indeed it is well understood 
\cite{fjfrs2} that in case the modular tensor category $\calc$ is semisimple, 
the required datum is an equivalence class of special symmetric Frobenius algebras 
internal to the category $\calc$. Any such algebra $A$ plays the role of an 
\emph{algebra of boundary fields}, while its modules are the possible boundary conditions. 
The category of $A$-modules has the structure of a semisimple module category \M\ over \C,
i.e.\ there is an exact \emph{action functor} $\C \Times \M \To \M$, together with a mixed 
associator and a mixed unitor that obey mixed pentagon and triangle relations. 
Moreover, the algebra $A$ must be simple as a bimodule over itself, implying that
\M\ is an \emph{indecomposable} module category. (For pertinent information
on module categories see e.g.\ \Cite{Ch.\,7}{EGno} or \Cite{Sect.\,2.3}{shimi17}.)

We will need the following information about the relation between algebras and module 
categories. For an algebra $A\iN\calc$, the category $\Amod$ of \emph{right} $A$-modules
becomes a \emph{left} module category over \C\ by endowing the object $c\oti \dot m$,
for a right module $(\dot m,\rho)$ -- with $\dot m\iN\calc$ and right action 
$\rho\colon \dot m\oti A \To \dot m$ -- with the right action $\id_c\oti \rho$. To 
appreciate the converse relationship we need in addition
the notion of an internal Hom, which will play a central role in our arguments.

\begin{Defi}
Let $\calc$ be a monoidal category and $\calm$ be a left $\calc$-module category.
For any pair $m,m'\iN\calm$ of objects in the module category, the \emph{internal Hom}
$\iHom_\calm(m,m')$ is an object of $\calc$ together with a natural family
  \be
  \HomC(c,\iHomM(m,m')) \xrightarrow{~\cong~} \HomM(c\act m,m')
  \label{eq:def:iHom}
  \ee
of isomorphisms, for $c\iN\calc$. 
\end{Defi}

In full generality, internal Homs need not exist. In our framework their existence is,
however, guaranteed because $\iHom(-,-)$ is by definition right adjoint to the action 
functor (which is still required to be exact in its first variable), and any right exact 
functor on a finite category has a right adjoint. (Finiteness of \C\ is, however,
not a necessary condition; internal Homs exist in other classes of categories as well.) 
In case \C\ is semisimple, the internal Hom can be expressed as
  \be
  \iHom_\M(n,m) \cong m \otimes_A\! \Vee n \,,
  \label{eq:iHom-otA}
  \ee
i.e.\ as a tensor product over $A$, when \M\ is realized as the category of right modules
over a special symmetric Frobenius algebra $A$ in \C.

If the module category \M\ is clear from the context, we suppress it in the notation and
just write $\iHom(m,m')$. The following fact (see e.g.\ Chapter 7.9 of \cite{EGno}) plays a
crucial role for our proposal:

\begin{Prop}
Let $\calc$ be a monoidal category and $\calm$ be a left $\calc$-module category
for which internal Homs exist. 
 \\[2pt]
{\rm (i)}\,
For any pair $m,m'\iN\calm$ there is a canonical \emph{evaluation morphism}
  \be
  \iev_{m,m'}:\quad \iHom(m,m')\act m \xrightarrow{~~} m' 
  \ee
in \M.
 \\[2pt]
{\rm (ii)}\, 
For any triple $m,m',m''\iN\calm$ there is an associative \emph{multiplication morphism}
  \be
  \imu_{m,m',m''}^{}:\quad \iHom(m',m'')\otimes \iHom(m,m') \xrightarrow{\phantom{~~}} \iHom(m,m'')
  \label{eq:mult-iHom}
  \ee
in \C.
In particular, for any $m\iN\M$, the object $\iHom(m,m)$ is an associative 
(and actually also unital) algebra in $\calc$.
\end{Prop}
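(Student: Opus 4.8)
The plan is to construct both morphisms directly from the defining adjunction isomorphism \eqref{eq:def:iHom} and to verify their properties by unwinding naturality. I would work throughout with the natural isomorphism
\be
\Phi_{c,m,m'}: \quad \HomC(c,\iHom(m,m')) \xrightarrow{~\cong~} \HomM(c\act m,m'),
\ee
which is the content of the definition, and observe that both constructions are instances of the standard ``adjunction yields counit and composition'' pattern familiar from the theory of adjoint functors, transported into the module-category setting.

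For part (i), I would define the evaluation morphism as the image of the identity under $\Phi$: set $\iev_{m,m'} := \Phi_{\iHom(m,m'),m,m'}\big(\id_{\iHom(m,m')}\big)$, which is precisely a morphism $\iHom(m,m')\act m \To m'$ in \M. This is the module-categorical analogue of the counit of an adjunction, and it is automatically natural in both variables because $\Phi$ is. I would also record the universal property it encodes: for any $c\iN\calc$ and any $f\colon c\act m\To m'$, the unique morphism $\hat f\colon c\To\iHom(m,m')$ with $\iev_{m,m'}\cir(\hat f\act\id_m)\eq f$ is exactly $\Phi^{-1}(f)$. This reformulation is what makes the multiplication in part (ii) easy to define.

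For part (ii), I would define $\imu_{m,m',m''}$ as the adjunct of a composite evaluation. Concretely, consider the morphism in \M\ obtained by first acting with the evaluation at the inner pair and then at the outer pair, namely
\be
\big(\iHom(m',m'')\oti\iHom(m,m')\big)\act m \;\cong\; \iHom(m',m'')\act\big(\iHom(m,m')\act m\big) \xrightarrow{~\id\act\,\iev_{m,m'}~} \iHom(m',m'')\act m' \xrightarrow{~\iev_{m',m''}~} m'',
\ee
where the first isomorphism is the mixed associator of the module category. I would then set $\imu_{m,m',m''} := \Phi^{-1}$ applied to this composite, giving the desired morphism $\iHom(m',m'')\oti\iHom(m,m')\To\iHom(m,m'')$ in \C. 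Associativity of $\imu$, and in the case $m\eq m'\eq m''$ the fact that $\iHom(m,m)$ becomes a unital associative algebra (with unit the adjunct of the module unitor $\one\act m\To m$), would then follow by a diagram chase: both bracketings of the threefold product produce the same iterated evaluation $\iHom\act\iHom\act\iHom\act m\To m$, up to the coherence (pentagon) axioms for the mixed associator, and since $\Phi$ is a bijection the two adjuncts must coincide.

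I expect the only genuine obstacle to be the bookkeeping of the mixed associator and unitor of the module category in these chases — that is, checking that the coherence data of \M\ make the two ways of iterating evaluation literally equal rather than merely equal ``up to canonical isomorphism.'' This is routine but notationally heavy, and since the statement cites \Cite{Ch.\,7.9}{EGno} for the underlying fact, I would either invoke that reference for the coherence verification or suppress the associators using the strictness conventions standard for module categories, confining the argument to the two defining naturality squares for $\Phi$ together with the module pentagon.
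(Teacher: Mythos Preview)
Your proposal is correct and matches the paper's approach exactly: the paper does not give a formal proof but cites \Cite{Ch.\,7.9}{EGno} and then, in the Remark immediately following the Proposition, defines $\iev_{m,m'}$ as the image of $\id_{\iHom(m,m')}$ under the adjunction \eqref{eq:def:iHom} and $\imu_{m,m',m''}$ as the adjunct of the iterated evaluation \eqref{eq:tworems1}, precisely as you do. Your additional remarks on associativity, unitality, and the coherence bookkeeping are appropriate elaborations of what the paper leaves to the cited reference.
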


\begin{Rem}
(i) The evaluation morphism $\iev_{m,m'}$ is the image of the identity morphism
$\id_{\iHom(m',m)}$ under the adjunction \eqref{eq:def:iHom}. As a consequence we have
  \be
  \alpha \,=\, \iev_{m,m'} \circ (\widetilde\alpha\act\id_{m}) : \quad
  c\act m \xrightarrow{\,\widetilde\alpha\act\id_{m}~} \iHom(m,m')\act m
  \xrightarrow{\,\iev_{m,m'}\,} m'
  \label{eq:tworems2}
  \ee	
as an equality of morphisms in \M, where $\widetilde \alpha\iN \Hom_\C(c,\iHom(m,m'))$ 
denotes the image of $\alpha\iN\Hom_\M(c\act m,m')$ under the adjunction
(compare also the proof of Lemma 4.2.2 of \cite{schaum2}).
 \\[2pt]
(ii) The multiplication morphisms are the image of the composite morphisms
  \be
  \iHom(m',m'')\otimes\iHom(m,m')\act m
  \xrightarrow{\,\id_{\iHom(m',m'')}\otimes\,\iev_{m,m'}\,} \iHom(m',m'')\act m' 
  \xrightarrow{\,\iev_{m',m''}\,} m''
  \label{eq:tworems1}.
  \ee		
under the adjunction \eqref{eq:def:iHom}.	
 \\[2pt]
(iii) 
The internal Hom is a bimodule functor \Cite{Lemma\,2.7}{shimi20}: we have
  \be
  \iHom(c\Act m,c'\Act m') = c' \oti \iHom(m,m') \oti c^\vee
  \label{eq:iHomBimod}	
  \ee
for all $c,c' \iN \C$ and all $m,m' \iN \M$.	
\end{Rem} 

Under conditions that are satisfied for the module categories of our interest (and spelled
out e.g.\ in Theorem 7.10.1 of \cite{EGno}), the category of right 
$\iHom(m,m)$-mo\-du\-les in $\calc$ is equivalent to \M\ as a left \C-module category.
For \C\ a finitely semisimple category the adjunction \eqref{eq:def:iHom} implies
immediately that the internal Hom has the direct sum decomposition
  \be
  \iHom(m,n) \cong\, \bigoplus_{i\in I_\calc} \Hom_\C(U_i,\iHom(m,n)) \,{\otimes_\complex}\,
  U_i \cong\, \bigoplus_{i\in I_\calc}\Hom_\M(U_i\act m,n) \,{\otimes_\complex}\, U_i \,,
  \label{eq:innerhomssi}
  \ee
as an object in \C, where the sum is over a set $I_\C$ of representatives for the isomorphism
classes of simple objects of \C. Combining the behavior of the Hom functor with respect 
to coends (see e.g.\ \Cite{Prop.\,2.7}{fScS2}) with the internal Hom adjunction,
the decomposition \eqref{eq:innerhomssi} generalizes to non-semisimple \C\ as follows:
  \be
  \iHom(m,n) \,\cong \int^{c\in\C}\!\! \Hom_\C(c , \iHom(m,n)) \,{\otimes_\complex}\, c
  \,\cong \int^{c\in\C}\!\! \Hom_\M(c\act m,n) \,{\otimes_\complex}\, c \,.
  \label{eq:iHom-intHom}
  \ee

\medskip

We now discuss the relation between internal Homs and boundary fields. That such a relation 
exists should not come as a
surprise. Let us first have a look at this issue for the case of a semisimple modular 
tensor category. As shown in \cite{fuRs4}, in this case boundary fields which change a 
boundary condition $m\iN \M \eq \Amod$ to $n\iN\Amod$ and whose chiral degree of freedom 
is described by an object $c \iN \C$ come with a multiplicity space $\Hom_\M(c\act m,n)$.
(These spaces satisfy various consistency conditions, given in Theorem 5.20 of \cite{fuRs4}.)
Boundary fields can therefore be labeled as $\Psi^{n,m;\alpha}_{c}$ with $\alpha\iN 
\Hom_\M(c\act m,n)$ (see Table 1 of \cite{fuRs4}). This can be described graphically as
     \def\locpx  {1.5}
     \def\locpy  {2.8}
  \be
  \Psi^{n,m;\alpha}_{c} ~~\hat= \quad
  \raisebox{-5.1em} {\begin{tikzpicture}
  \coordinate (cooal) at (0.9*\locpx,0.5*\locpy) ;
  \coordinate (coom1) at (\locpx,0) ;
  \coordinate (coom2) at (\locpx,\locpy) ;
  \coordinate (cooc)  at (0,0) ;
  \scopeArrow{0.59}{\arrowDefect}
  \draw[\colorC,line width=\widthObj,postaction={decorate}]
       (cooc) node[below]{$c$} [out=90,in=245] to (cooal) ; 
  \end{scope}
  \scopeArrow{0.19}{\arrowDefect} \scopeArrow{0.92}{\arrowDefect}
  \draw[\colorM,line width=\widthObj,postaction={decorate}]
       (coom1) node[below]{$m$} [out=100,in=270] to (cooal)
       [out=90,in=250] to (coom2) node[above]{$n$} ;
  \end{scope} \end{scope}
  \node[line width=\widthboxline,fill=blue!10,draw=\colorBdyfield] at (cooal)
       {$\,\scriptstyle\alpha\,$} ;
  \end{tikzpicture}
  }
  \label{eq:pic1}
  \ee
In this picture, $m,n \iN \M$ are labels for boundary conditions on boundary segments of the
world sheet (that is, of the two-dimensional manifold on which the full CFT is considered),
while the label $c\iN\C$ embodies the chiral field content of the boundary field.
In more detail, the picture \eqref{eq:pic1} can be interpreted as the standard graphical 
representation of a morphism in a monoidal category \C, with \M\ identified as $\Amod$ and
with the modules $m$ and $n$ in $\Amod$ identified with their underlying objects in \C,
and thus $\alpha \iN \Hom_A(c \oti m,n)$ regarded as a morphism in \C. 
But alternatively we can interpret \eqref{eq:pic1} in terms of a graphical calculus for
the monoidal category \C\ and its module category \M, as developed in \cite{schaum}; then
it describes a morphism $\alpha \iN \Hom_\M(c\act m,n)$. 
It is tempting to think of the picture \eqref{eq:pic1} also more directly as showing the
relevant region of an actual world sheet. This is indeed possible in the semisimple case,
in which the construction of correlation functions of rational CFT in terms of ribbon graphs 
in three-manifolds \cite{fffs3,fuRs4} is available. The lines labeled by $m$ and $n$ then stand
for actual segments of the boundary of the world sheet, while the line with chiral label $c$ is 
located in a part of the three-manifold outside the (embedded) world sheet (compare e.g.\ 
Figure 1 in \cite{fffs3} or the picture (4.15) in \cite{fuRs10}).

Adopting the interpretation of $\alpha$ as a morphism in \M\ and invoking the equality 
\eqref{eq:tworems2}, $\alpha$ can also be expressed as
     \def\locpx  {1.5}
     \def\locpy  {3.7}
  \be
  \raisebox{-5.0em} {\begin{tikzpicture}
  \coordinate (cooal) at (0.9*\locpx,0.5*\locpy) ;
  \coordinate (coom1) at (\locpx,0) ;
  \coordinate (coom2) at (\locpx,\locpy) ;
  \coordinate (cooc)  at (0,0) ;
  \scopeArrow{0.59}{\arrowDefect}
  \draw[\colorC,line width=\widthObj,postaction={decorate}]
       (cooc) node[below]{$c$} [out=90,in=245] to (cooal) ;
  \end{scope}
  \scopeArrow{0.19}{\arrowDefect} \scopeArrow{0.92}{\arrowDefect}
  \draw[\colorM,line width=\widthObj,postaction={decorate}]
       (coom1) node[below]{$m$} [out=100,in=270] to (cooal)
       [out=90,in=250] to (coom2) node[above]{$n$} ;
  \end{scope} \end{scope}
  \node[line width=\widthboxline,fill=blue!10,draw=\colorBdyfield] at (cooal)
       {$\,\scriptstyle\alpha\,$} ;
  \end{tikzpicture}
  }
  ~\quad = \quad
  \raisebox{-5.0em} {\begin{tikzpicture}
  \coordinate (cooal) at (0.16*\locpx,0.35*\locpy) ;
  \coordinate (cooev) at (0.9*\locpx,0.7*\locpy) ;
  \coordinate (coom1) at (\locpx,0) ;
  \coordinate (coom2) at (\locpx,\locpy) ;
  \coordinate (cooc)  at (0,0) ;
  \scopeArrow{0.22}{\arrowDefect}
  \scopeArrow{0.76}{\arrowDefect}
  \draw[\colorC,line width=\widthObj,postaction={decorate}]
       (cooc) node[below]{$c$} [out=90,in=210] to
       node[rotate=41,left,very near end,yshift=7pt,xshift=-1pt] {$\scriptstyle\iHom(m,n)$} (cooev) ; 
  \end{scope} \end{scope}
  \scopeArrow{0.19}{\arrowDefect} \scopeArrow{0.95}{\arrowDefect}
  \draw[\colorM,line width=\widthObj,postaction={decorate}]
       (coom1) node[below]{$m$} [out=100,in=270] to (cooev)
       [out=90,in=250] to (coom2) node[above]{$n$} ;
  \end{scope} \end{scope}
  \node[line width=0.7*\widthboxline,fill=blue!10,draw=blue] at (cooev)
       {$\scriptstyle\iev_{m,n}$} ;
  \node[rotate=-20,rounded corners,line width=\widthboxline,fill=blue!10,draw=\colorBdyfield]
       at (cooal) {$\scriptstyle\widetilde\alpha$} ;
  \end{tikzpicture}
  }
  \label{eq:pic2}
  \ee
This way we have managed to describe boundary fields of all chiral types $c \iN \C$ 
for fixed boundary conditions $m,n$ naturally via a single internal Hom object,
in a way that no longer requires \C\ to be semisimple. And indeed, 
as has been seen in e.g.\ \Cite{Sect.\,3}{garW} and \Cite{Sect.\,4.4}{fgsS}, objects of
boundary fields can be expressed beyond semisimplicity through internal Homs as
  \be
  \BB^{n,m} = \iHom(m,n) \,.
  \label{eq:BB=iHom}
  \ee

It is then natural to expect that the composition of internal Homs -- which is 
automatically associative -- provides the boundary operator products. To see that this is
consistent, we first note that as a consequence of the equality \eqref{eq:tworems1} we have,
with the identification \eqref{eq:BB=iHom},
     \def\locpx  {1.5}
     \def\locpy  {5.7}
  \be
  \raisebox{-7.6em} {\begin{tikzpicture}
  \coordinate (cooal1) at (0.14*\locpx,0.21*\locpy) ;
  \coordinate (cooal2) at (-0.7*\locpx,0.36*\locpy) ;
  \coordinate (cooev1) at (0.9*\locpx,0.47*\locpy) ;
  \coordinate (cooevb) at (0.81*\locpx,0.47*\locpy) ;
  \coordinate (cooev2) at (0.9*\locpx,0.78*\locpy) ;
  \coordinate (coom1)  at (\locpx,0) ;
  \coordinate (coom2)  at (\locpx,\locpy) ;
  \coordinate (cooc1)  at (0,0) ;
  \coordinate (cooc2)  at (-\locpx,0) ;
  \scopeArrow{0.23}{\arrowDefect} \scopeArrow{0.74}{\arrowDefect}
  \draw[\colorC,line width=\widthObj,postaction={decorate}] (cooc1) node[below]{$c_1$}
       [out=90,in=225] to node[rotate=40,left,very near end,yshift=4pt,xshift=-3pt]
       {$\BB^{m_2,m_1}$} (cooevb) ; 
  \end{scope} \end{scope}
  \scopeArrow{0.18}{\arrowDefect} \scopeArrow{0.57}{\arrowDefect}
  \draw[\colorC,line width=\widthObj,postaction={decorate}] (cooc2) node[below]{$c_2$}
       [out=90,in=210] to node[rotate=45,left,near end,yshift=8pt,xshift=19pt]
       {$\BB^{m_3,m_2}$} (cooev2) ; 
  \end{scope} \end{scope}
  \scopeArrow{0.15}{\arrowDefect} \scopeArrow{0.64}{\arrowDefect}
  \scopeArrow{0.95}{\arrowDefect}
  \draw[\colorM,line width=\widthObj,postaction={decorate}]
       (coom1) node[below]{$m_1$} [out=105,in=270] to (cooev1) -- node[right=-1pt,yshift=3pt]
       {$m_2$} (cooev2) [out=90,in=258] to (coom2) node[above=-1pt]{$m_3$} ;
  \end{scope} \end{scope} \end{scope}
  \node[line width=0.7*\widthboxline,fill=blue!10,draw=blue] at (cooev1)
       {$\scriptstyle\iev_{m_1,m_2}$} ;
  \node[line width=0.7*\widthboxline,fill=blue!10,draw=blue] at (cooev2)
       {$\scriptstyle\iev_{m_2,m_3}$} ;
  \node[rotate=-20,rounded corners,line width=\widthboxline,fill=blue!10,draw=\colorBdyfield]
       at (cooal1) {$\scriptstyle\widetilde\alpha_1$} ;
  \node[rotate=-24,rounded corners,line width=\widthboxline,fill=blue!10,draw=\colorBdyfield]
       at (cooal2) {$\scriptstyle\widetilde\alpha_2$} ;
  \end{tikzpicture}
  }
  ~\qquad = \qquad
  \raisebox{-7.6em} {\begin{tikzpicture}
  \coordinate (cooal1) at (-0.02*\locpx,0.24*\locpy) ;
  \coordinate (cooal2) at (-0.98*\locpx,0.24*\locpy) ;
  \coordinate (cooev)  at (0.9*\locpx,0.77*\locpy) ;
  \coordinate (cooevb) at (0.81*\locpx,0.77*\locpy) ;
  \coordinate (coom1)  at (\locpx,0) ;
  \coordinate (coom2)  at (\locpx,\locpy) ;
  \coordinate (coomu)  at (-0.5*\locpx,0.46*\locpy) ;
  \coordinate (cooc1)  at (0,0) ;
  \coordinate (cooc2)  at (-\locpx,0) ;
  \scopeArrow{0.23}{\arrowDefect} \scopeArrow{0.85}{\arrowDefect}
  \draw[\colorC,line width=\widthObj,postaction={decorate}]
       (cooc1) node[below] {$c_1$} [out=90,in=-25]
       to node[sloped,near end,above=-1pt,xshift=1pt] {$\BB^{m_2,m_1}$} (coomu) ;
  \draw[\colorC,line width=\widthObj,postaction={decorate}]
       (cooc2) node[below] {$c_2$} [out=90,in=205]
       to node[sloped,near end,above=-2pt,xshift=1pt] {$\BB^{m_3,m_2}$} (coomu) ;
  \end{scope} \end{scope}
  \scopeArrow{0.75}{\arrowDefect}
  \draw[\colorC,line width=\widthObj,postaction={decorate}] (coomu) [out=90,in=250]
       to node[sloped,midway,above=-3pt,xshift=-5pt] {$\BB^{m_3,m_1}$} (cooevb) ;
  \end{scope}
  \scopeArrow{0.24}{\arrowDefect} \scopeArrow{0.95}{\arrowDefect}
  \draw[\colorM,line width=\widthObj,postaction={decorate}]
       (coom1) node[below]{$m_1$} [out=105,in=270] to (cooev) 
       [out=90,in=258] to (coom2) node[above=-1pt]{$m_3$} ;
  \end{scope} \end{scope}
  \node[line width=0.7*\widthboxline,fill=blue!10,draw=blue] at (cooev)
       {$\scriptstyle\iev_{m_1,m_3}$} ;
  \node[rotate=4,rounded corners,line width=\widthboxline,fill=blue!10,draw=\colorBdyfield]
       at (cooal1) {$\scriptstyle\widetilde\alpha_1$} ;
  \node[rotate=-4,rounded corners,line width=\widthboxline,fill=blue!10,draw=\colorBdyfield]
       at (cooal2) {$\scriptstyle\widetilde\alpha_2$} ;
  \node[rounded corners,line width=\widthboxline,fill=blue!10,draw=\colorBdyfield]
       at (coomu) {$\scriptstyle\imu$} ;
  \end{tikzpicture}
  }
  \label{eq:bdyOPE}
  \ee
Here $\imu \,{\equiv}\, \imu{}_{m_3,m_2,m_1}$ is the canonical multiplication \eqref{eq:mult-iHom}
of internal Homs. In terms of OPEs, this means that the operator product of the boundary fields
$\Psi^{m_3,m_2;\alpha_2}_{c_2}$ and $\Psi^{m_2,m_1;\alpha_1}_{c_1}$ is the field
$\Psi^{m_3,m_2;\alpha}_{c_2\otimes c_1}$ with $\alpha \iN \Hom_\M((c_2\oti c_1)\act m_1,m_3)$
corresponding to the composition
  \be
  \widetilde\alpha := \imu{}_{m_1,m_2,m_3} \circ (\widetilde\alpha_2 \oti \widetilde\alpha_1)
  :\quad c_2\oti c_1 \to \BB^{m_3,m_1} .
  \label{eq:bdyOPEformula}
  \ee
In line with the different possible interpretations of \eqref{eq:pic1} described above,
the pictures \eqref{eq:pic2} and \eqref{eq:bdyOPE} may be either regarded as 
equalities of morphisms in \C\
or as equalities of morphisms in \M, and in the semisimple case also as equalities of
the invariants that a three-dimensional topological field theory associates to two ribbon 
graphs in a three-manifold that locally differ in the way indicated in the pictures.

We conclude that the boundary OPE is indeed captured by the canonical associative 
multiplication of internal Homs for the module category \M.
Note that the description \eqref{eq:bdyOPEformula} of the boundary OPE is
\emph{relative} to the tensor product in \C\ and cannot, in general, be simplified further,
simply because the tensor product of two objects is generically not fully reducible, not
even if both objects themselves are simple. In contrast, if \C\ is semisimple, then we can
restrict our attention to simple objects $c_1 \eq U_i$ and $c_2 \eq U_j$ and use the semisimple
decomposition $U_i \oti U_j \,{\cong}\, \bigoplus_k \Hom_\C(U_i \oti U_j,U_k)\oti U_k$
(with the summation ranging over a set of representatives for the isomorphism classes
of simple objects, as in \eqref{eq:innerhomssi}) to write the OPE in the familiar\,%
 \footnote{~The dependence of the coefficients on the positions of the fields on the
 world sheet, and thus in particular their pole structure, is obtained when realizing 
 the conformal blocks explicitly as meromorphic sections of vector bundles over the
 moduli space of conformal structures of the world sheet. In our context, invoking
 a Riemann-Hilbert correspondence allows one to suppress this purely chiral issue.}
form
  \be
  \Psi^{m_3,m_2;\alpha_2}_{U_j} * \Psi^{m_2,m_1;\alpha_1}_{U_i}
  = \sum_{k,\gamma} C^{m_1,m_2,m_3;\alpha_1,\alpha_2}_{j,i;k,\gamma}\, \Psi^{m_1,m_3;\gamma}_{U_k} 
  \label{eq:bdyOPEssi}
  \ee
with the $\gamma$-summation being over a basis of $\Hom_\C(U_j \oti U_i,U_k)$.
The index structure of the coefficients $C^{m_1,m_2,m_3;\alpha_1,\alpha_2}_{j,i;k,\gamma}$ 
appearing here is the same as the one of 6j-symbols. And indeed it is easily recognized
that for the local conformal field theory obtained when taking \M\ to be \C\ as a module 
category over itself, these OPE coefficients are precisely the 6j-symbols for the 
monoidal category \C\ \cite{fffs2,fffs3}, while for general local conformal 
field theories, they are the mixed 6j-symbols for the \C-module category \M\
(see e.g.\ \Cite{Sect.\,4.2.1}{bppz2} and \Cite{Sect.\,2.1}{fuRs10}).

Now the crossing symmetry condition \eqref{eq:relation9c} on the boundary OPE 
-- when allowing for arbitrary choices of incoming versus outgoing boundary field insertions --
amounts to the requirement that the algebras of boundary fields that preserve a boundary
condition $m$, and thus the internal Homs $\iHom(m,m)$ of the module category \M, are not
just algebras but even symmetric Frobenius algebras.
It has been shown \cite{schaum} that the module categories over semisimple modular categories
which are equivalent to the category of modules over a Frobenius algebra are those which have a 
\emph{module trace}, i.e.\ \Cite{Def.\,3.7}{schaum} a collection of linear maps
$\Hom_\M(m,m) \,{\rarr~}\, \complex$ satisfying natural consistency conditions. For general
pivotal finite tensor categories, similar results are available \cite{schaum5,shimi20}. 
(Recall that modular tensor categories come with a distinguished pivotal structure.) 
It turns out that validity of the crossing symmetry \eqref{eq:relation9c} for general boundary 
fields that are allowed to change the boundary condition, i.e.\ for all internal Homs 
$\iHom(m,n)$ of \M, amounts to requiring that \M\ is a \emph{pivotal} module category
over a pivotal finite tensor category, a notion that is defined as follows
($\!$\Cite{Def.\,5.2}{schaum5} and \Cite{Def.\,3.11}{shimi20}):

\begin{Defi}
A \emph{pivotal module category} over a pivotal finite tensor category \C\ is a module category 
\M\ over \C\ such that there are functorial isomorphisms $\iHom(m,n)^\vee \,{\cong}\,
\iHom(n,m)$, for $m,n\iN\M$, compatible with the pivotal structure of \C.
\end{Defi}

The collection of such isomorphisms is called a pivotal structure on \M\ and is denoted by
$\pi^\M$. By a Schur lemma-type argument, it can be shown \Cite{Lemma\,3.12}{shimi20}
that a pivotal structure on an indecomposable module category $\calm$ (if it 
exists) is unique up to a scalar multiple.
In a bit more detail, a pivotal module category admits \emph{relative Serre functors}
\cite{fScS2} and is thus an exact module category and, moreover, the relative 
Serre functors are trivialized as twisted module functors. The existence of such 
a trivialization can be regarded as Calabi-Yau type condition \cite{costK2}.
It fits with this point of view that \Cite{Thm.\,3.15}{shimi20} for a pivotal module
category for any $m\iN\calm$ the algebra $\iHom(m,m)$ in \C\ has the structure of a
symmetric Frobenius algebra. Its Frobenius counit is the composition
  \be
  \underline\varepsilon{}_m^{} \Colon \iHom(m,m)
  \rarr{(\pi^\M_{m,m})^{-1}_{}} \iHom(m,m)^\vee
  \rarr{(\underline\eta{}_m)^\vee_{}} \one_\C^\vee = \one_\C^{} \,,
  \label{eq:iHom-counit}
  \ee
where $\pi^\M_{m,m}$ is the $m$-$m$-component of the pivotal structure $\pi^\M$
and $\underline\eta{}_m\colon \one_\C \,{\to}\, \iHom(m,m)$ is the unit morphism
of the unital algebra $\iHom(m,m)$.

These observations lead us to impose two further requirements. 
The first of these specifies an additional input needed beyond the chiral data,
while the second describes the precise role played by the additional datum:

\begin{Assumption}
Within the mathematical framework of finite categories, an indecomposable pivotal module
category $\calm$ over a modular tensor category $\calc$
specifies a full local conformal field theory whose chiral data are encoded in \C.
\end{Assumption}

\begin{Assumption}
The objects of the pivotal module category \M\ are the possible boundary conditions,
the internal Homs $\iHom(m,n)\iN\calc$ provide the boundary fields that change the 
boundary condition from $m\iN\M$ to $n\iN\M$, and the composition \eqref{eq:mult-iHom}
of internal Homs describes the boundary OPE.
\end{Assumption}

The classification of indecomposable module categories over a given modular category is,
in general, a hard problem, and deciding whether a given module category is pivotal is 
difficult as well.  But for any modular category $\calc$, there is at least one example of 
an indecomposable pivotal module category: $\calc$ seen as a module category over itself
-- that it is pivotal as a module category follows directly from the fact that it is pivotal
as a tensor category.
This particular example $\M \eq \C$ is commonly referred to as the \emph{Cardy case}. It is
immediate that for $\M \eq \C$ the boundary conditions are in bijection with the objects
$c$ of $\calc$; the boundary fields relating two boundary conditions $c$ and $c'$ are given 
by $\iHom_\calc(c,c') \eq c'\oti \Vee c$,  which for semisimple \C\ is a special case 
of \eqref{eq:iHom-otA}. Beyond the Cardy case, simple current techniques 
\cite{scya6,fuRs9} allow one to construct examples of indecomposable module categories
that can be realized as categories of modules over Frobenius algebras whose underlying
objects are direct sums of invertible objects of \C.  (In classifications of full local
conformal field theories, often the letter D is used to denote the corresponding models.)


\subsection{Defect conditions and defect fields} \label{sec:dc+df}

To account for Assumptions 1\,--\,3 we fix a modular tensor category \C\ and an
indecomposable pivotal module category \M\ over \C. This may be rephrased by saying 
that we take the chiral data as well as all boundary fields, including their OPE, as an
input.\,%
 \footnote{~The amount of independent input data is in fact considerably smaller than
 it might appear. Indeed, it suffices to know a single boundary condition
 and the Frobenius algebra $A\iN\calc$ of boundary fields which preserve that
 boundary condition. The category \M\ can then be recovered, as a pivotal module category,
 as the category $\Amod$ of right $A$-modules.}	
Our goal is to construct from this input the bulk fields and their operator products.

It is most natural -- and also helps to clarify the conceptual setup -- to investigate
not only bulk fields, but also general defect fields in the bulk (including, as another
special case, disorder fields). To this end we must first provide the possible
\emph{types of defect lines}, or `defect conditions'. All defects considered here
are topological and preserve the full chiral symmetry \C. Unlike more general defects which
are of interest as well, such as conformal ones, topological defects automatically come with
a topological fusion product. Moreover, among the topological defects there are the invertible
defects and the duality defects, which allow one \cite{ffrs5} to extract symmetries 
and order-disorder dualities, respectively, of a full CFT.

A defect line can separate regions supporting two different full conformal field theories 
that are built on the same chiral CFT. Defect fields can change the type of defect line. We
therefore now consider  a pair of pivotal module categories \M\ and $\M'$ over \C\ assigned
to regions of the world sheet that are separated by a defect line, as well as a point-like
insertion $\DD$ on the defect at which the defect condition changes, say from $G$ to $H$.
This  local situation on the world sheet is illustrated in the following picture:
     \def\locpx  {1.45}
     \def\locpy  {3.55}
  \be
  \raisebox{-4.3em}{\begin{tikzpicture}[decoration={random steps,segment length=6mm}]
  \fill[\colorFace,decorate]
       (-1.5*\locpx,0) -- ++(0,\locpy) -- ++(3*\locpx,0) -- ++(0,-\locpy) -- cycle ;
  \scopeArrow{0.27}{\arrowDefect} \scopeArrow{0.88}{\arrowDefect}
  \draw[\colorDefect,line width=\widthObj,postaction={decorate}]
       (0,0) node[right=-1.4pt,yshift=6pt]{$G$} -- +(0,\locpy)
       node[right=-1.5pt,yshift=-6pt]{$H$} ;
  \end{scope} \end{scope}
  \filldraw[fill=\colorHole,draw=\colorDeffield,very thick] (0,0.5*\locpy) circle (0.15 cm)
       node[black,left=2pt,yshift=-4pt] {$\DD$} ;
  \node[thick,color=\colorM] at (-\locpx,0.64*\locpy) {$\M$} ;
  \node[thick,color=\colorM] at (\locpx,0.64*\locpy) {$\M'$} ;
  \end{tikzpicture}
  }
  \label{pic:defectline}
\ee

In the special situation that the modular category \C\ is semisimple, according to
\cite{fuRs4} a full conformal field theory is given by a simple special symmetric Frobenius
algebra $A$ -- determined up to Morita equivalence -- and the defect conditions for
topological defects separating the full conformal field theories characterized by two such
algebras $A$ and $A'$ are given by $A$-$A'$-bi\-mo\-dules. In a Morita invariant formulation, 
the role of the algebras $A$ and $A'$ is taken over by indecomposable semisimple (and thus
pivotal) \C-module categories
\M\ and $\M'$ such that $\M \,{\simeq}\, \Amod$ and $\M' \,{\simeq}\, \text{mod-}A'$. A
defect condition is then an object of the category $\Fun_\C(\M,\M')$ of \C-module functors
between \M\ and $\M'$. Any such a functor is isomorphic to the functor of taking the tensor 
product over $A$ with a suitable $A$-$A'$-bimodule.  Now tensoring with a bimodule is a
\emph{right exact functor}, even when \C\ is no longer semisimple. We are thus led to make

\begin{Assumption} \label{Ass:defectconditions}
The defect conditions for topological defects that separate full conformal field theories 
described by indecomposable pivotal left \C-module categories \M\ and $\M'$ are the objects
of the category $\Funre_\C(\M,\M')$ of right exact \C-module functors.
\end{Assumption}

In case $\M' \eq \M$, adopting a frequent practice in the literature we write 
$\Funre_\C(\M,\M) \,{=:}\, \calcm$. The functor category $\calcm$ is again a finite tensor 
category, with tensor product given by the composition of functors; it is not braided. By
\Cite{Thm.\,3.13}{shimi20}, $\calcm$ has a pivotal structure, allowing us in particular to
identify left and right duals and thus to describe the orientation reversal of a defect
line unambiguously as replacing the object that gives the defect condition by its dual.
More generally, the composition of functors also provides us with an associative
multiplication $\Funre_\C(\M',\M'') \Times \Funre_\C(\M,\M') \rarr{} \Funre_\C(\M,\M'')$.
A natural interpretation of this multiplication is that the composition of 
module functors describes the \emph{fusion} of topological defect lines. In particular, 
the tensor product on $\calcm$ describes the fusion of topological defect lines
in a full conformal field theory given by \M. Moreover, for $\M \eq \M'$
there is a \emph{transparent defect line}, namely the one that corresponds to the 
module endofunctor $\Id_\M\iN\Funre_\C(\M,\M)$ (which is clearly right exact).
Bulk fields are just those defect fields which preserve the transparent defect.

By studying the sewing conditions for correlation functions of bulk fields, accounting
in particular for the distinction between incoming and outgoing insertions, one learns 
(see \Cite{Prop.\,4.7}{fuSc22}, and for the semisimple case also \cite{koRu}) that the
space of bulk fields in addition has in particular a coalgebra structure, and that the
algebra and coalgebra structures naturally fit into the structure of a symmetric
Frobenius algebra. A similar analysis of four-point correlators 
    \def\locpx  {2.3}    
    \def\locpy  {1.8}    
  \be
  \raisebox{-6.3em} {\begin{tikzpicture}
  \FourDefectFieldsOnSphereLabeled {G_1}{G_2}{G_3}{G_4}{\DD} \end{tikzpicture}}
  \ee
involving general defect fields reveals that the space of defect fields on a defect of
fixed type must carry a natural structure of a symmetric Frobenius algebra as well.
Moreover, the algebra of bulk fields must in addition be commutative.
That the bulk algebra is commutative symmetric Frobenius is precisely what is needed to
satisfy the crossing symmetry condition \eqref{eq:relation9a} for the bulk OPE.


\subsection{The problem: Reconstructing the bulk from the boundary} \label{sec:theproblem}

We are now ready to precisely formulate the problem to which 
we are going to propose a solution:

\begin{Problem}
Let $\calc$ be a modular tensor category and let $\M$, $\M'$ and $\M''$ be 
indecomposable pivotal module categories over \C.
\Enumerate

\item \label{p1}
For each pair of defect conditions $G,G'\iN \Funre_\C(\M,\M')$ for topological defects
separating \M\ and $\M'$, provide an object 
  \be
  {\DD}^{G,G'} \in \CbC \simeq \ZC
  \ee
that describes the space of defect fields which change the defect condition from $G$ to $G'$.

\item \label{p2}
Given three defect conditions $G,G',G''\iN \Funre_\C(\M,\M')$, provide an associative 
composition
  \be
  {\DD}^{G',G''} \!\otimes {\DD}^{G,G'} \rarr{~} {\DD}^{G,G''}
  \label{eq:eqvertical}
  \ee
in \ZC\ that describes the operator product of two defect fields on a defect line
separating \M\ and $\M'$, in such a way that the associative algebras $\DD^{G,G}$
come with a natural structure of a symmetric Frobenius algebra and that the bulk algebra
$\DD^{\Id_\M,\Id_\M}$ is in addition commutative in \ZC.

\item \label{p3}
Given two pairs of segments of topological defect lines, with defect
conditions $G,H\colon \M\,{\rarr{}}\, \M'$ and $G',H'\colon \M'\,{\rarr{}}\, \M''$, 
respectively, provide an associative composition
  \be
  {\DD}^{G,H} \otimes {\DD}^{G',H'} \rarr{~} {\DD}^{G'\circ G,H'\circ H}
  \label{eq:eqhorizontal}
  \ee
that describes what happens to defect fields upon fusing the segments of defect lines
pairwise to segments labeled by defect conditions 
$G'\cir G\colon \M \,{\rarr{}}\, \M''$ and $H'\cir H\colon \M \,{\rarr{}}\, \M''$.

\item \label{p4}
Finally, obtain natural bulk-boundary OPEs, corresponding
to the third building block in \eqref{eq:3opes}.
\end{enumerate}
\end{Problem}

These problems have already been completely solved for the case that the modular
tensor category $\calc$ is semisimple \cite{fuRs4,fuRs10}. It will thus be an important 
check of the proposal we are going to formulate that it reproduces these results 
when \C\ is semisimple.

Pictorially, the compositions \eqref{eq:eqvertical} and \eqref{eq:eqhorizontal} amount to
     \def\locpr  {0.15}
     \def\locpx  {1.45}
     \def\locpy  {4.95}
  \be
  \raisebox{-5.9em}{\begin{tikzpicture}[decoration={random steps,segment length=6mm}]
  \fill[\colorFace,decorate]
       (-1.5*\locpx,0) -- ++(0,\locpy) -- ++(3*\locpx,0) -- ++(0,-\locpy) -- cycle ;
  \scopeArrow{0.17}{\arrowDefect} \scopeArrow{0.56}{\arrowDefect}
  \scopeArrow{0.90}{\arrowDefect}
  \draw[\colorDefect,line width=\widthObj,postaction={decorate}]
       (0,0) node[left=-1.4pt,yshift=6pt]{$G$} -- node[left=-1.1pt,yshift=-11pt]{$G'$}
       +(0,\locpy) node[left=-1.1pt,yshift=-6pt]{$G''$} ;
  \end{scope} \end{scope} \end{scope}
  \filldraw[fill=\colorHole,draw=\colorDeffield,very thick]
       (0,0.31*\locpy) circle (\locpr) node[black,right=2pt,yshift=-4pt] {$\DD^{G,G'}$} 
       (0,0.69*\locpy) circle (\locpr) node[black,right=2pt,yshift=-4pt] {$\DD^{G',G''}$} ;
  \node[thick,color=\colorM] at (-\locpx,0.82*\locpy) {$\calm$} ;
  \node[thick,color=\colorM] at (\locpx,0.82*\locpy) {$\caln$} ;
  \end{tikzpicture}
  }
  \qquad\longmapsto\qquad
  \raisebox{-5.9em}{\begin{tikzpicture}[decoration={random steps,segment length=6mm}]
  \fill[\colorFace,decorate]
       (-1.5*\locpx,0) -- ++(0,\locpy) -- ++(3*\locpx,0) -- ++(0,-\locpy) -- cycle ;
  \scopeArrow{0.23}{\arrowDefect} \scopeArrow{0.88}{\arrowDefect}
  \draw[\colorDefect,line width=\widthObj,postaction={decorate}]
       (0,0) node[left=-1.4pt,yshift=6pt]{$G$} -- +(0,\locpy)
       node[left=-1.1pt,yshift=-6pt]{$G''$} ;
  \end{scope} \end{scope}
  \filldraw[fill=\colorHole,draw=\colorDeffield,very thick] (0,0.5*\locpy) circle (\locpr)
       node[black,right=2pt,yshift=-4pt] {$\DD^{G,G''}$} ;
  \node[thick,color=\colorM] at (-\locpx,0.80*\locpy) {$\calm$} ;
  \node[thick,color=\colorM] at (\locpx,0.80*\locpy) {$\caln$} ;
  \end{tikzpicture}
  }
  \label{eq:verticalOPE}
  \ee
and to\,%
 \footnote{~Note that the order of the terms in the composition of functors is, according
 to standard conventions, opposite to the order of factors that would arise
 when describing the fusion of defect lines as a tensor product.}
     \def\locpr  {0.15}
     \def\locpx  {1.95}
     \def\locpy  {4.25}
  \be
  \raisebox{-4.8em}{\begin{tikzpicture}[decoration={random steps,segment length=6mm}]
  \fill[\colorFace,decorate]
       (-1.5*\locpx,0) -- ++(0,\locpy) -- ++(3*\locpx,0) -- ++(0,-\locpy) -- cycle ;
  \scopeArrow{0.23}{\arrowDefect} \scopeArrow{0.88}{\arrowDefect}
  \draw[\colorDefect,line width=\widthObj,postaction={decorate}]
       (-0.39*\locpx,0) node[left=-1.4pt,yshift=6pt]{$G$} -- +(0,\locpy)
       node[left=-2.1pt,yshift=-6pt]{$H$} ;
  \draw[\colorDefect,line width=\widthObj,postaction={decorate}]
       (0.39*\locpx,0) node[left=-1.4pt,yshift=6pt]{$G'$} -- +(0,\locpy)
       node[left=-2.1pt,yshift=-6pt]{$H'$} ;
  \end{scope} \end{scope}
  \filldraw[fill=\colorHole,draw=\colorDeffield,very thick]
       (-0.39*\locpx,0.5*\locpy) circle (\locpr) node[black,left=2pt,yshift=-4pt] {$\DD^{G,H}$}
       (0.39*\locpx,0.5*\locpy) circle (\locpr) node[black,right=2pt,yshift=-4pt] {$\DD^{G',H'}$} ;
  \node[thick,color=\colorM] at (-\locpx,0.74*\locpy) {$\calm$} ;
  \node[thick,color=\colorM] at (0,0.74*\locpy) {$\calm'$} ;
  \node[thick,color=\colorM] at (\locpx,0.74*\locpy) {$\calm''$} ;
  \end{tikzpicture}
  }
  \qquad\longmapsto\qquad
     \def\locpx  {1.63}
  \raisebox{-4.8em}{\begin{tikzpicture}[decoration={random steps,segment length=6mm}]
  \fill[\colorFace,decorate]
       (-1.5*\locpx,0) -- ++(0,\locpy) -- ++(3*\locpx,0) -- ++(0,-\locpy) -- cycle ;
  \scopeArrow{0.23}{\arrowDefect} \scopeArrow{0.88}{\arrowDefect}
  \draw[\colorDefect,line width=\widthObj,postaction={decorate}]
	  (0,0) node[left=-1.4pt,yshift=5pt]{$G'{\circ}G$} -- +(0,\locpy)
	  node[left=-2.1pt,yshift=-5pt]{$H'{\circ}H$} ;
  \end{scope} \end{scope}
  \filldraw[fill=\colorHole,draw=\colorDeffield,very thick] (0,0.5*\locpy) circle (\locpr)
       node[black,right=2pt,yshift=-4pt] {$\DD^{G'\circ G,H'\circ H}$} ;
  \node[thick,color=\colorM] at (-0.85*\locpx,0.74*\locpy) {$\calm$} ;
  \node[thick,color=\colorM] at (0.85*\locpx,0.74*\locpy) {$\calm''$} ;
  \end{tikzpicture}
  }
  \label{eq:horizontalOPE}
  \ee
respectively. As suggested by these pictures, we choose the terminology 
\emph{vertical OPE} for the operator product \eqref{eq:eqvertical} along a defect line,
and \emph{horizontal OPE} for the operator product \eqref{eq:eqhorizontal} that 
is accompanied by the fusing of defect lines.
In the case of bulk fields, the vertical OPE is just the ordinary
bulk OPE, which is the second building block in \eqref{eq:3opes}.


\section{Proposal: Defect fields in finite conformal field theory} \label{sec:3}

A simple observation that motivates our proposal for defect fields and their OPE 
is the fact that the Poincar\'e dual of the picture \eqref{pic:defectline}, i.e.
  \be
  \begin{tikzpicture}
  \node[left,color=\colorM] at (0,0) {$\calm$};
  \node[right,color=\colorM] at (1.96,0) {$\caln$};
  \draw[very thick,color=\colorDefect,->] (0,0.1) .. controls (0.7,0.8) and (1.3,0.8) .. (2,0.1);
  \node[color=\colorDefect] at (1.02,0.91) {$G$};
  \draw[very thick,color=\colorDefect,->] (0,-0.1) .. controls (0.7,-0.8) and (1.3,-0.8) .. (2,-0.1);
  \node[color=\colorDefect] at (1.02,-0.93) {$G'$};
  \draw[thick,double,->] (1,0.5) -- node[right=-1.4pt] {$\scriptstyle \DD$} (1,-0.5) ;
  \end{tikzpicture}
  \label{pic:nattrafo}
  \ee
is reminiscent of the standard graphical description of natural transformations between 
functors. Moreover, the vertical OPE considered in  Part \ref{p2} of our problem
is reminiscent of vertical composition, and the OPE considered in 
Part \ref{p3} of horizontal composition of natural transformations. 

On the negative side, module natural transformations form a finite-dimensional vector 
space. In contrast, what we need to describe defect fields are \emph{objects} 
${\DD}^{G,G'}{\in}\,\ZC \,{\cong}\, \CbC$. But as it turns out, these objects can still be
described in close analogy with natural transformations. Indeed, in \cite{fuSc25} it has
been shown that they can be constructed as \emph{internal natural transformations}.
In the next subsection we briefly explain the theory of those objects.


\subsection{Internal natural transformations} \label{sec:iNat}

We first need to recall a basic fact about module categories: 

\begin{Prop}
For \M\ and \N\ finite module categories over a finite tensor category \C,
the finite category $\Funre_\C(\M,\N)$ of right exact module functors is a finite 
module category over the Drinfeld center \ZC\ $($which is a finite tensor category$)$. 
\end{Prop}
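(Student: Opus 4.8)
The plan is to obtain the \ZC-action not by constructing it from scratch but by \emph{restricting}, along a monoidal functor, the tautological action of the finite tensor category $\Funre_\C(\N,\N)$ on $\Funre_\C(\M,\N)$ given by composition of functors. The conceptual point is that an object of \C\ alone does \emph{not} act on module functors -- one cannot commute it past the \C-action in a coherent way -- whereas an object of the Drinfeld center does, precisely because it carries a half-braiding.

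First I would show that every object $(z,\cB)\iN\ZC$ yields a right exact \C-module endofunctor $\Phi(z,\cB):= z\act(-)$ of \N. It is exact, hence right exact, because tensoring with the rigid object $z$ has both a left and a right adjoint (tensoring with $\Vee z$ and with $z^\vee$) and so preserves all colimits. Its \C-module functor structure, i.e.\ the coherent isomorphisms $z\act(c\act n)\xcong c\act(z\act n)$, is supplied by the half-braiding components $\cB_c\colon z\oti c\To c\oti z$, and the single module-functor coherence axiom for $\Phi(z,\cB)$ is exactly the hexagon identity for $\cB$. On morphisms, a map $(z,\cB)\To(z',\cB')$ in \ZC\ is a map $z\To z'$ in \C\ intertwining the half-braidings, and tensoring it on the left produces a \C-module natural transformation; this makes $\Phi\colon\ZC\To\Funre_\C(\N,\N)$ a functor.

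Next I would promote $\Phi$ to a monoidal functor. On objects one has $\Phi\big((z,\cB)\oti(z',\cB')\big)=(z\oti z')\act(-)$ and $\Phi(z,\cB)\cir\Phi(z',\cB')=z\act(z'\act(-))$, and the associativity constraint of the \C-action on \N\ furnishes a natural isomorphism between them, while the unitor identifies $\Phi$ of the tensor unit with $\mathrm{Id}_\N$. Verifying that these comparison isomorphisms are themselves \emph{module} natural transformations and obey the monoidal coherence axioms is the one step carrying genuine content: it encodes how the half-braiding of a tensor product in \ZC\ is assembled from those of its factors. I expect this -- rather than any abstract difficulty -- to be the main obstacle, although it reduces to a finite diagram chase governed by the hexagon axioms.

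Finally I would assemble the module structure and address finiteness. The category $\Funre_\C(\M,\N)$ is a left module category over the finite tensor category $\Funre_\C(\N,\N)$ under post-composition $G\cdot F:= G\cir F$, with the associator and unitor inherited from composition of functors; pulling this action back along the monoidal functor $\Phi$ equips $\Funre_\C(\M,\N)$ with a left \ZC-module structure whose action is $(z,\cB)\act F= z\act F(-)$. The pentagon and triangle axioms for the \ZC-action then follow formally from those for the $\Funre_\C(\N,\N)$-action together with the monoidality of $\Phi$ established above. For finiteness I would invoke standard facts: \ZC\ is a finite tensor category because \C\ is, and $\Funre_\C(\M,\N)$ is a finite \ko-linear abelian category, as one sees by realizing \M\ and \N\ as module categories $\Amod$ and $\text{mod-}B$ over finite algebras $A,B\iN\C$ and identifying right exact module functors with a category of $B$-$A$-bimodules (equivalently, via the relative Deligne product $\N\boti_\C\M\opp$).
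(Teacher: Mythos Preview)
Your proposal is correct and rests on the same core idea as the paper: the half-braiding of $z\iN\ZC$ supplies the isomorphisms $z\act(c\act n)\cong c\act(z\act n)$ that make $z\act(-)$ a \C-module endofunctor of \N, and hence lets \ZC\ act on $\Funre_\C(\M,\N)$ by $(z\act G)(m)=\dot z\act G(m)$. The paper simply writes down this action and the module-functor isomorphisms directly, without further ado.

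Your organization differs in that you factor the action through a monoidal functor $\Phi\colon\ZC\to\Funre_\C(\N,\N)$ and then restrict the tautological post-composition action of $\Funre_\C(\N,\N)$ along $\Phi$. This is a cleaner packaging: once monoidality of $\Phi$ is established, the pentagon and triangle axioms for the \ZC-action follow formally, whereas the paper leaves these (and the finiteness statements) implicit under ``readily checked''. Your additional remarks on right exactness of $z\act(-)$ via rigidity and on finiteness of $\Funre_\C(\M,\N)$ via bimodule realizations are also points the paper does not spell out. So the mathematical content is the same; your route trades a small amount of extra setup for a more transparent verification of the coherence conditions.
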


Indeed it is readily checked that, for any object $z\iN\ZC$ in the Drinfeld center and 
any module functor $G\iN \Funre_\C(\M,\N)$, the functor $(z\act G) \iN \Funre(\M,\N)$ 
defined by
  \be
  (z\act G)(m) := \dot z\act (G(m)) \,,
  \label{eq:z.G}
  \ee
with $\dot z \iN \C$ the object in \C\ underlying the object $z \iN \ZC$,
becomes a \C-module functor via the isomorphisms
  \be
  (z\act G)(c\act m) = \dot z\act G(c\act m) \rarr\cong (\dot z \oti c)\act G(m) 
  \rarr \cong (c \oti \dot z)\act G(m) = c\act \big( (z\act G)(m) \big)
  \ee
for all $c\iN\C$. Here in the first isomorphism we use the module functor structure 
of $G$ and in the second isomorphism the $c$-component of the half-braiding of $z$.

In view of this result it is natural to study the internal Homs $\iHom(G,H)$ 
for module functors $G,H \iN \Funre_\C(\M,\N)$. By definition, these are objects in
the Drinfeld center \ZC; their existence is again ensured by the finiteness properties
that are included in our setting. Being internal Homs of a functor category, these objects
have been called \emph{internal natural transformations} in \cite{fuSc25} and are also
denoted by $\relNat(G,H)$. The internal natural transformations come with the standard
associative composition of internal Homs. We will see that these account for the 
vertical OPEs of defect fields.
 
The behavior of internal natural transformations in fact largely parallels
the one of ordinary natural transformations. In particular there is also
a horizontal composition, which is compatible with the vertical composition
in the usual way. Here we highlight two other aspects:
First, the vector space of ordinary natural transformations between two linear 
functors $G,H\colon \M \,{\to}\, \N$ can be written as 
  \be
  \Nat(G,H) \,= \int_{\!m\in\M}\! \Hom_\N(G(m),H(m))
  \label{eq:Nat=end}
  \ee
i.e.\ as an \emph{end} over morphism spaces. 

In case \M\ is finitely semisimple, the end reduces to a sum over iso\-mor\-phism classes
of simple objects of \M. The structure morphisms 
  \be
  \Nat(G,H) \,= \int_{\!m\in\M}\! \Hom_\N(G(m),H(m)) \rarr~ \Hom_\N(G(m'),H(m')) \,,
  \ee
for $m'\iN\M$, of the end are just the components of the natural transformation, and
the defining constraints on the components of the natural transformation are the same as 
the dinaturality relation for the structure morphisms \Cite{p.\,223}{MAcl}.
Recalling that the vertical composition of natural transformations amounts to the composition
of components, we see that for $G \eq H$ these structure maps are morphisms of algebras.

In the situation captured by the picture \eqref{pic:nattrafo}, an expression similar to
\eqref{eq:Nat=end} is valid for internal natural transformations \Cite{Thm.\,9}{fuSc25}:
  \be
  \relNat(G,H) \,= \int_{\!m\in\M} \iHom_\N(G(m),H(m)) \,.
  \label{eq:iNat=end}
  \ee
Concerning this equality one should appreciate the fact that, while for any $m\iN\M$ the 
internal Hom $\iHom_\N(F(m),G(m))$ is an object in \C, the end on the right hand side has a 
natural structure of an object in the Drinfeld center \Cite{Thm.\,8}{fuSc25}.
In particular,
the equality \eqref{eq:iNat=end} is to be understood as an equality of objects in \ZC.


\subsection{The proposal} \label{sec:theproposal}

Motivated by the considerations above we formulate the following 

\begin{Proposal} \label{prp:1} 
 ~\\[2pt]
Let $G$ and $G'$ be types of defect lines separating two full local conformal field theories
based on chiral data that are encoded in a modular category \C\ and in \C-module categories
\M\ and \N, respectively (so that, by Assumption \ref{Ass:defectconditions},
$G,G' \iN \Funre_\C(\M,\N)$).
\Enumerate
	
\item 
The defect fields that separate defect lines labeled by $G$ and $G'$ are given by the
object $\relNat_\M(G,G')$ of internal natural transformations in the Drinfeld center:
  \be
  \DD^{G,G'} = \relNat(G,G') ~\in \ZC \,.
  \label{DD=relNat}
  \ee
In particular, the bulk fields for a full CFT based on \C\ and on a \C-module category \M\
are given by the internal natural transformations from the identity functor to itself,
  \be
  \FF = \relNat(\Id_\calm,\Id_\calm) ~\in \ZC \,,
  \label{FF=relNat}
  \ee
and the disorder fields at which a defect line of type $G$ starts or ends are given
by $\relNat(\Id_\M,G)$ and $\relNat(G,\Id_\M)$, respectively.

\item 
The OPEs of defect fields are given by the horizontal and vertical composition of
these internal natural transformations.
\end{enumerate}
\end{Proposal}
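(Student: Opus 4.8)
Since the assertion is a \emph{proposal} rather than a theorem, the way I would substantiate it is to verify that the objects $\relNat(G,G')$, together with the vertical and horizontal compositions of internal Homs, carry precisely the structure demanded in Parts \ref{p1}--\ref{p4} of the Problem of Section \ref{sec:theproblem}, and that everything reduces to the known answer in the semisimple case. The first, almost formal, point is well-definedness (Part \ref{p1}): by the Proposition above, $\Funre_\C(\M,\N)$ is a module category over $\ZC$ and its internal Homs exist by finiteness, so $\relNat(G,G')=\iHom(G,G')$ is automatically an object of $\ZC$ and, via the equivalence $\GC$ of \eqref{eq:CbC->ZC}, of $\CbC$; the end formula \eqref{eq:iNat=end} then reduces every computation to the internal Homs $\iHom_\N(G(m),G'(m))$ of the boundary theory, which are already under control.

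For the vertical OPE (Part \ref{p2}) I would simply take the canonical associative multiplication $\imu$ of internal Homs, now applied inside the $\ZC$-module category $\Funre_\C(\M,\N)$; associativity is then automatic. The substantive task is to equip $\relNat(G,G)$ with a symmetric Frobenius structure. For this I would invoke the general principle (\Cite{Thm.\,3.15}{shimi20}, used already around \eqref{eq:iHom-counit}) that $\iHom(X,X)$ is symmetric Frobenius as soon as the ambient module category is pivotal, and apply it to $\Funre_\C(\M,\N)$. The crucial input is therefore that $\Funre_\C(\M,\N)$ is a \emph{pivotal} $\ZC$-module category, an induced structure that I expect to build from the pivotal structures $\pi^\M,\pi^\N$ and from the pivotal structure on $\calcm$, checking through \eqref{eq:iNat=end} that the duality isomorphisms $\iHom_\N(G(m),G'(m))^\vee\cong\iHom_\N(G'(m),G(m))$ assemble compatibly under the end.

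The genuinely \emph{modular} input is the commutativity of the bulk algebra $\FF=\relNat(\IdM,\IdM)$. Here I would use that $\IdM$ is the tensor unit of $\calcm$ and that the end \eqref{eq:iNat=end} carries a half-braiding (this is exactly the content of the $\ZC$-structure on internal natural transformations established in \cite{fuSc25}); commutativity in $\ZC$ should then follow by comparing the two evaluation orders in the end against the braiding of $\ZC$, with non-degeneracy of the braiding ensuring consistency with the four-bulk-field crossing relation \eqref{eq:relation9a}. For the horizontal OPE and the bulk-boundary OPE (Parts \ref{p3},\,\ref{p4}) I would take the horizontal composition of internal natural transformations of \cite{fuSc25}, establish its associativity and the interchange law with the vertical composition, and obtain the bulk-boundary product by letting $\relNat(\IdM,\IdM)$ act on $\iHom_\M(m,n)$ via the evaluation morphism $\iev$.

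Finally I would close the loop with the consistency checks: that in the semisimple case the end \eqref{eq:iNat=end} collapses to the sums over simple objects reproducing the mixed $6j$-symbol expressions of \cite{fuRs4,fuRs10}, and that the proposed OPEs satisfy the genus-$0$ crossing constraints \eqref{eq:relation9a}, \eqref{eq:relation9c}, \eqref{eq:relation9d} and \eqref{eq:relation9e}. I expect the main obstacle to be the two structural claims of the middle steps -- the pivotal (hence symmetric Frobenius) structure on the functor category viewed as a $\ZC$-module category, and the commutativity of $\FF$ -- since these are exactly the places where the braiding and the Drinfeld-center module structure must be shown to interact correctly with the end decomposition \eqref{eq:iNat=end}, rather than following formally from the internal-Hom formalism.
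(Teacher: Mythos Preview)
Your outline matches the paper's own substantiation of the Proposal essentially point by point: well-definedness via the $\ZC$-module structure on $\Funre_\C(\M,\N)$, the vertical OPE as the canonical internal-Hom multiplication, the symmetric Frobenius structure via pivotality of the functor category together with \Cite{Thm.\,3.15}{shimi20}, the bulk-boundary OPE via the evaluation $\ievF$, and the semisimple and genus-$0$ consistency checks. Two small corrections: the pivotality of $\Funre_\C(\M,\N)$ is obtained in the paper by invoking unimodularity of \C\ and \Cite{Cor.\,19}{fuSc25} rather than by assembling $\pi^\M,\pi^\N$ by hand through the end; and commutativity of $\FF$ is \emph{not} a ``genuinely modular input'' --- the paper's proof (Appendix) uses only the end description and the half-braiding and explicitly requires neither a braiding on \C\ nor non-degeneracy.
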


\smallskip

As we will show in Sections \ref{sec:2ssi}\,--\,\ref{sec:9e9d},
our proposal passes significant consistency checks.


\subsection{Comments} \label{sec:rmks}

Before we proceed to these consistency checks we comment on a few immediate    
consequences of our proposal.

\begin{Rem}
The bulk algebra $\FF$ is commutative. For the Cardy case bulk algebra this is e.g.\ shown 
in Lemma 3.5 of \cite{dmno}, which is formulated for semisimple \C, but with a proof
that extends to the non-semisimple case. Our proposal allows for an independent proof: We
can show that for any finite module category \M\ over a finite tensor category \C\
the object $\FF \eq \relNat(\Id_\M,\Id_\M)$ is a commutative algebra in $\ZC$. The proof is 
based on the description of $\FF$ as an end and on a compatibility between the half-braiding
on $\FF$ and the product of boundary fields; we present it in Appendix \ref{sec:app}. It is
worth noting that our proof does not require \C\ to be braided nor to be pivotal.
\end{Rem}

\begin{Rem}
There is a braided equivalence $\theta_\calm \colon \ZC \rarr{} \ZCM$ between the
Drinfeld centers of \C\ and of $\calcm$ \cite{schau13}. It can be shown 
\Cite{Rem.\,7}{fuSc25} that 
  \be
  \theta_\calm (\relNat(\Id_\M,\Id_\M)) \,\cong \int_{\!G\in\calcm}\!\!G\ra\cir G \,,
  \ee
where $G\ra$ is the right adjoint of the functor $G$, i.e.\ the dual of $G$ in the pivotal
category $\calcm$. This means that the bulk algebra
becomes diagonal when regarding it not as an object in \ZC, but instead as an object 
in the equivalent category $\ZCM$. Or, stated more succinctly: When expressed in terms of
module functors, \emph{the torus partition function of any full finite CFT is diagonal.}
\end{Rem}

\begin{Rem}
Applying the argument that in the case of a \C-module category \M\ leads to the expression
\eqref{eq:iHom-intHom} for boundary fields to the \ZC-module category $\Funre_\C(\M,\N)$,
we can exhibit the chiral content of the defect fields as the following coend:
  \be
  \relNat(G,H) \,\cong \int^{z\in\ZC}\!\! \Hom_{\Funre_\C(\M,\N)}(z\act G,H) 
  \,{\otimes_\complex}\, z 
  \label{eq:relNat-multspace}
  \ee
(compare also \Cite{Rem.\,5}{fuSc25}). In the semisimple case this coend reduces to a
direct sum; the corresponding formula will be given in \eqref{eq:deffs=relNat} below. The
expression \eqref{eq:relNat-multspace} may be further combined with the braided equivalence
$\GC\colon \CbC \To \ZC$ to write $\relNat(G,H)$ as a corresponding coend over the category
$\CbC$.
\end{Rem}

\begin{Rem}
We denote by $\FF_\text{Cardy}$ the object of bulk fields in the Cardy case, i.e.\ for
\M\ given by \C\ as a module category over itself. Owing to the adjunction 
  \be
  \Hom_{\Funre_\C(\M_1,\M_2)}(z\act G,H) \cong \Hom_{\ZC}(z,\iHom(G,H)) \,,
  \label{eq:Rex-ZC}
  \ee
for any $z \iN \ZC$ we have
  \be
  \bearll
  \Hom_{\Funre_\C(\C,\C)}(z\act \Id_\C,\Id_\C) \!\! & \cong \Hom_\ZC(z,\FF_\text{Cardy})
  \nxl2 &
  = \Hom_\ZC(z,\text{Coind}(\one_\C)) \cong \Hom_\C(\dot z,\one_\C) \,.
  \eear
  \label{eq:HomRz-HomCzdot}
  \ee
Here $\text{Coind}$ is the coinduction functor from \C\ to \ZC\ and the last isomorphism
holds because $\text{Coind}$ is right adjoint to the forgetful functor. The relation
\eqref{eq:HomRz-HomCzdot} can be used
to obtain a convenient expression for $\FF_\text{Cardy}$. Namely, invoking 
the equivalence \eqref{eq:CbC->ZC} between the center \ZC\ and the enveloping category \CbC\
(together with the fact that taking the coend over the Deligne product \CbC\ can be done 
as a double coend over its two factors \Cite{Cor.\,3.12}{fScS2}), it follows that
  \be
  \bearll
  \FF_\text{Cardy} \!\!\!\! & \dsty \stackrel{\eqref{eq:relNat-multspace}}\cong\!\!
  \int^{z\in\ZC}\!\! \Hom_{\Funre_\C(\C,\C)}(z\act \Id_\C,\Id_\C) \,{\otimes_\complex}\, z
  \Nxl2 & \dsty
  \stackrel{\eqref{eq:HomRz-HomCzdot}}\cong\!\!
  \int^{z\in\ZC}\!\! \Hom_\C(\dot z,\one_\C) \,{\otimes_\complex}\, z \,\cong
  \int^{c,c'\in\C}\!\! \Hom_\C(c \oti c',\one_\C) \,{\otimes_\complex}\, c\,{\ootimes}\,c' ,
  \eear
  \label{eq:FFCardy2}
  \ee
where we use the notation $c\,{\ootimes}\,c' \,{=}\, \GC(c\boti c')$.
Invoking duality and the identity $\int^{c \in \calc}\! \Hom_\calc( c,-) 
   $
   \linebreak[0]${\otimes_\complex}\, G(c) \,{\cong}\, G$ that is valid for any linear functor $G$ 
(compare e.g.\ \Cite{Prop.\,2.7}{fScS2}), we can now rewrite the Cardy case bulk fields as
  \be
  \FF_\text{Cardy} \,\cong \int^{c\in\C}\!\! c \,{\ootimes}\, \Vee c ~\in\ZC \,.
  \label{eq:FFCardy}
  \ee
The result \eqref{eq:FFCardy} agrees with the description of bulk fields in Section 2.2
of \cite{fgsS}. Note that in view of the equivalence \eqref{eq:CbC->ZC} between \ZC\ and
\CbC, it shows in particular that the Cardy case bulk algebra $\FF_\text{Cardy}$ deserves
to be called the \emph{charge conjugate bulk object} also beyond semisimplicity.
\end{Rem}

\begin{Rem} \label{rem:Carl} 
In the particular case that $\C \eq H\text{-mod}$ is the category of modules over
a finite-di\-men\-si\-onal factorizable ribbon Hopf algebra $H$ and \M\ is \C\ as a module
category over itself the bulk object is known \Cite{Sect.\,2.3}{fuSs3} to be the coregular 
$H$-bimodule, with underlying vector space $H^*$. That this object is isomorphic to 
$\FF_\text{Cardy}$ as given in \eqref{eq:FFCardy} can be seen by a categorical variant 
of the Peter-Weyl theorem which states \Cite{Cor.\,2.9}{fScS2} that the coregular 
bimodule can be expressed as the coend $\int^{c\in H\text{-mod}}\! c \boti c^*$ in
$H\text{-bimod} \,{\simeq}\, H\text{-mod} \boti (H\text{-mod})^\text{rev}_{\phantom|}$.
(For the latter equivalence, see Appendix A of \cite{fuSs3}.)
\end{Rem}


\subsection{Consistency check: Recovering the semisimple case} \label{sec:2ssi}

As already mentioned, defect fields are completely understood when \C\ is a 
($\complex$-linear) semisimple modular tensor category \cite{fuRs4,fuRs10}. We now
explain how the description of the object of defect fields in \cite{fuRs4,fuRs10}
is recovered from our proposal. As a crucial ingredient we use the adjunction 
\eqref{eq:def:iHom} that defines an internal Hom.

Let us first recall the pertinent results for the semisimple case. Given a semisimple 
modular category \C\ select, as already done in formula \eqref{eq:innerhomssi}, a set 
$(U_i)_{i\in I_\C}^{}$ of representatives for the isomorphism classes of simple objects of
\C. Write $U_i \,\ootimes\, U_j \,{:=}\, \GC(\overline U_i\boti U_j) \iN \ZC$
for the object in the Drinfeld center to which the functor $\GC$ (see \eqref{eq:CbC->ZC})
maps the simple object $\overline U_i\boti U_j$ of \CbC. Since $\calc$ is modular, 
$\GC$ is a braided equivalence, and hence the objects $(U_i\,\ootimes\, U_j)_{i,j\in I}^{}$
form a set of representatives for the isomorphism classes of simple objects of $\ZC$.
Let $\calm_1$ and $\calm_2$ be indecomposable pivotal module categories over $\calc$. There
are symmetric Frobenius algebras $A_1$ and $A_2$ in $\calc$ (determined up to Morita
equivalence) such that $\calm_i \,{\simeq}\, \text{mod-}A_i$ as $\calc$-module categories.

Now let $G,H\colon \calm_1\To \calm_2$ be \C-module functors. As module functors out of an
exact module category, they are exact functors \Cite{Prop.\,7.6.9}{EGno}. They describe two
types of defect lines, each separating the full conformal field theory that corresponds to 
$\calm_1$ and the one that corresponds to $\calm_2$. There then exist 
$A_1\text{-}A_2$-bimodules $B^G$ and $B^H$ such that we have isomorphisms
  \be
  G \,\cong\, {-}\,{\otimes_{\!A_1}^{}} B^G \qquad\text{and}\qquad
  H \,\cong\, {-}\,{\otimes_{\!A_1}^{}} B^H 
  \label{eq:BG,BH}
  \ee
of module functors. By the results of \cite{fuRs4,fuRs10} (see also the dictionary
in \Cite{Sect.\,7}{fuRs11} for a compact compilation) the object in \CbC\ of defect fields that 
transform the defect line of type $B^G$ into the defect line of type $B^H$ is the direct sum
  \be
  \bigoplus_{i,j\in I_\C} \mathrm Z_{i,j}^{B^G\!,B^H} {\otimes_\complex}\, U_i\ootimes U_j 
  \ee
with multiplicity spaces   given by the spaces
  \be
  \mathrm Z_{i,j}^{B^G\!,B^H} := \Hom_{A_1|A_2}^{}(U_i\,{\otimes^+} B^G {\otimes^-}\, U_j\,, B^H) 
  \label{eq:ZijGH}
  \ee
of $A_1$-$A_2$-bimodule morphisms.

Here $U_i\,{\otimes^+} B^G {\otimes^-}\, U_j$ is the $A_1\text{-}A_2$-bimodule with
underlying object $U_i\oti B^G{\otimes}\, U_j$ for which the left action is obtained by
combining the inverse braiding $c_{U_i,A_1}^{-1}$ with the left $A_1$-action on $B^G$ and 
the right action is given by combining the inverse braiding $c_{A_2,U_j}^{-1}$ 
with the right $A_2$-action on $B^G$ (for details, see e.g.\ 
\Cite{Eqs.\,(2.17),\,(2.18)}{fuRs10}). As described in Section 5.10 of \cite{fuRs4}, the 
dimensions $\mathrm z_{i,j}^{B^G\!,B^H} {=}\, \dim_\complex(\mathrm Z_{i,j}^{B^G\!,B^H})$ of 
the spaces \eqref{eq:ZijGH} are the coefficients of the characters of $U_i\ootimes U_j$ in
the partition function on a torus with defect lines $G$ and $H$ (in the literature,
e.g.\ in \cite{pezu5}, these are also known as twisted partition functions).

We are now in a position to state the following result:

\begin{Prop}
For semisimple \C\ the object of defect fields coincides with the object of internal natural
transformations:
  \be
  \bigoplus_{i,j\in I_\C} \mathrm Z_{i,j}^{B^G\!,B^H} {\otimes_\complex}\, U_i\ootimes U_j
  \,\cong\, \relNat(G,H)
  \label{eq:deffs=relNat}
  \ee
as objects in \ZC.
\end{Prop}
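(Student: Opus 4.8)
The plan is to start from the coend presentation \eqref{eq:relNat-multspace} of the internal natural transformations and, in the semisimple setting, to collapse it to the direct sum appearing on the left-hand side of \eqref{eq:deffs=relNat}. Since \C\ is semisimple and the functor $\GC$ of \eqref{eq:CbC->ZC} is a braided equivalence, the center \ZC\ is finitely semisimple and the objects $U_i\ootimes U_j\eq\GC(\overline U_i\boti U_j)$, with $i,j\iN I_\C$, form a set of representatives for its simple isomorphism classes. Hence the coend \eqref{eq:relNat-multspace} reduces to
\be
\relNat(G,H) \,\cong\, \bigoplus_{i,j\in I_\C} \Hom_{\Funre_\C(\calm_1,\calm_2)}\big((U_i\ootimes U_j)\act G,H\big) \,{\otimes_\complex}\, U_i\ootimes U_j \,.
\ee
Comparing with \eqref{eq:deffs=relNat}, it then suffices to exhibit, naturally in $G$ and $H$, an isomorphism of multiplicity spaces
\be
\Hom_{\Funre_\C(\calm_1,\calm_2)}\big((U_i\ootimes U_j)\act G,H\big) \,\cong\, \mathrm Z_{i,j}^{B^G\!,B^H}
\ee
for every pair $i,j\iN I_\C$.

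Next I would pass from module functors to bimodules. Under the assumptions $\calm_1\,{\simeq}\,\text{mod-}A_1$ and $\calm_2\,{\simeq}\,\text{mod-}A_2$, the category $\Funre_\C(\calm_1,\calm_2)$ is equivalent to the category of $A_1$-$A_2$-bimodules, an equivalence that sends a right exact module functor to its value on the regular bimodule $A_1$ and a module natural transformation to the induced bimodule morphism (see e.g.\ Theorem 7.10.1 of \cite{EGno}). Under this equivalence $H$ corresponds to $B^H$ by \eqref{eq:BG,BH}, so that the multiplicity space becomes $\Hom_{A_1|A_2}(M_{ij},B^H)$, where $M_{ij}$ is the $A_1$-$A_2$-bimodule underlying the functor $(U_i\ootimes U_j)\act G$. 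The entire computation thus reduces to the single claim that
\be
M_{ij} \,\cong\, U_i\,{\otimes^+}\,B^G\,{\otimes^-}\,U_j
\ee
as $A_1$-$A_2$-bimodules, for then the right-hand side of the multiplicity isomorphism is exactly $\mathrm Z_{i,j}^{B^G\!,B^H}$ as defined in \eqref{eq:ZijGH}.

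To verify this claim I would evaluate $(U_i\ootimes U_j)\act G$ on the regular bimodule $A_1$. By \eqref{eq:z.G} its underlying object is $(U_i\oti U_j)\oti B^G$, which is identified with $U_i\oti B^G\oti U_j$ after reordering the tensor factors by a braiding. The left $A_1$- and right $A_2$-actions on $M_{ij}$ are then dictated by the module-functor structure of $z\act G$ recalled after \eqref{eq:z.G}, which inserts the half-braiding with components $\cB_{U_i\oti U_j;c}\eq(\id_{U_i}\oti\cb_{c,U_j})\cir(\cb_{U_i,c}^{-1}\oti\id_{U_j})$. Specializing $c\eq A_1$ in this half-braiding contributes the inverse braiding $\cb_{U_i,A_1}^{-1}$ to the left action, while the braiding used to move $U_j$ to the right of $B^G$ contributes $\cb_{A_2,U_j}^{-1}$ to the right action; together these reproduce precisely the twisted actions defining $U_i\,{\otimes^+}\,B^G\,{\otimes^-}\,U_j$ in Eqs.\,(2.17)--(2.18) of \cite{fuRs10}.

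I expect this last step to be the main obstacle: it is essentially a matter of convention-matching and diagram bookkeeping, keeping careful track of the reordering braidings (into which the spurious $\cb_{A_1,U_j}$-component of the half-braiding must be absorbed) together with the associativity and unit constraints of $\otimes_{A_1}$, and checking that the resulting isomorphism is natural in both $G$ and $H$ so that it is compatible with the direct sum over $i,j$. Semisimplicity itself is used only to collapse the coend \eqref{eq:relNat-multspace} to a direct sum and to enumerate the simple objects of \ZC; the identification of the multiplicity spaces is otherwise structural. Granting it, \eqref{eq:deffs=relNat} follows.
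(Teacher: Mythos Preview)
Your proposal is correct and follows essentially the same route as the paper's proof: both reduce to identifying the multiplicity spaces $\Hom_{\Funre_\C(\calm_1,\calm_2)}\big((U_i\ootimes U_j)\act G,H\big)$ with $\mathrm Z_{i,j}^{B^G\!,B^H}$, and both do so by showing that the $A_1$-$A_2$-bimodule underlying $(U_i\ootimes U_j)\act G$ is isomorphic to $U_i\,{\otimes^+}\,B^G\,{\otimes^-}\,U_j$ via the reordering braiding $\id_{U_i}\oti\cb_{B^G,U_j}^{-1}$. The only cosmetic difference is your starting point (the coend presentation \eqref{eq:relNat-multspace}) versus the paper's (the internal Hom adjunction \eqref{eq:Rex-ZC} directly), which are equivalent in the semisimple setting.
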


\begin{proof}
The adjunction defining internal natural transformations as an internal Hom can be written as
  \be
  \Hom_\ZC(U_i\ootimes U_j,\relNat(G,H)) \cong
  \Hom_{\Funre_\calc(\calm_1,\calm_2)}((U_i\ootimes U_j)\act G,H) \,.
  \ee
The functor underlying the module functor $(U_i\ootimes U_j)\act G$ 
is tensoring with the $A_1\text{-}A_2$-bimodule that is defined on the object
$U_i\oti U_j\oti B^G\iN\calc$ with right action given by the right $A_2$-action
$\rho_{\mathrm r}^{B^G}$ on $B^G$ and left $A_1$-action given by
$(\id_{U_i\otimes U_j}\oti\rho_{\mathrm l}^{B^G}) \cir(\cB_{U_i\oti U_j;A_1}\oti\id_{B^G})$.
The isomorphism $\id_{U_i}\oti c_{B^G,U_j}^{-1}$ exhibits that this bimodule is isomorphic
to the $A_1\text{-}A_2$-bimodule $U_i\otimes^+B^G\otimes^-U_j$. As a consequence we have
  \be
  \Hom_\ZC(U_i\ootimes U_j,\relNat(G,H))
  \cong \Hom_{A_1|A_2}^{}(U_i\,{\otimes^+} B^G {\otimes^-}\, U_j\,, B^H) \,.
  \ee
This correctly reproduces the spaces
\eqref{eq:ZijGH} and thus gives the correct object of defect fields.
\end{proof}

It follows in particular that for semisimple \C\ and $\M_1 \eq \M_2 \,{=:}\, \M$,
the defect fields possess all the properties listed in Theorem 5.23 of \cite{fuRs4}.
By taking in addition $G \eq H \eq \Id_\M$ we arrive at the following

\begin{Cor}
For semisimple \C\ the object of bulk fields coincides with the internal natural
endotransformations of the identity functor: 
  \be
  \FF = \relNat(\Id_\M,\Id_\M) \cong 
  \bigoplus_{i,j\in I_\C} \mathrm Z_{i,j}^{} \,{\otimes_\complex}\, U_i\ootimes U_j
  \label{eq:FFssi1}
  \ee
with
  \be
  \mathrm Z_{ij}^{} = \Hom_{A|A}(U_i \,{\otimes^+}\! A \,{\otimes^-} U_j, A) \,.
  \label{eq:Zij}
  \ee
\end{Cor}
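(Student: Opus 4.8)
The plan is to obtain the Corollary as the special case $\M_1 \eq \M_2 \,{=:}\, \M$ and $G \eq H \eq \Id_\M$ of the preceding Proposition. By the Proposal (equation \eqref{FF=relNat}) the object of bulk fields is $\FF \eq \relNat(\Id_\M,\Id_\M)$, so it suffices to evaluate the right-hand side of \eqref{eq:deffs=relNat} for this choice of data and to identify the resulting multiplicity spaces with \eqref{eq:Zij}. Note that $\Id_\M$ is trivially right exact, hence lies in $\Funre_\C(\M,\M)$, so the Proposition does apply.

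First I would specialize the data entering the Proposition. With $\M_1 \eq \M_2 \eq \M \,{\simeq}\, \Amod$ the two Frobenius algebras coincide, $A_1 \eq A_2 \eq A$, and the bimodules $B^G,B^H$ of \eqref{eq:BG,BH} become $A$-$A$-bimodules. The one remaining point is to determine which bimodule represents the functor $\Id_\M$ under the equivalence \eqref{eq:BG,BH}. The identity endofunctor of $\Amod$ is canonically isomorphic to the functor $-\,{\otimes_A}\,A$ of tensoring with the regular bimodule, since $A$ is the monoidal unit of the category of $A$-$A$-bimodules (with tensor product $\otimes_A$), and this monoidal category acts on $\Amod$ via $-\,{\otimes_A}\,-$. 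Hence $B^{\Id_\M} \eq A$, equipped with its regular left and right $A$-actions.

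With this identification in hand, substituting $B^G \eq B^H \eq A$ (and $A_1 \eq A_2 \eq A$) into formula \eqref{eq:ZijGH} for the multiplicity spaces yields
\be
\mathrm Z_{i,j}^{A,A} = \Hom_{A|A}^{}(U_i\,{\otimes^+} A\,{\otimes^-}\,U_j,\,A) \,,
\ee
which is exactly the space \eqref{eq:Zij} once one writes $\mathrm Z_{ij} \,{:=}\, \mathrm Z_{i,j}^{A,A}$. Inserting this back into the direct-sum decomposition \eqref{eq:deffs=relNat}, specialized to $G \eq H \eq \Id_\M$, then produces \eqref{eq:FFssi1}, completing the argument.

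I expect no genuine obstacle here, since the statement is a direct specialization of the Proposition; the only substantive step is the bookkeeping identification $B^{\Id_\M} \eq A$. The mildest care is needed in verifying that the left and right $A$-actions carried by the bimodule representing $\Id_\M$ are the regular ones, so that the twisted tensor product $U_i\,{\otimes^+} A\,{\otimes^-}\,U_j$ is formed with the correct braided actions recalled after \eqref{eq:ZijGH}; this, however, is immediate from the construction of $\otimes^\pm$ and the fact that the regular bimodule carries precisely the multiplication actions.
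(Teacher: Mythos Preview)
Your proposal is correct and follows precisely the paper's own approach: the Corollary is obtained by specializing the preceding Proposition to $\M_1 \eq \M_2 \,{=:}\, \M$ and $G \eq H \eq \Id_\M$. You have made explicit the one bookkeeping step the paper leaves implicit, namely that $B^{\Id_\M} \eq A$ as a regular $A$-$A$-bimodule, which is exactly what is needed to turn \eqref{eq:ZijGH} into \eqref{eq:Zij}.
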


Since this reproduces the description of bulk fields in \cite{fuRs4}, we can
in particular conclude that in the semisimple case the bulk fields $\relNat(\Id_\M,\Id_\M)$
have the properties listed in Theorem 5.1 of \cite{fuRs4}, including notably
modular invariance of the torus partition function.


\subsection{Consistency check: Operator product of defect fields}

In this section we show that our proposal gives rise to the correct operator products of
defect fields, and thus in particular of bulk fields.
We first note that by the results of \Cite{Sect.\,4.3}{fuSc25} all required horizontal 
and vertical compositions exist and are associative.
Next we use the fact that a modular tensor category $\calc$ is also unimodular. This allows
us to apply \Cite{Cor.\,19}{fuSc25} to conclude that, given two indecomposable pivotal
\C-module categories $\M_1$ and $\M_2$ describing full local conformal field theories with
the same chiral data \C, the \ZC-mo\-du\-le category $\Funre_\C(\M_1,\M_2)$ is again
a pivotal module category. It then follows \Cite{Thm.\,3.15}{shimi20} that all algebras 
$\relNat(G,G)$ are symmetric Frobenius algebras. This shows that we have indeed found 
very natural candidates for defect fields and all OPEs involving defect fields.

\begin{Rem}
Besides the expression \eqref{eq:FFssi1} there is an alternative description of the bulk fields
in the semisimple case: In this case the end in the formula \eqref{eq:iNat=end} is a direct
sum over a set $I_\M$ of representatives $M_\kappa$ for the isomorphism classes of simple
objects of the semisimple module category \M, so that for $G \eq H \eq \Id_\M$ we obtain
  \be
  \FF \cong \bigoplus_{\kappa\in I_\M} \iHom(M_\kappa,M_\kappa)
  \stackrel{\eqref{eq:iHom-otA}}= \bigoplus_{\kappa\in I_\M} M_\kappa \otimes_A\! \Vee M_\kappa \,.
  \label{eq:FFssi2}
  \ee
We conclude in particular that the objects in \ZC\ on the right hand sides of 
\eqref{eq:FFssi1} and \eqref{eq:FFssi2} are isomorphic.
\end{Rem}

To obtain the operator products of defect fields turns out to be quite straightforward. The
crucial observation is that owing
to the fact that $\Funre_\C(\M_1,\M_2)$ is a module category over \ZC\ -- with action given
in \eqref{eq:z.G} -- we can study defect fields fully parallel to the treatment of boundary 
fields in Section \ref{sec:bc+bf}, by simply replacing the role of \C\ by the Drinfeld 
center \ZC\ and the role of the \C-module category \M\ whose objects are boundary conditions 
by the \ZC-module category $\Funre_\C(\M_1,\M_2)$ whose objects are defect conditions.
In particular, by the result \eqref{eq:relNat-multspace} the multiplicity space 
for defect fields changing an $\M_1$-$\M_2$-defect condition $G \iN \Funre_\C(\M_1,\M_2)$
to $H \iN \Funre_\C(\M_1,\M_2)$ and of chiral type given by an object $z \iN \ZC$ 
is the morphism space $\Hom_{\Funre_\C(\M_1,\M_2)}(z\act G,H)$. Accordingly we denote
defect fields changing the defect type from $G$ to $H$ and of chiral type $z$ by
$\Phi^{G,H;\beta}_{z}$ with $z \iN \ZC$ and $\beta \iN \Hom_{\Funre_\C(\M_1,\M_2)}(z\act G,H)$.
Furthermore we can then make use of the adjunction \eqref{eq:Rex-ZC}
to relate the morphism $\beta$ to a morphism $\widetilde\beta \iN \Hom_{\ZC}(z,\iHom(G,H))$
analogously as in the relation \eqref{eq:pic2} for boundary fields, yielding the description
     \def\locpx  {1.5}
     \def\locpy  {3.9}
  \be
  \Phi^{G,H;\beta}_{z} ~~\equiv\quad
  \raisebox{-5.3em} {\begin{tikzpicture}
  \coordinate (cooal) at (0.9*\locpx,0.5*\locpy) ;
  \coordinate (coom1) at (\locpx,0) ;
  \coordinate (coom2) at (\locpx,\locpy) ;
  \coordinate (cooc)  at (0,0) ;
  \scopeArrow{0.59}{\arrowDefect}
  \draw[\colorZ,line width=\widthObj,postaction={decorate}]
       (cooc) node[below]{$z$} [out=90,in=245] to (cooal) ;
  \end{scope}
  \scopeArrow{0.19}{\arrowDefect} \scopeArrow{0.92}{\arrowDefect}
  \draw[\colorDefect,line width=\widthObj,postaction={decorate}]
       (coom1) node[below]{$G$} [out=100,in=270] to (cooal)
       [out=90,in=250] to (coom2) node[above]{$H$} ;
  \end{scope} \end{scope}
  \node[line width=\widthboxline,fill=blue!10,draw=\colorDeffield] at (cooal)
       {$\,\scriptstyle\beta\,$} ;
  \end{tikzpicture}
  }
  ~\quad = \quad
  \raisebox{-5.3em} {\begin{tikzpicture}
  \coordinate (cooal) at (0.16*\locpx,0.35*\locpy) ;
  \coordinate (cooev) at (0.9*\locpx,0.7*\locpy) ;
  \coordinate (coom1) at (\locpx,0) ;
  \coordinate (coom2) at (\locpx,\locpy) ;
  \coordinate (cooc)  at (0,0) ;
  \scopeArrow{0.22}{\arrowDefect}
  \scopeArrow{0.76}{\arrowDefect}
  \draw[\colorZ,line width=\widthObj,postaction={decorate}]
       (cooc) node[below]{$z$} [out=90,in=210] to
       node[rotate=40,left,very near end,yshift=6pt] {$\scriptstyle\DD^{G,H}$} (cooev) ; 
  \end{scope} \end{scope}
  \scopeArrow{0.19}{\arrowDefect} \scopeArrow{0.95}{\arrowDefect}
  \draw[\colorDefect,line width=\widthObj,postaction={decorate}]
       (coom1) node[below]{$G$} [out=100,in=270] to (cooev)
       [out=90,in=250] to (coom2) node[above]{$H$} ;
  \end{scope} \end{scope}
  \node[line width=0.7*\widthboxline,fill=blue!10,draw=purple] at (cooev)
       {$\scriptstyle\ev_{G,H}$} ;
  \node[rotate=-20,rounded corners,line width=\widthboxline,fill=blue!10,draw=\colorDeffield]
       at (cooal) {$\scriptstyle\widetilde\beta$} ;
  \end{tikzpicture}
  }
  \label{eq:pic:DD}
  \ee
Concerning the precise interpretation of this equality, analogous considerations as in the 
case of \eqref{eq:pic1} and \eqref{eq:bdyOPEformula} apply: We can think of it alternatively
as an equality of morphisms in \ZC, an equality of morphisms in $\Funre_\C(\M_1,\M_2)$ or, in 
the semisimple case in which we can invoke the connection with three-dimensional topological 
field theory \cite{fuRs4}, as an equality of invariants of ribbon graphs. Concerning the latter
interpretation, recall that the chiral parts of the ribbon graphs are contained in the
complement of the embedded world sheet in the relevant three-manifold (or, in more fancy
terms, in the holographic direction of the three-manifold). It is worth noting that these
parts must now be labeled by an object of \ZC\ that has the factorized form
$z \eq c \,{\ootimes}\, c'$ (which in the 
semisimple case is no loss of generality), compare e.g.\ the picture (4.38) in \cite{fuRs10}.

Finally, the vertical operator product of defect fields is obtained from the canonical
associative composition of internal Homs analogously as the boundary OPE \eqref{eq:bdyOPE}:
     \def\locpx  {1.5}
     \def\locpy  {5.7}
  \be
  \raisebox{-7.6em} {\begin{tikzpicture}
  \coordinate (cooal1) at (0.11*\locpx,0.18*\locpy) ;
  \coordinate (cooal2) at (-0.7*\locpx,0.36*\locpy) ;
  \coordinate (cooev1) at (0.9*\locpx,0.47*\locpy) ;
  \coordinate (cooevb) at (0.81*\locpx,0.47*\locpy) ;
  \coordinate (cooev2) at (0.9*\locpx,0.78*\locpy) ;
  \coordinate (coom1)  at (\locpx,0) ;
  \coordinate (coom2)  at (\locpx,\locpy) ;
  \coordinate (cooc1)  at (0,0) ;
  \coordinate (cooc2)  at (-\locpx,0) ;
  \scopeArrow{0.18}{\arrowDefect} \scopeArrow{0.74}{\arrowDefect}
  \draw[\colorZ,line width=\widthObj,postaction={decorate}] (cooc1) node[below]{$z_1$}
       [out=90,in=225] to node[rotate=40,left,very near end,yshift=4pt,xshift=-2pt]
       {$\DD^{G_1,G_2}$} (cooevb) ; 
  \end{scope} \end{scope}
  \scopeArrow{0.18}{\arrowDefect} \scopeArrow{0.57}{\arrowDefect}
  \draw[\colorZ,line width=\widthObj,postaction={decorate}] (cooc2) node[below]{$z_2$}
       [out=90,in=210] to node[rotate=45,left,near end,yshift=8pt,xshift=19pt]
       {$\DD^{G_2,G_3}$} (cooev2) ; 
  \end{scope} \end{scope}
  \scopeArrow{0.15}{\arrowDefect} \scopeArrow{0.64}{\arrowDefect}
  \scopeArrow{0.95}{\arrowDefect}
  \draw[\colorDefect,line width=\widthObj,postaction={decorate}]
       (coom1) node[below]{$G_1$} [out=105,in=270] to (cooev1) -- node[right=-1pt,yshift=3pt]
       {$G_2$} (cooev2) [out=90,in=258] to (coom2) node[above=-1pt]{$G_3$} ;
  \end{scope} \end{scope} \end{scope}
  \node[line width=0.7*\widthboxline,fill=blue!10,draw=purple] at (cooev1)
       {$\scriptstyle\iev_{G_1,G_2}$} ;
  \node[line width=0.7*\widthboxline,fill=blue!10,draw=purple] at (cooev2)
       {$\scriptstyle\iev_{G_2,G_3}$} ;
  \node[rotate=-20,rounded corners,line width=\widthboxline,fill=blue!10,draw=\colorDeffield]
       at (cooal1) {$\scriptstyle\widetilde\beta_1$} ;
  \node[rotate=-24,rounded corners,line width=\widthboxline,fill=blue!10,draw=\colorDeffield]
       at (cooal2) {$\scriptstyle\widetilde\beta_2$} ;
  \end{tikzpicture}
  }
  ~\qquad = \qquad
  \raisebox{-7.6em} {\begin{tikzpicture}
  \coordinate (cooal1) at (0.01*\locpx,0.21*\locpy) ;
  \coordinate (cooal2) at (-1.01*\locpx,0.21*\locpy) ;
  \coordinate (cooev)  at (0.9*\locpx,0.77*\locpy) ;
  \coordinate (cooevb) at (0.81*\locpx,0.77*\locpy) ;
  \coordinate (coom1)  at (\locpx,0) ;
  \coordinate (coom2)  at (\locpx,\locpy) ;
  \coordinate (coomu)  at (-0.5*\locpx,0.46*\locpy) ;
  \coordinate (cooc1)  at (0,0) ;
  \coordinate (cooc2)  at (-\locpx,0) ;
  \scopeArrow{0.21}{\arrowDefect} \scopeArrow{0.81}{\arrowDefect}
  \draw[\colorZ,line width=\widthObj,postaction={decorate}]
       (cooc1) node[below] {$z_1$} [out=90,in=-25]
       to node[sloped,near end,above=-1pt,xshift=1pt] {$\DD^{G_1,G_2}$} (coomu) ;
  \draw[\colorZ,line width=\widthObj,postaction={decorate}]
       (cooc2) node[below] {$z_2$} [out=90,in=205]
       to node[sloped,near end,above=-2pt,xshift=1pt] {$\DD^{G_2,G_3}$} (coomu) ;
  \end{scope} \end{scope}
  \scopeArrow{0.69}{\arrowDefect}
  \draw[\colorZ,line width=\widthObj,postaction={decorate}] (coomu) [out=90,in=250]
       to node[sloped,midway,above=-3pt,xshift=-5pt] {$\DD^{G_1,G_3}$} (cooevb) ;
  \end{scope}
  \scopeArrow{0.24}{\arrowDefect} \scopeArrow{0.95}{\arrowDefect}
  \draw[\colorDefect,line width=\widthObj,postaction={decorate}]
       (coom1) node[below]{$G_1$} [out=105,in=270] to (cooev) 
       [out=90,in=258] to (coom2) node[above=-1pt]{$G_3$} ;
  \end{scope} \end{scope}
  \node[line width=0.7*\widthboxline,fill=blue!10,draw=purple] at (cooev)
       {$\scriptstyle\iev_{G_1,G_3}$} ;
  \node[rotate=3,rounded corners,line width=\widthboxline,fill=blue!10,draw=\colorDeffield]
       at (cooal1) {$\scriptstyle\widetilde\beta_1$} ;
  \node[rotate=-3,rounded corners,line width=\widthboxline,fill=blue!10,draw=\colorDeffield]
       at (cooal2) {$\scriptstyle\widetilde\beta_2$} ;
  \node[rounded corners,line width=\widthboxline,fill=blue!10,draw=\colorBdyfield]
       at (coomu) {$\scriptstyle\imu$} ;
  \end{tikzpicture}
  }
  \label{eq:defOPE}
  \ee
Formulated as an OPE, this equality states that the operator product of the defect fields 
$\Phi^{G_2,G_3;\beta_2}_{z_2}$ and $\Phi^{G_1,G_2;\beta_1}_{z_1}$
is the defect field $\Phi^{G_1,G_3;\beta}_{z_2\otimes z_1}$, where 
$\otimes \,{\equiv}\, \otimes_{\ZC}$ is the tensor product of objects in the Drinfeld center 
and $\beta$ is the morphism in $\Hom_{\Funre_\C(\M_1,\M_2)}((z_2\oti z_1)\act G_1,G_3)$ 
that under the internal Hom adjunction \eqref{eq:Rex-ZC} corresponds to the composite morphism
  \be
  \widetilde\beta := \imu_{G_1,G_2,G_3}^{} \circ (\widetilde\beta_2 \oti \widetilde\beta_1)
  \in \Hom_{\ZC}(z_2\oti z_1,\DD^{G_1,G_3}) \,.
  \label{eq:defOPEformula}
  \ee
In particular, the OPE of bulk fields $\Phi^{\beta_1}_{z_1}$ and $\Phi^{\beta_2}_{z_2}$
is graphically represented by
     \def\locpx  {1.5}
     \def\locpy  {4.5}
  \be
  \raisebox{-5.6em} {\begin{tikzpicture}
  \coordinate (cooal1) at (0,0.28*\locpy) ;
  \coordinate (cooal2) at (-\locpx,0.28*\locpy) ;
  \coordinate (cooev1) at (0,0.67*\locpy) ;
  \coordinate (cooev2) at (-\locpx,0.67*\locpy) ;
  \coordinate (cooc1)  at (0,0) ;
  \coordinate (cooc2)  at (-\locpx,0) ;
  \scopeArrow{0.20}{\arrowDefect} \scopeArrow{0.75}{\arrowDefect}
  \draw[\colorZ,line width=\widthObj,postaction={decorate}]
       (cooc1) node[below]{$z_1$} -- node[sloped,near end,below] {$\FF$} (cooev1) ; 
  \draw[\colorZ,line width=\widthObj,postaction={decorate}]
       (cooc2) node[below]{$z_2$} -- node[sloped,near end,above] {$\FF$} (cooev2) ; 
  \end{scope} \end{scope}
  \node[line width=0.7*\widthboxline,fill=blue!10,draw=purple] at (cooev1)
       {$\scriptstyle\ievF$} ;
  \node[line width=0.7*\widthboxline,fill=blue!10,draw=purple] at (cooev2)
       {$\scriptstyle\ievF$} ;
  \node[rounded corners,line width=\widthboxline,fill=blue!10,draw=\colorBulkfield]
       at (cooal1) {$\scriptstyle\widetilde\beta_1$} ;
  \node[rounded corners,line width=\widthboxline,fill=blue!10,draw=\colorBulkfield]
       at (cooal2) {$\scriptstyle\widetilde\beta_2$} ;
  \end{tikzpicture}
  }
  \qquad = \qquad
  \raisebox{-5.6em} {\begin{tikzpicture}
  \coordinate (cooal1) at (0,0.28*\locpy) ;
  \coordinate (cooal2) at (-\locpx,0.28*\locpy) ;
  \coordinate (cooev)  at (-0.5*\locpx,\locpy) ;
  \coordinate (cooc1)  at (0,0) ;
  \coordinate (cooc2)  at (-\locpx,0) ;
  \coordinate (coomu)  at (-0.5*\locpx,0.66*\locpy) ;
  \scopeArrow{0.61}{\arrowDefect}
  \draw[\colorZ,line width=\widthObj,postaction={decorate}]
       (coomu) -- node[sloped,midway,above=-1pt,xshift=5pt] {$\FF$} (cooev) ; 
  \end{scope}
  \scopeArrow{0.18}{\arrowDefect} \scopeArrow{0.71}{\arrowDefect}
  \draw[\colorZ,line width=\widthObj,postaction={decorate}]
       (cooc1) node[below]{$z_1$} -- (cooal1)
       [out=90,in=-25] to node[sloped,midway,above=-6pt,yshift=5pt] {$\FF$} (coomu) ;
  \draw[\colorZ,line width=\widthObj,postaction={decorate}]
       (cooc2) node[below]{$z_2$} -- (cooal2)
       [out=90,in=205] to node[sloped,midway,above=-1pt,xshift=1pt] {$\FF$} (coomu) ;
  \end{scope} \end{scope}
  \node[line width=0.7*\widthboxline,fill=blue!10,draw=purple] at (cooev) {$\scriptstyle\ievF$} ;
  \node[rounded corners,line width=\widthboxline,fill=blue!10,draw=\colorBulkfield]
       at (cooal1) {$\scriptstyle\widetilde\beta_1$} ;
  \node[rounded corners,line width=\widthboxline,fill=blue!10,draw=\colorBulkfield]
       at (cooal2) {$\scriptstyle\widetilde\beta_2$} ;
  \node[rounded corners,line width=\widthboxline,fill=blue!10,draw=\colorBdyfield]
       at (coomu) {$\scriptstyle\imu^\FF$} ;
  \end{tikzpicture}
  }
  \label{eq:bulkOPE}
  \ee
with multiplication $\imu^\FF \,{\equiv}\, \imu{}_{\Id,\Id,\Id}\colon \FF \oti \FF \To \FF$ and
evaluation morphism 
  \be
  \ievF \equiv \iev_{\Id,\Id} :\quad \FF\act\Id \to \Id \,.
  \ee
Here $\Id \,{\equiv}\, \Id_\M$ is the identity module functor on the \C-module category \M, 
which is the monoidal unit of $\Funre_\C(\M,\M)$. The defect line labeled by $\Id$ is thus 
\emph{transparent}, and accordingly is not drawn in the picture (nor in any of the pictures 
below). Note that when interpreting (in the semisimple case) the pictures \eqref{eq:bulkOPE} 
in terms of ribbon graphs, now the whole graphs except for the evaluation morphisms $\iev$ are
contained in the complement of the world sheet in the relevant three-manifold
(i.e., the \emph{connecting manifold} as defined in Section 5.1 of \cite{fuRs4}
and Section3.1.2 of \cite{fuRs10}).

It is worth stressing that the description \eqref{eq:defOPEformula} of the OPE of
defect fields is, again in full analogy with the boundary OPE \eqref{eq:bdyOPEformula},
relative to the tensor product in \ZC. In case \C\ is semisimple, we can restrict our
attention to simple objects of \ZC\ as chiral labels. Using that the isomorphism
classes of the latter are represented by $U_i^{} \,{\ootimes}\, U_i'$ with $i,i'\iN I_\C$,
we then get the analogue
  \be
  \Phi^{G_2,G_3;\beta_2}_{U_j^{}\ootimes U_j'} * \Phi^{G_1,G_2;\beta_1}_{U_i^{}\ootimes U_i'}
  = \sum_\beta \sum_{k,k'\in I_\C}\sum_{\lambda,\lambda'}
  C^{G_1,G_2,G_3;\beta_1,\beta_2;\beta}_{j,i;k,\lambda;j',i';k',\lambda'}
  \, \Phi^{G_1,G_3;\beta}_{U_k^{}\ootimes U_k'} 
  \ee
of the semisimple boundary OPE \eqref{eq:bdyOPEssi}, where the summations range over bases
of morphisms $\beta$ in $\Hom_{\Funre_\C(\M_1,\M_2)}((U_k^{}\,{\ootimes}\, U_k') \act G_1,G_3)$
as well as $\lambda$ and $\lambda'$ in the multiplicity spaces $\Hom_\C(U_j\oti U_i,U_k)$ and 
$\Hom_\C(U_j'\oti U_i',U_k')$, respectively.
The coefficients $C$ appearing here are the structure constants for the product
$\imu \,{\equiv}\, \imu{}_{G_1,G_2,G_3}$. As discussed in Section 2.2 of \cite{fuRs10},
they can be regarded as generalized fusing matrices, analogous to the interpretation 
of the coefficients in \eqref{eq:bdyOPEssi} as (mixed) 6j-symbols.


\subsection{Consistency check: Bulk-boundary operator product}

Recall from \eqref{eq:3opes} that in the setting of \cite{lewe3}
besides the bulk OPE and boundary OPE the third building block of a full CFT is the
bulk-boundary OPE. It is worth pointing out that in this case the terminology operator
`product' is somewhat of a misnomer, as one deals with one input and one output field, 
rather than with two inputs and one output. Still, the terminology is justified, as we 
are free to take a trivial boundary field, with chiral label the monoidal unit $\one_\C$, 
as a second input factor. 

We now show that our proposal naturally leads to an expression for the bulk-boundary OPE as 
well. This OPE captures the situation that a bulk field is moved close to a segment of the
boundary, whereby it induces a boundary field on that segment. If the boundary segment is
labeled by the boundary condition $m \iN \M$,
then the relevant boundary algebra is $\BB^{m.m} \eq \iHom(m,m)$, so that the induced 
boundary field is of type $\Psi^{m,m;\alpha}_c$. The chiral label $c$ of this field is an
object in \C\ that is completely determined by $\FF$: the object $\dot{\FF} \,{:=}\, \UC(\FF)$ 
with $\UC\colon \ZC \,{\to}\, \C$ the forgetful functor, as defined before \eqref{eq:UC.GC}. 

It remains to specify the relevant morphism $\alpha \,{\equiv}\, \alpha^\FF_m$ in
$\Hom_\M(\dot\FF \Act m,m)$ that realizes the bulk-boundary OPE displayed in \eqref{eq:2opes}.
Our proposal provides a distinguished candidate for this morphism: the $m$-component of the 
module natural transformation $\ievF\colon \FF\act\Id \To \Id$. We postulate that this 
is indeed the appropriate morphism, i.e.\ that
  \be
  \alpha^\FF_m = \big(\ievF\big)_m \,.
  \ee
According to the description \eqref{eq:pic2} of boundary fields -- which comes, via the
identity \eqref{eq:tworems2}, from the inner Hom adjunction -- there then exists a unique
morphism $\widetilde\alpha^\FF_m$ in $\Hom_\C(\FF,\BB^{m,m})$ such that
  \be
  (\ievF)_m = \iev_{m,m} \circ (\widetilde\alpha^\FF_m \act \id_m) 
  \label{eq:ievFm=iev_mm.(tildealpha.idm)}
  \ee
or, pictorially,
     \def\locpx  {1.5}
     \def\locpy  {4.1}
     \def\widthDinatmor {0.43}
  \be
  \raisebox{-6.2em} {\begin{tikzpicture}
  \coordinate (cooev1) at (0,0.38*\locpy) ;
  \coordinate (cooev2) at (0.9*\locpx,0.75*\locpy) ;
  \coordinate (cooc)   at (0,-0.1) ;
  \coordinate (coom1)  at (\locpx,-0.15) ;
  \coordinate (coom2)  at (0.95*\locpx,1.04*\locpy) ;
  \scopeArrow{0.55}{\arrowDefect}
  \draw[\colorC,line width=\widthObj,postaction={decorate}]
       (cooc) node[below]{$\dot\FF$} -- (cooev1) ;
  \end{scope}
  \draw[\colorC,line width=\widthObj,postaction={decorate}] (cooev1) [out=90,in=238] to (cooev2) ;
  \scopeArrow{0.22}{\arrowDefect} \scopeArrow{0.95}{\arrowDefect}
  \draw[\colorM,line width=\widthObj,postaction={decorate}]
       (coom1) node[below=1pt]{$m$} [out=100,in=267] to (cooev2) 
       [out=83,in=261] to (coom2) node[above=-1pt]{$m$} ;
  \end{scope} \end{scope}
  \node[line width=0.7*\widthboxline,fill=blue!10,draw=purple] at (cooev2)
       {$\scriptstyle(\ievF)_m$} ;
  \end{tikzpicture}
  }
  \qquad = \qquad~
  \raisebox{-6.2em} {\begin{tikzpicture}
  \coordinate (coodin) at (0,0.38*\locpy) ;
  \coordinate (cooev1) at (0,0.38*\locpy) ;
  \coordinate (cooev2) at (0.9*\locpx,0.75*\locpy) ;
  \coordinate (cooc)   at (0,-0.1) ;
  \coordinate (coom1)  at (\locpx,-0.15) ;
  \coordinate (coom2)  at (0.95*\locpx,1.04*\locpy) ;
  \scopeArrow{0.45}{\arrowDefect}
  \draw[\colorC,line width=\widthObj,postaction={decorate}]
       (cooc) node[below]{$\dot\FF$} -- (cooev1) ;
  \end{scope}
  \scopeArrow{0.55}{\arrowDefect}
  \draw[\colorC,line width=\widthObj,postaction={decorate}] (cooev1)
       [out=90,in=238] to node[sloped,midway,above=-4pt,xshift=-7pt] {$\BB^{m.m}$} (cooev2) ; 
  \end{scope}
  \scopeArrow{0.22}{\arrowDefect} \scopeArrow{0.95}{\arrowDefect}
  \draw[\colorM,line width=\widthObj,postaction={decorate}]
       (coom1) node[below=1pt]{$m$} [out=100,in=267] to (cooev2) 
       [out=83,in=261] to (coom2) node[above=-1pt]{$m$} ;
  \end{scope} \end{scope}
  \node[line width=0.7*\widthboxline,fill=blue!10,draw=blue] at (cooev2)
       {$\scriptstyle\iev_{m,m}$} ;
  \begin{scope}[shift={(coodin)},rotate=180] \Dinatmor \end{scope}
  \node[yshift=-7pt] at (coodin) {$\scriptstyle \widetilde\alpha^\FF_m$} ;
  \end{tikzpicture}
  }
  \label{eq:ev=evmm.iotam}
  \ee
  
Now in view of the expressions \eqref{eq:BB=iHom} and \eqref{FF=relNat} for the boundary and
bulk algebras, our proposal directly provides a distinguished morphism in the space
$\Hom_\C(\dot\FF,\BB^{m,m})$, for every $m \iN \M$, namely the structure morphism
  \be
  \dot\FF = \UC(\relNat(\Id_\M,\Id_\M)) \,\cong\, \UC\Big( \int_{\!m'\in\M} \iHom_\M(m',m')
  \Big)
  \rarr{\,\iota_m\,}\, \iHom(m,m) = \BB^{m,m} 
  \label{eq:iota2}
  \ee
of the end (which is naturally a morphism in \C).

The following considerations show that the morphism $\widetilde\alpha^\FF_m$ is indeed 
given by this structure morphism, i.e.\ that
  \be
  \widetilde\alpha^\FF_m = \iota_m \,.
  \ee
In fact, this boils down to the compatibility between the internal Hom adjunctions for the
module categories \M\ over \C\ and $\Funre_\C(\M,\M)$ over \ZC. In more detail,
first note that under the isomorphism
  \be
  \Hom_{\Funre_\C(\M,\M)}(\FF\act \Id,\Id) \rarr{~\cong~} \Hom_\ZC(\FF,\FF)
  \ee
that is the case $z \eq \FF$ of the internal Hom adjunction
$\Hom_{\Funre_\C(\M,\M)}(z\act \Id,\Id) \,{\rarr{\cong}}\, \Hom_\ZC(z,\FF)$
for the \ZC-module category $\Funre_\C(\M,\M)$, the module natural transformation 
$\ievF\colon \FF\act\Id \,{\to}\, \Id$ is mapped to the identity morphism $\id_\FF$ in \ZC.
Further, by the description of the bulk algebra $\FF$ as the end $\int_{\!m'\in\M} 
\iHom_\M(m',m')$, any morphism $f\colon z\,{\to}\,\FF$ in \ZC\ amounts to a dinatural family
$\iota_m \cir f\colon \dot z \,{\to}\, \iHom_\M(m,m)$, for $m \iN \M$, of morphisms in \C,
and thus in particular the morphism $\id_\FF$ in \ZC\ amounts to the family $\{\iota_m\}$
of structure morphisms of the end itself. Finally, according to the proof of Theorem 9
of \cite{fuSc25}, under the internal Hom 
adjunction $\Hom_\C(\dot\FF,\iHom_\M(m,m)) \,{\rarr{\cong}}\, \Hom_\M(\dot\FF\act m,m)$
of the \C-module category \M, the structure morphism $\iota_m$ is mapped to the 
$m$-component of the natural transformation $\ievF$. Put together, this means that we have
  \be
  (\ievF)_m = \iev_{m,m} \circ (\iota_m \act \id_m) \,,
  \ee
and thus indeed by identifying $\widetilde\alpha^\FF_m$ with $\iota_m$ we satisfy the
condition \eqref{eq:ievFm=iev_mm.(tildealpha.idm)} which fully characterizes 
$\widetilde\alpha^\FF_m$.

\medskip

Let us compare our result to what is known about the bulk-boundary OPE in the literature. To 
do so, we first recall from \eqref{eq:UC.GC} that when composed with the functor \GC\ from 
the enveloping category \CbC\ to the center \ZC\ (see \eqref{eq:CbC->ZC}), the forgetful 
functor $\UC$ amounts to taking a tensor product. Thus once again the description of 
the OPE is relative to the tensor product in \C. And again, in case \C\ is semisimple 
we can restrict our attention to chiral labels given by simple objects, i.e., in the 
situation at hand, to simple summands  isomorphic to $U_i^{}\ootimes U_j$ of the bulk
algebra $\FF$.  Denoting the unary OPE by the symbol $*_m$, this yields the formula
  \be
  *_m \! \big( \Phi^{\beta}_{U_i^{}\ootimes U_j} \big) = \sum_{k\in I_\C}
  \sum_{\alpha,\lambda} C^{m;\beta,\alpha}_{i,j;k,\lambda}\, \Psi^{m,m;\alpha}_{U_k}
  \label{eq:*m}
  \ee
for the bulk-boundary OPE, where the $\lambda$-summation is over a basis of the multiplicity
space $\Hom_\C(U_i\oti U_j,U_k)$. (In the literature (see e.g.\ \Cite{Eq.\,(10)}{prss3}), the
OPE is often written in a form like $\Phi^{\beta}_{U_i^{}\ootimes U_j} \,{\sim}\,
\sum_{k,\alpha,\lambda} C^{m;\beta,\alpha}_{i,j;k,\lambda}\, \Psi^{m,m;\alpha}_{U_k}$,
with the symbol $\sim$ reminding of the fact that the bulk field is imagined to approach the
boundary.) The coefficients $C$ in the OPE \eqref{eq:*m} have been introduced in \cite{cale} 
and have been studied extensively in the literature, such as in \cite{prss3,runk3,bppz2,pezu6}
and in Section 4.3 of \cite{fuRs10}. In the semisimple Cardy case the OPE coefficients are, up 
to twist eigenvalues, given by specific 6j-symbols \Cite{Sect.\,4.4}{fffs3}. The derivation 
above shows that these coefficients may be interpreted as encoding the dinatural structure 
morphism $\iota_m$ of the end $\FF \eq \relNat(\Id_\M,\Id_\M)$.

\medskip

One important application of the bulk-boundary OPE is the calculation of the one-bulk
field correlator on the disk, which is also known as a \emph{boundary state}. For 
the situation that the bulk field is incoming, the relevant space of conformal blocks is,
as a morphism space in \C, the space $\Hom_\C(\dot\FF,\one_\C)$ \cite{fgsS,fScS4}. We 
denote the boundary state for a disk with boundary condition 
$m \iN \M$ by $\bs_m \iN \Hom_\C(\dot\FF,\one_\C)$.
We obtain an explicit proposal for $\bs_m$ by considering the equality
\eqref{eq:ev=evmm.iotam} -- which 
refers to a local region of the world sheet (as befits an OPE) -- in the case
that the world sheet is a disk without any further field insertions. In this situation the 
TFT construction of \cite{fuRs4,fuRs10} suggests to describe $\bs_m$ as
     \def\locpx  {1.5}
     \def\locpy  {4.1}
     \def\widthDinatmor {0.33}
  \be
  \bs_m ~ = \qquad
  \raisebox{-7.6em} {\begin{tikzpicture}
  \coordinate (coodin) at (0,0.38*\locpy) ;
  \coordinate (cooev1) at (0,0.38*\locpy) ;
  \coordinate (cooev2) at (0.94*\locpx,0.81*\locpy) ;
  \coordinate (cooc)   at (0,0) ;
  \coordinate (coom1)  at (\locpx,0.4*\locpy) ;
  \coordinate (coom2)  at (\locpx,\locpy) ;
  \coordinate (coom3)  at (2*\locpx,\locpy) ;
  \coordinate (coom4)  at (2*\locpx,0.4*\locpy) ;
  \scopeArrow{0.45}{\arrowDefect}
  \draw[\colorC,line width=\widthObj,postaction={decorate}]
       (cooc) node[below=-1pt]{$\dot\FF$} -- (cooev1) ;
  \end{scope}
  \scopeArrow{0.55}{\arrowDefect}
  \draw[\colorC,line width=\widthObj,postaction={decorate}] (cooev1)
       [out=90,in=238] to node[sloped,midway,above=-4pt,xshift=-7pt] {$\BB^{m.m}$} (cooev2) ; 
  \end{scope}
  \scopeArrow{0.11}{\arrowDefect}
  \draw[\colorM,line width=\widthObj,postaction={decorate}]
       (coom1) -- node[near start,sloped,yshift=-6pt] {$m$} (coom2) ;
  \draw[\colorM,line width=\widthObj,postaction={decorate}] (coom3) -- (coom4) ;
  \end{scope}
  \draw[\colorM,line width=\widthObj]
       (coom2) arc (180:0:0.5*\locpx) (coom4) arc (360:180:0.5*\locpx) ;
  \node[line width=0.7*\widthboxline,fill=blue!10,draw=blue] at (cooev2)
       {$\scriptstyle\iev_{m,m}$} ;
  \begin{scope}[shift={(coodin)},rotate=180] \Dinatmor \end{scope}
  \node[yshift=-5pt] at (coodin) {$\scriptstyle \iota_m$} ;
  \end{tikzpicture}
  }
  \label{eq:pic:bstate}
  \ee
We have indeed sufficient algebraic information to interpret the picture \eqref{eq:pic:bstate} 
as a morphism in \C: Recalling that the multiplication $\imu{}_{m,m,m}^{}$ is the image of 
a morphism of type \eqref{eq:tworems1} under the internal Hom adjunction, we see that
\eqref{eq:pic:bstate} should be given by the composition
  \be
  \dot\FF \rarr{\iota_m} \BB^{m,m}
  \rarr{\id_{\BB^{m,m}} \otimes \underline\eta{}_m^{}} \BB^{m,m} \oti \BB^{m,m}
  \rarr{\imu{}_{m,m,m}^{}} \BB^{m,m} \rarr{\underline\varepsilon{}_m^{}} \id_\C
  \ee
with $\underline\eta{}_m^{} \colon \one_\C \,{\to}\, \iHom(m,\one_\C\Act m) \eq \iHom(m,m)$
the unit of the internal Hom adjunction and
$\underline\varepsilon{}_m^{}$ the counit \eqref{eq:iHom-counit}, and thus
  \be
  \bs_m = \underline\varepsilon{}_m^{} \circ \iota_m \,.
  \label{eq:def-bs}
  \ee
Note that owing to triviality of the relative Serre functor of \M, the counit 
$\underline\varepsilon{}_m^{}$ (which exists because $\BB^{m,m}$ is Frobenius)
coincides with the internal trace $\underline{\,\tr}_m$.
 
We leave a thorough investigation of the proposal \eqref{eq:def-bs} to future work.
Here we only observe that for semisimple \C\ it amounts to the following suggestive
description. Realizing \M\ as the category $\Amod$ of right modules over
a special symmetric Frobenius algebra $A$, according to \eqref{eq:iHom-otA} the boundary 
algebra $\BB^{m,m}_{}$ can be expressed as $m \,{\otimes_A}\, \Vee m$. Further, invoking
Lemma 3.8(a) of \cite{shimi20}
one sees that the internal trace then reduces to $\underline{\tr}_m \eq \tilde\ev_m \cir P_{\!A}$
with $P_{\!A}$ the projector that realizes the tensor product over $A$.\,%
 \footnote{~For the explicit form of $P_{\!A}$ see e.g.\ \Cite{Eq.\,(5.127)}{fuRs4},
 or \Cite{Lemma\,1.21}{kios} in the case of commutative algebras.}
It follows that 
     \def\locpx  {1.5}
     \def\locpy  {4.2}
     \def\widthDinatmor {0.33}
  \be
  \bs_m ~ = \qquad
  \raisebox{-6.6em} {\begin{tikzpicture}
  \coordinate (cooc)   at (0,0) ;
  \coordinate (coodin) at (0,0.41*\locpy) ;
  \coordinate (cooev1) at (0,0.41*\locpy) ;
  \coordinate (coom0)  at (-0.47*\widthDinatmor,0.4*\locpy) ;
  \coordinate (coom1)  at (0.96*\locpx,0.4*\locpy) ;
  \coordinate (coom2)  at (1.3*\locpx,0.9*\locpy+0.5*\locpx) ;
  \coordinate (coom3)  at (1.8*\locpx,0.9*\locpy) ;
  \coordinate (coom4)  at (1.8*\locpx,0.6*\locpy) ;
  \coordinate (cooprj) at (0,0.54*\locpy) ;
  \scopeArrow{0.45}{\arrowDefect}
  \draw[\colorC,line width=\widthObj,postaction={decorate}]
       (cooc) node[below=-1pt]{$\dot\FF$} -- (cooev1) ;
  \end{scope}
  \scopeArrow{0.48}{\arrowDefect}
  \draw[\colorM,line width=\widthObj,postaction={decorate}] (coom3) -- (coom4) ;
  \draw[\colorM,line width=\widthObj,postaction={decorate}] (coom0) -- ++(0,0.4*\locpx)
       [out=90,in=180] to node[midway,sloped,yshift=6pt] {$m$} (coom2) ;
  \end{scope}
  \draw[\colorM,line width=\widthObj] (coom2) arc (90:0:0.5*\locpx)
       (coom4) -- ++(0,-0.2*\locpy) arc (360:180:0.42*\locpx)
       (coom1) -- ++(0,0.2*\locpy) arc (0:180:0.42*\locpx) -- +(0,-0.2*\locpy) ;
  \begin{scope}[shift={(coodin)},rotate=180] \Dinatmor \end{scope}
  \node[yshift=-5pt] at (coodin) {$\scriptstyle \iota_m$} ;
  \node[line width=0.5*\widthboxline,fill=gray!40,draw=black!60,inner sep=2pt]
       at (cooprj) {$\scriptstyle P_{\!A}$} ;
  \end{tikzpicture}
  }
  ~\qquad = \qquad
  \raisebox{-6.6em} {\begin{tikzpicture}
  \coordinate (cooc)   at (0,0) ;
  \coordinate (coodin) at (0,0.41*\locpy) ;
  \coordinate (coom0)  at (-0.47*\widthDinatmor,0.4*\locpy) ;
  \coordinate (coom1)  at (0.47*\widthDinatmor,0.4*\locpy) ;
  \coordinate (coom2)  at (-0.92,0.8*\locpy) ;
  \coordinate (coom3)  at (0.92,0.8*\locpy) ;
  \coordinate (cooprj) at (0,0.54*\locpy) ;
  \scopeArrow{0.45}{\arrowDefect}
  \draw[\colorC,line width=\widthObj,postaction={decorate}]
       (cooc) node[below=-1pt]{$\dot\FF$} -- (cooev1) ;
  \end{scope}
  \scopeArrow{0.32}{\arrowDefect}
  \draw[\colorM,line width=\widthObj,postaction={decorate}]
       (coom0) -- ++(0,0.5*\locpx) [out=90,in=270] to
       node[near end,sloped,yshift=6pt] {$m$} (coom2) arc (180:0:0.92) ;
  \end{scope}
  \draw[\colorM,line width=\widthObj]
       (coom1) -- ++(0,0.5*\locpx) [out=90,in=270] to (coom3) ;
  \begin{scope}[shift={(coodin)},rotate=180] \Dinatmor \end{scope}
  \node[yshift=-5pt] at (coodin) {$\scriptstyle \iota_m$} ;
  \node[line width=0.5*\widthboxline,fill=gray!40,draw=black!60,inner sep=2pt]
       at (cooprj) {$\scriptstyle P_{\!A}$} ;
  \end{tikzpicture}
  }
  ~\qquad = \qquad
  \raisebox{-6.6em} {\begin{tikzpicture}
  \coordinate (cooc)   at (0,0) ;
  \coordinate (coodin) at (0,0.41*\locpy) ;
  \coordinate (coom0)  at (-0.47*\widthDinatmor,0.4*\locpy) ;
  \coordinate (coom1)  at (0.47*\widthDinatmor,0.4*\locpy) ;
  \coordinate (coom2)  at (-0.92,0.8*\locpy) ;
  \coordinate (coom3)  at (0.92,0.8*\locpy) ;
  \scopeArrow{0.45}{\arrowDefect}
  \draw[\colorC,line width=\widthObj,postaction={decorate}]
       (cooc) node[below=-1pt]{$\dot\FF$} -- (cooev1) ;
  \end{scope}
  \scopeArrow{0.28}{\arrowDefect}
  \draw[\colorM,line width=\widthObj,postaction={decorate}]
       (coom0) -- ++(0,0.2*\locpx) [out=90,in=270] to
       node[near end,sloped,yshift=6pt] {$m$} (coom2) arc (180:0:0.92) ;
  \end{scope}
  \draw[\colorM,line width=\widthObj]
       (coom1) -- ++(0,0.2*\locpx) [out=90,in=270] to (coom3) ;
  \begin{scope}[shift={(coodin)},rotate=180] \Dinatmor \end{scope}
  \node[yshift=-5pt] at (coodin) {$\scriptstyle \iota_m$} ;
  \end{tikzpicture}
  }
  \label{eq:pic:bstate-ssi}
  \ee
Here the last equality follows from the specialness of $A$, analogously as e.g.\ 
in \Cite{Rem.\,4.1}{fuRs10}.

We expect that in the Cardy case the right hand side of 
\eqref{eq:pic:bstate-ssi} can be expressed as a partial trace over the canonical
representation morphism $\rho^{\dot\FF_\text{Cardy}}_m$ that for any $m \iN \C$ is
obtained from the double braiding of $c \iN \C$ with $m$ \Cite{Eq.\,(2.49)}{fgsS}. Thereby
$\bs^{\dot\FF_\text{Cardy}}_m$ is interpreted as the \emph{character} of $m$ as 
a $\dot\FF_\text{Cardy}$-module, in agreement with the results of Section 3.2 of \cite{fgsS}.
We also expect that, just like for the Cardy case \cite{fgsS}, the interpretation as a
character survives also for general \M\ beyond semisimplicity.


\subsection{Consistency check: Bulk-boundary compatibility conditions} \label{sec:9e9d}

Next we recall from the Introduction that the bulk-boundary OPE is required to satisfy the
compatibility conditions \eqref{eq:relation9d} and \eqref{eq:relation9e} with the bulk
and boundary algebras. 

Let us first consider the equality \eqref{eq:relation9e}. Algebraically it amounts to the 
statement that for every $m \iN \M$ the structure morphism $\iota_m\colon \dot\FF \,{\rarr~}\,
\BB^{m,m}$ of the end \eqref{eq:iota2} is a \emph{morphism of al\-ge\-bras} in \C, i.e.\ that
it satisfies
  \be
  \imu{}_{m,m,m} \circ (\iota_m \oti \iota_m) = \iota_m \circ \imu \,,
  \label{eq:alg9e}
  \ee
with $\imu{}_{m,m,m}$ the product of the boundary algebra $\BB^{m,m}$ and $\imu$ the product
of the bulk algebra $\FF$. The equality \eqref{eq:alg9e} is indeed satisfied -- as shown in
Proposition 11 of \cite{fuSc25}, it is nothing but the description of the canonical
multiplication on the end $\FF$ in terms of the dinatural family $\{\iota_m\}_{m\in\M}$.

\medskip

Next we analyze the condition \eqref{eq:relation9d}, which states the equality of two
factorizations of the correlator of a disk with one bulk and two boundary insertions. To
understand its algebraic content, we first rephrase this global description as the local
statement that moving a bulk insertion $\FF$ close to the boundary at a location which is
on one side of a boundary field insertion $\BB^{n.m}$ is the same as moving $\FF$ close 
to the boundary on the other side of the $\BB^{n.m}$-insertion. We can then invoke the
equality \eqref{eq:ev=evmm.iotam} to visualize the first of the two situations as
     \def\locpx  {2.8}
     \def\locpy  {6.3}
     \def\widthDinatmor {0.33}
  \be
  \raisebox{-8.4em} {\begin{tikzpicture}
  \coordinate (cooev1) at (0,0.39*\locpy) ;
  \coordinate (cooev2) at (0.9*\locpx,0.82*\locpy) ;
  \coordinate (cooev3) at (0.89*\locpx,0.46*\locpy) ;
  \coordinate (coob)   at (0.45*\locpx,0) ;
  \coordinate (coob2)  at (0.45*\locpx,0.19*\locpy) ;
  \coordinate (cooc)   at (0,0) ;
  \coordinate (coom1)  at (\locpx,0) ;
  \coordinate (coom2)  at (0.95*\locpx,\locpy) ;
  \scopeArrow{0.75}{\arrowDefect}
  \draw[\colorC,line width=\widthObj,postaction={decorate}]
       (cooc) node[below=-1pt]{$\dot\FF$} -- (cooev1) ;
  \end{scope}
  \scopeArrow{0.33}{\arrowDefect}
  \draw[\colorC,line width=\widthObj,postaction={decorate}]
       (coob) node[below]{$\BB^{n.m}$} -- (coob2) [out=90,in=238] to (cooev3) ;
  \end{scope}
  \draw[\colorC,line width=\widthObj] (cooev1) [out=90,in=238] to (cooev2) ; 
  \scopeArrow{0.21}{\arrowDefect} \scopeArrow{0.65}{\arrowDefect} \scopeArrow{0.96}{\arrowDefect}
  \draw[\colorM,line width=\widthObj,postaction={decorate}] (coom1)
       node[below=1pt]{$m$} [out=100,in=267] to node[sloped,near end,above=-2pt,xshift=14pt] {$n$}
       (cooev2) [out=83,in=261] to (coom2) node[above=-1pt]{$n$} ;
  \end{scope} \end{scope} \end{scope}
  \node[line width=0.7*\widthboxline,fill=blue!10,draw=purple] at (cooev2)
       {$\scriptstyle(\ievF)_n$} ;
  \node[line width=0.7*\widthboxline,fill=blue!10,draw=blue] at (cooev3)
       {$\scriptstyle\iev_{m,n}$} ;
  \end{tikzpicture}
  }
  \qquad = \qquad~
  \raisebox{-8.4em} {\begin{tikzpicture}
  \coordinate (coodin) at (0,0.39*\locpy) ;
  \coordinate (cooev1) at (0,0.39*\locpy) ;
  \coordinate (cooev2) at (0.9*\locpx,0.82*\locpy) ;
  \coordinate (cooev3) at (0.89*\locpx,0.46*\locpy) ;
  \coordinate (coob)   at (0.45*\locpx,0) ;
  \coordinate (coob2)  at (0.45*\locpx,0.19*\locpy) ;
  \coordinate (cooc)   at (0,0) ;
  \coordinate (coom1)  at (\locpx,0) ;
  \coordinate (coom2)  at (0.95*\locpx,\locpy) ;
  \scopeArrow{0.45}{\arrowDefect}
  \draw[\colorC,line width=\widthObj,postaction={decorate}]
       (cooc) node[below=-1pt]{$\dot\FF$} -- (cooev1) ;
  \end{scope}
  \scopeArrow{0.33}{\arrowDefect}
  \draw[\colorC,line width=\widthObj,postaction={decorate}]
       (coob) node[below]{$\BB^{n.m}$} -- (coob2) [out=90,in=238] to (cooev3) ;
  \end{scope}
  \scopeArrow{0.55}{\arrowDefect}
  \draw[\colorC,line width=\widthObj,postaction={decorate}] (cooev1)
       [out=90,in=238] to node[sloped,midway,above=-2pt,xshift=-5pt] {$\BB^{n,n}$} (cooev2) ; 
  \end{scope}
  \scopeArrow{0.21}{\arrowDefect} \scopeArrow{0.65}{\arrowDefect} \scopeArrow{0.96}{\arrowDefect}
  \draw[\colorM,line width=\widthObj,postaction={decorate}] (coom1)
       node[below=1pt]{$m$} [out=100,in=267] to node[sloped,near end,above=-2pt,xshift=14pt] {$n$}
       (cooev2) [out=83,in=261] to (coom2) node[above=-1pt]{$n$} ;
  \end{scope} \end{scope} \end{scope}
  \node[line width=0.7*\widthboxline,fill=blue!10,draw=blue] at (cooev2)
       {$\scriptstyle\iev_{n,n}$} ;
  \node[line width=0.7*\widthboxline,fill=blue!10,draw=blue] at (cooev3)
       {$\scriptstyle\iev_{m,n}$} ;
  \begin{scope}[shift={(coodin)},rotate=180] \Dinatmor \end{scope}
  \node[yshift=-5pt] at (coodin) {$\scriptstyle \iota_n$} ;
  \end{tikzpicture}
  }
  \qquad = \qquad
  \raisebox{-8.4em} {\begin{tikzpicture}
  \coordinate (coodin) at (0,0.22*\locpy) ;
  \coordinate (cooev1) at (0,0.22*\locpy) ;
  \coordinate (cooev2) at (0.9*\locpx,0.82*\locpy) ;
  \coordinate (coomu)  at (0.24*\locpx,0.53*\locpy) ;
  \coordinate (coob)   at (0.48*\locpx,0) ;
  \coordinate (coob2)  at (0.48*\locpx,0.19*\locpy) ;
  \coordinate (cooc)   at (0,0) ;
  \coordinate (coom1)  at (\locpx,0) ;
  \coordinate (coom2)  at (0.95*\locpx,\locpy) ;
  \scopeArrow{0.45}{\arrowDefect}
  \draw[\colorC,line width=\widthObj,postaction={decorate}]
       (cooc) node[below=-1pt]{$\dot\FF$} -- (cooev1) ;
  \end{scope}
  \scopeArrow{0.33}{\arrowDefect}
  \draw[\colorC,line width=\widthObj,postaction={decorate}]
       (coob) node[below]{$\BB^{n.m}$} -- (coob2) [out=90,in=302] to (coomu) ;
  \end{scope}
  \scopeArrow{0.55}{\arrowDefect}
  \draw[\colorC,line width=\widthObj,postaction={decorate}] (cooev1)
       [out=90,in=238] to node[sloped,midway,above=-2pt,xshift=-6pt] {$\BB^{n,n}$} (coomu) ;
  \draw[\colorC,line width=\widthObj,postaction={decorate}] (coomu)
       [out=90,in=238] to node[sloped,midway,above=-2pt,xshift=-4pt] {$\BB^{n,m}$} (cooev2) ; 
  \end{scope}
  \scopeArrow{0.31}{\arrowDefect} \scopeArrow{0.95}{\arrowDefect}
  \draw[\colorM,line width=\widthObj,postaction={decorate}]
       (coom1) node[below=1pt]{$m$} [out=100,in=267] to (cooev2) 
       [out=83,in=261] to (coom2) node[above=-1pt]{$n$} ;
  \end{scope} \end{scope}
  \node[line width=0.7*\widthboxline,fill=blue!10,draw=blue] at (cooev2)
       {$\scriptstyle\iev_{n,n}$} ;
  \node[rounded corners,line width=\widthboxline,fill=blue!10,draw=\colorBdyfield]
       at (coomu) {$\scriptstyle\imu{}^{\phantom m}_{n,n,m}$} ;
  \begin{scope}[shift={(coodin)},rotate=180] \Dinatmor \end{scope}
  \node[yshift=-5pt] at (coodin) {$\scriptstyle \iota_n$} ;
  \end{tikzpicture}
  }
  \label{eq:alg9d-1}
  \ee
where  in the second equality
we use the relation between the boundary multiplication and evaluation (see
\eqref{eq:tworems1}). In order to visualize the other side of the sewing constraint 
\eqref{eq:relation9d} we must find a morphism $\tau \iN \Hom_\C
(\FF \oti \BB^{n,m},\BB^{n,m} \oti \FF)$ that allows us to make sense out of the picture
     \def\locpx  {2.8}
     \def\locpy  {6.3}
     \def\widthDinatmor {0.33}
  \be
  \raisebox{-8.4em} {\begin{tikzpicture}
  \coordinate (cooev)  at (0.46*\locpx,0.40*\locpy) ;
  \coordinate (cooev2) at (0.93*\locpx,0.75*\locpy) ;
  \coordinate (cooev3) at (0.88*\locpx,0.72*\locpy) ;
  \coordinate (coob)   at (0.46*\locpx,0.1*\locpy) ;
  \coordinate (coob2)  at (0.46*\locpx,0.40*\locpy) ;
  \coordinate (cooc)   at (0,0.1*\locpy) ;
  \coordinate (cooc2)  at (0,0.4*\locpy) ;
  \coordinate (coom1)  at (\locpx,0.1*\locpy) ;
  \coordinate (coom2)  at (\locpx,0.9*\locpy) ;
  \scopeArrow{0.50}{\arrowDefect}
  \draw[\colorC,line width=\widthObj,postaction={decorate}]
       (cooc) node[below=-1pt]{$\dot\FF$} -- (cooc2) ; 
  \end{scope}
  \scopeArrow{0.23}{\arrowDefect} \scopeArrow{0.77}{\arrowDefect}
  \draw[\colorC,line width=\widthObj,postaction={decorate}]
       (coob) node[below]{$\BB^{n.m}$} -- (coob2) [out=90,in=238] to
       node[sloped,midway,above=-2pt,xshift=-3pt] {$\BB^{n,m}$} (cooev3) ;
  \end{scope} \end{scope}
  \scopeArrow{0.20}{\arrowDefect} \scopeArrow{0.61}{\arrowDefect} \scopeArrow{0.95}{\arrowDefect}
  \draw[\colorM,line width=\widthObj,postaction={decorate}]
       (coom1) node[below=1pt]{$m$} [out=100,in=267] to (cooev2) 
       [out=83,in=261] to (coom2) node[above=-1pt]{$n$} ;
  \end{scope} \end{scope} \end{scope}
  \node[line width=0.7*\widthboxline,fill=blue!10,draw=purple] at (cooev)
       {~~~~~~$\scriptstyle (\ievF)_m \,\circ\, \tau$~~~~~~} ;
  \node[line width=0.7*\widthboxline,fill=blue!10,draw=blue] at (cooev3)
       {$\scriptstyle\iev_{m,n}$} ;
  \end{tikzpicture}
  }
  \label{eq:tau}
  \ee
Now recall that the bulk algebra $\FF$ is by definition an object in the Drinfeld center, 
so it has a distinguished half-braiding $\gamma$; the inverse of
the $\BB^{n,m}$-component of this half-braiding, which we will denote by $\gamma_{n,m}$,
exactly serves our purpose. One might
have thought that, since $\C$ is braided, one could take instead the braiding of $\dot\FF$
and $\BB^{n,m}$. But the inverse braiding would be equally qualified, and none of these
two morphisms is preferred by the structure of the bulk algebra, in contrast to the 
half-braiding as an object in the Drinfeld center. Thus we set $\tau \eq \gamma_{n,m}^{-1}$
in \eqref{eq:tau}, whereby the second situation is described as
     \def\locpx  {2.8}
     \def\locpy  {6.3}
     \def\widthDinatmor {0.33}
  \be
  \raisebox{-8.4em} {\begin{tikzpicture}
  \coordinate (cooev1) at (0,0.39*\locpy) ;
  \coordinate (cooev2) at (0.9*\locpx,0.81*\locpy) ;
  \coordinate (cooev3) at (0.89*\locpx,0.52*\locpy) ;
  \coordinate (coob)   at (0.45*\locpx,0) ;
  \coordinate (coob2)  at (0.45*\locpx,0.30*\locpy) ;
  \coordinate (cooc)   at (0,0) ;
  \coordinate (coogam) at (0.23*\locpx,0.21*\locpy) ;
  \coordinate (coom1)  at (\locpx,0) ;
  \coordinate (coom2)  at (0.95*\locpx,\locpy) ;
  \scopeArrow{0.11}{\arrowDefect} \scopeArrow{0.71}{\arrowDefect}
  \draw[\colorC,line width=\widthObj,postaction={decorate}]
       (cooc) node[below=-1pt]{$\dot\FF$} -- (cooev1)
       [out=90,in=238] to node[sloped,midway,above=-2pt,xshift=-3pt] {$\BB^{n,m}$} (cooev2) ;
  \end{scope} \end{scope}
  \scopeArrow{0.18}{\arrowDefect} \scopeArrow{0.78}{\arrowDefect}
  \draw[\colorC,line width=\widthObj,postaction={decorate}]
       (coob) node[below]{$\BB^{n.m}$} -- (coob2) node[left=-4pt,yshift=10pt] {$\dot\FF$}
       [out=90,in=238] to (cooev3) ; 
  \end{scope} \end{scope}
  \scopeArrow{0.21}{\arrowDefect} \scopeArrow{0.67}{\arrowDefect} \scopeArrow{0.96}{\arrowDefect}
  \draw[\colorM,line width=\widthObj,postaction={decorate}]
       (coom1) node[below=1pt]{$m$} [out=100,in=267] to node[sloped,very near end,below=-2pt,xshift=1pt] {$m$}
       (cooev2) [out=83,in=261] to (coom2) node[above=-1pt]{$n$} ;
  \end{scope} \end{scope} \end{scope}
  \node[rounded corners,line width=\widthboxline,fill=green!15,draw=green!70!black]
       at (coogam) {~~~$\scriptstyle\gamma_{n,m}^{-1}$~~~~} ;
  \node[line width=0.7*\widthboxline,fill=blue!10,draw=blue] at (cooev2)
       {$\scriptstyle\iev_{m,n}$} ;
  \node[line width=0.7*\widthboxline,fill=blue!10,draw=purple] at (cooev3)
       {$\scriptstyle(\ievF)_m$} ;
  \end{tikzpicture}
  }
  \qquad = \qquad~
  \raisebox{-8.4em} {\begin{tikzpicture}
  \coordinate (coodin) at (0.45*\locpx,0.39*\locpy) ;
  \coordinate (cooev1) at (0,0.39*\locpy) ;
  \coordinate (cooev2) at (0.9*\locpx,0.82*\locpy) ;
  \coordinate (cooev3) at (0.89*\locpx,0.59*\locpy) ;
  \coordinate (coob)   at (0.45*\locpx,0) ;
  \coordinate (coob2)  at (0.45*\locpx,0.36*\locpy) ;
  \coordinate (cooc)   at (0,0) ;
  \coordinate (coogam) at (0.23*\locpx,0.17*\locpy) ;
  \coordinate (coom1)  at (\locpx,0) ;
  \coordinate (coom2)  at (0.95*\locpx,\locpy) ;
  \scopeArrow{0.09}{\arrowDefect} \scopeArrow{0.71}{\arrowDefect}
  \draw[\colorC,line width=\widthObj,postaction={decorate}]
       (cooc) node[below=-1pt]{$\dot\FF$} -- (cooev1)
       [out=90,in=238] to node[sloped,midway,above=-2pt,xshift=-3pt] {$\BB^{n,m}$} (cooev2) ;
  \end{scope} \end{scope}
  \scopeArrow{0.14}{\arrowDefect} \scopeArrow{0.80}{\arrowDefect}
  \draw[\colorC,line width=\widthObj,postaction={decorate}]
       (coob) node[below]{$\BB^{n.m}$} -- node[near end,right=-2pt] {$\dot\FF$} (coob2)
       [out=90,in=238] to node[sloped,midway,above=-3pt,xshift=-6pt] {$\BB^{m,m}$} (cooev3) ; 
  \end{scope} \end{scope}
  \scopeArrow{0.21}{\arrowDefect} \scopeArrow{0.72}{\arrowDefect} \scopeArrow{0.96}{\arrowDefect}
  \draw[\colorM,line width=\widthObj,postaction={decorate}]
       (coom1) node[below=1pt]{$m$} [out=100,in=267] to node[sloped,very near end,below=-2pt,xshift=5pt] {$m$}
       (cooev2) [out=83,in=261] to (coom2) node[above=-1pt]{$n$} ;
  \end{scope} \end{scope} \end{scope}
  \node[rounded corners,line width=\widthboxline,fill=green!15,draw=green!70!black]
       at (coogam) {~~~$\scriptstyle\gamma_{n,m}^{-1}$~~~~} ;
  \node[line width=0.7*\widthboxline,fill=blue!10,draw=blue] at (cooev2)
       {$\scriptstyle\iev_{m,n}$} ;
  \node[line width=0.7*\widthboxline,fill=blue!10,draw=blue] at (cooev3)
       {$\scriptstyle\iev_{m,m}$} ;
  \begin{scope}[shift={(coodin)},rotate=180] \Dinatmor \end{scope}
  \node[yshift=-5pt] at (coodin) {$\scriptstyle \iota_m$} ;
  \end{tikzpicture}
  }
  \qquad = \qquad
  \raisebox{-8.4em} {\begin{tikzpicture}
  \coordinate (coodin) at (0.45*\locpx,0.39*\locpy) ;
  \coordinate (cooev1) at (0,0.22*\locpy) ;
  \coordinate (cooev2) at (0.9*\locpx,0.84*\locpy) ;
  \coordinate (coomu)  at (0.24*\locpx,0.59*\locpy) ;
  \coordinate (coob)   at (0.48*\locpx,0) ;
  \coordinate (coob2)  at (0.45*\locpx,0.36*\locpy) ;
  \coordinate (cooc)   at (0,0) ;
  \coordinate (coogam) at (0.23*\locpx,0.17*\locpy) ;
  \coordinate (coom1)  at (\locpx,0) ;
  \coordinate (coom2)  at (0.95*\locpx,\locpy) ;
  \scopeArrow{0.41}{\arrowDefect}
  \draw[\colorC,line width=\widthObj,postaction={decorate}]
       (cooc) node[below=-1pt] {$\dot\FF$} -- (cooev1) ;
  \end{scope}
  \scopeArrow{0.15}{\arrowDefect} \scopeArrow{0.80}{\arrowDefect}
  \draw[\colorC,line width=\widthObj,postaction={decorate}]
       (coob) node[below]{$\BB^{n.m}$} -- node[near end,right=-2pt,yshift=2pt] {$\dot\FF$}
       (coob2) [out=90,in=302] to node[midway,sloped,below=-16pt,xshift=5pt] {$\BB^{m,m}$} (coomu) ;
  \end{scope} \end{scope}
  \scopeArrow{0.56}{\arrowDefect}
  \draw[\colorC,line width=\widthObj,postaction={decorate}] (cooev1)
       [out=90,in=238] to node[sloped,midway,above=-2pt,xshift=-5pt] {$\BB^{n,m}$} (coomu) ;
  \draw[\colorC,line width=\widthObj,postaction={decorate}] (coomu)
       [out=90,in=238] to node[sloped,midway,above=-2pt,xshift=-2pt] {$\BB^{n,m}$} (cooev2) ; 
  \end{scope}
  \scopeArrow{0.31}{\arrowDefect} \scopeArrow{0.97}{\arrowDefect}
  \draw[\colorM,line width=\widthObj,postaction={decorate}]
       (coom1) node[below=1pt]{$m$} [out=100,in=267] to (cooev2) 
       [out=83,in=261] to (coom2) node[above=-1pt]{$n$} ;
  \end{scope} \end{scope}
  \node[line width=0.7*\widthboxline,fill=blue!10,draw=blue] at (cooev2)
       {$\scriptstyle\iev_{m,n}$} ;
  \node[rounded corners,line width=\widthboxline,fill=green!15,draw=green!70!black]
       at (coogam) {~~~$\scriptstyle\gamma_{n,m}^{-1}$~~~~} ;
  \node[rounded corners,line width=\widthboxline,fill=blue!10,draw=\colorBdyfield]
       at (coomu) {$\scriptstyle\imu{}^{\phantom m}_{n,m,m}$} ;
  \begin{scope}[shift={(coodin)},rotate=180] \Dinatmor \end{scope}
  \node[yshift=-5pt] at (coodin) {$\scriptstyle \iota_m$} ;
  \end{tikzpicture}
  }
  \label{eq:alg9d-2}
  \ee

Algebraically, the condition \eqref{eq:relation9d} can now be stated as the
equality of the right hand sides of \eqref{eq:alg9d-1} and \eqref{eq:alg9d-2} for all
$m,n \iN \M$ or, equivalently, as 
     \def\locpx  {3.2}
     \def\locpy  {4.9}
     \def\widthDinatmor {0.33}
  \be
  \raisebox{-7.4em} {\begin{tikzpicture}
  \coordinate (coodin) at (0,0.36*\locpy) ;
  \coordinate (cooev1) at (0,0.36*\locpy) ;
  \coordinate (cooev2) at (0.24*\locpx,\locpy) ;
  \coordinate (coomu)  at (0.24*\locpx,0.69*\locpy) ;
  \coordinate (coob)   at (0.48*\locpx,0) ;
  \coordinate (coob2)  at (0.48*\locpx,0.19*\locpy) ;
  \coordinate (cooc)   at (0,0) ;
  \coordinate (coom1)  at (\locpx,0) ;
  \coordinate (coom2)  at (0.95*\locpx,\locpy) ;
  \scopeArrow{0.45}{\arrowDefect}
  \draw[\colorC,line width=\widthObj,postaction={decorate}]
       (cooc) node[below=-1pt]{$\dot\FF$} -- (cooev1) ;
  \end{scope}
  \scopeArrow{0.26}{\arrowDefect}
  \draw[\colorC,line width=\widthObj,postaction={decorate}]
       (coob) node[below]{$\BB^{n.m}$} -- (coob2) [out=90,in=302] to (coomu) ;
  \end{scope}
  \scopeArrow{0.55}{\arrowDefect}
  \draw[\colorC,line width=\widthObj,postaction={decorate}] (cooev1)
       [out=90,in=238] to node[sloped,midway,above=-2pt,xshift=-6pt] {$\BB^{n,n}$} (coomu) ;
  \end{scope}
  \scopeArrow{0.75}{\arrowDefect}
  \draw[\colorC,line width=\widthObj,postaction={decorate}]
       (coomu) -- (cooev2) node[above=-2pt] {$\BB^{n,m}$} (cooev2) ;
  \end{scope}
  \node[rounded corners,line width=\widthboxline,fill=blue!10,draw=\colorBdyfield]
       at (coomu) {$\scriptstyle\imu{}^{\phantom m}_{n,n,m}$} ;
  \begin{scope}[shift={(coodin)},rotate=180] \Dinatmor \end{scope}
  \node[yshift=-5pt] at (coodin) {$\scriptstyle \iota_n$} ;
  \end{tikzpicture}
  }
  ~\quad = \qquad~
  \raisebox{-7.4em} {\begin{tikzpicture}
  \coordinate (coodin) at (0.45*\locpx,0.49*\locpy) ;
  \coordinate (cooev1) at (0,0.29*\locpy) ;
  \coordinate (cooev2) at (0.24*\locpx,\locpy) ;
  \coordinate (coomu)  at (0.24*\locpx,0.77*\locpy) ;
  \coordinate (coob)   at (0.48*\locpx,0) ;
  \coordinate (coob2)  at (0.45*\locpx,0.47*\locpy) ;
  \coordinate (cooc)   at (0,0) ;
  \coordinate (coogam) at (0.23*\locpx,0.22*\locpy) ;
  \coordinate (coom1)  at (\locpx,0) ;
  \coordinate (coom2)  at (0.95*\locpx,\locpy) ;
  \scopeArrow{0.14}{\arrowDefect} \scopeArrow{0.71}{\arrowDefect}
  \draw[\colorC,line width=\widthObj,postaction={decorate}]
       (cooc) node[below=-1pt] {$\dot\FF$} -- (cooev1)
       [out=90,in=238] to node[sloped,midway,above=-1pt,xshift=-4pt] {$\BB^{n,m}$} (coomu) ;
  \end{scope} \end{scope}
  \scopeArrow{0.14}{\arrowDefect} \scopeArrow{0.80}{\arrowDefect}
  \draw[\colorC,line width=\widthObj,postaction={decorate}]
       (coob) node[below]{$\BB^{n.m}$} -- node[near end,right=-2pt] {$\dot\FF$}
       (coob2) [out=90,in=302] to node[midway,sloped,below=-16pt,xshift=6pt] {$\BB^{m,m}$} (coomu) ;
  \end{scope} \end{scope}
  \scopeArrow{0.81}{\arrowDefect}
  \draw[\colorC,line width=\widthObj,postaction={decorate}]
       (coomu) -- (cooev2) node[above=-2pt] {$\BB^{n,m}$} ;
  \end{scope}
  \node[rounded corners,line width=\widthboxline,fill=green!15,draw=green!70!black]
       at (coogam) {~~~~$\scriptstyle\gamma_{n,m}^{-1}$~~~~~} ;
  \node[rounded corners,line width=\widthboxline,fill=blue!10,draw=\colorBdyfield]
       at (coomu) {$\scriptstyle\imu{}^{\phantom m}_{n,m,m}$} ;
  \begin{scope}[shift={(coodin)},rotate=180] \Dinatmor \end{scope}
  \node[yshift=-5pt] at (coodin) {$\scriptstyle \iota_m$} ;
  \end{tikzpicture}
  }
  \label{eq:alg9d-1=2}
  \ee
The equality \eqref{eq:alg9d-1=2} is indeed fulfilled -- it can again be deduced from 
basic features of our proposal. We present details of the proof in Appendix \ref{sec:app}.

\begin{Rem}
(i) Each of the morphisms $\dot\FF \oti \BB^{n,m} \,{\rarr~}\, \BB^{n,m}$ on the two sides of 
\eqref{eq:alg9d-1=2} endows the object $\BB^{n,m}$ of \C\ with a structure of left
$\dot\FF$-module. For the left hand side this follows by combining the identity 
\eqref{eq:alg9e} with the associativity of the boundary products $\imu{}_{p,q,r}$. For 
the right hand side one must invoke in addition the naturality of the half-braiding.
\\[2pt]
(ii) The fact that $\FF$ is an object in \ZC, i.e.\ that $\dot\FF \iN \C$ comes with
a half-braiding $\gamma$, allows us to exchange the order of a bulk and a boundary insertion
in an operator product, as when going from the right hand side of \eqref{eq:alg9d-1} to the
right hand side of \eqref{eq:alg9d-2}.  It is tempting to try to ``pull the morphism $\iota_m$
through the half-braiding'', which for $m \eq n$ would allow for an interpretation of the 
equality \eqref{eq:alg9d-1=2} as the statement that the image of $\iota_m$ is contained in the
center of the algebra $\BB^{m,m}$. However, this is not possible because as already pointed 
out, even though \C\ is braided, neither over- nor underbraiding is preferred, so that there 
is no natural way to exchange the two factors in an operator product of boundary insertions.
On the other hand, the idea does work in the case of two-dimensional \emph{topological}
field theory \cite{laza,laPf}. In this case \C\ is the category of vector spaces, which is
symmetric monoidal, so that over- and underbraiding coincide.
\end{Rem}

\medskip

Let us finally recall that in a detailed analysis of bulk-boundary systems, as in 
\cite{laza,laPf,kolR},
we must be careful about the distinction between incoming and outgoing field insertions. In
particular, besides the bulk-boundary OPE discussed above, which describes the situation
with an incoming bulk field and an outgoing boundary field, we also need to handle the
opposite situation in which the boundary field is incoming and the bulk field is outgoing.
Then in addition to the morphisms $\iota_m$ from $\FF$ to $\BB^{m,m}$ we also need morphisms
  \be
  \BB^{m,m} = \iHom(m,m) \rarr~ \relNat(\Id_\M,\Id_\M) = \FF
  \ee
in the opposite direction. Our approach supplies such morphisms as well. Indeed, according to
formula (5.8) in \cite{fuSc25}, the internal natural transformations for \emph{pivotal} module
categories \M\ and \N\ over a modular tensor category \C\ carry also the structure of a
\emph{coend}:\,%
 \footnote{~The situation considered in \cite{fuSc25} and thus the formula given
 there is more general. In our case it reduces to \eqref{eq:relNatcoend} because
 \C\ is in particular \emph{unimodular} and the relative Serre functors of \M\ and \N\
 (and as a consequence of unimodularity also their \emph{Nakayama functors}) are trivialized.}
  \be
  \relNat(F,G) \,\cong \int^{\!m\in\M}\! \iHom_\N(F(m),G(m)) \,.
  \label{eq:relNatcoend}
  \ee
The desired morphisms are provided by the structure morphisms of this coend.

It is not hard to see that the variants of the compatibility conditions \eqref{eq:relation9d} 
and \eqref{eq:relation9e} that are obtained when changing incoming to outgoing bulk field
insertions can be proven by using the structure morphisms of $\FF$ as a coend rather than
as an end, while changing incoming to outgoing boundary field insertions is accounted
for by using that the boundary objects $\BB^{m,n}$ and $\BB^{n,m}$ are each other's duals.
For instance, the in-out-reversed version of \eqref{eq:relation9d} holds because it 
expresses the canonical comultiplication on the coend $\FF$ in terms of the dinatural family
of the coend.


\section{Outlook} \label{sec:out}

Our proposal provides us naturally with candidates for the different types of operator 
products. Furthermore, recent developments in the theory of pivotal module categories 
\cite{shimi20,fuSc25} allow us to decide whether various consistency conditions are 
satisfied by the so obtained candidate operator products. In our opinion, the fact that 
all these consistency conditions are indeed met constitutes convincing evidence for the
viability of our proposal. Notably, we are confident that our arguments, being essentially
categorical, are very stable and have a good chance to survive in more general classes of 
conformal field theories. On the other hand, a lot of work remains to be done before 
full CFTs with non-semisimple chiral data are under control to an extent comparable to 
what has been achieved in the semisimple case.
Specifically, the following topics for future investigations impose themselves:
\Enumerate
	
\item 
Based on basic features of our proposal -- the facts that the object $\FF$ of bulk fields is 
a commutative symmetric Frobenius algebra, that the module category \M\ whose 
objects are the boundary conditions is pivotal, and that the multiplication on 
$\FF$ has a natural expression in terms of the dinatural structure morphisms of 
$\FF$ as an end -- the proposal automatically respects the relations
\eqref{eq:relation9a}\,--\,\eqref{eq:relation9e}, i.e.\ all genus-0 sewing constraints
in the list in Figure 9 of \cite{lewe3}. It is worth noting that the compliance with these
constraints can be verified entirely by considerations that comprise only local situations on
the world sheet, i.e.\ these sewing constraints are \emph{locally analyzable}. In contrast, 
the two genus-1 constraints -- modular invariance of one-bulk field correlators on a
torus and the Cardy condition for two-boundary field correlators on an annulus -- are
\emph{not} locally analyzable in this sense. As already pointed out in the Introduction,
they remain a challenge for our proposal. To establish their validity it will be necessary 
to get a better handle on their distinctive feature of being genuinely global.
 
\item 
Modular invariance of the one-bulk field correlators on a torus is equivalent to the 
Frobe\-ni\-us algebra $\FF \eq \relNat(\Id_\M,\Id_\M)$ being modular in the sense of
Definition 4.9 of \cite{fuSc22}. This property of $\FF$ has been established in the case 
that \C\ is semisimple, as well as \Cite{Cor.\,5.11\,\&\,Prop.\,6.1}{fuSs3} for particular
cases of module categories over representation categories of Hopf algebras (including the
Cardy case, which we have mentioned in  the last remark in Section \ref{sec:rmks}). But showing
it for any indecomposable pivotal module category \M\ over a modular tensor category appears
to be much harder than what the experience from the Hopf algebra case might seem to suggest.

\item 
The ultimate goal in the study of full finite CFTs is to show the existence of, and 
construct, a consistent set of correlators, for arbitrary collections of boundary fields 
and defect fields on world sheets of any topology, that is compatible with the proposed 
field content and the proposed OPEs. When the modular tensor category \C\ of chiral data 
is semisimple, this has already been achieved by the TFT construction of RCFT correlators 
\cite{fuRs4,fjfrs,fjfrs2}. For correlators of bulk fields on oriented world sheets without
boundary, a construction based on a Lego-Teichm\"uller game is available \cite{fuSc22},
provided that the object of bulk fields is a modular Frobenius algebra. However, the
Lego-Teichm\"uller game, as any presentation in terms of generators and relations, is difficult
to handle. To the best of our knowledge, it has not been developed for surfaces with defects.
	
\item 
When it comes to defect fields, we have only considered the situation that the field
is located on a single defect line, changing the defect condition along the line
analogously as boundary fields can change the boundary condition along a segment of the
boundary. There are, however, also more general defect fields which are located at the
junction of three or more defect lines (and similarly, generalized boundary fields located
at the junction of one or more defect lines and a boundary segment). To relate such fields 
to internal natural transformations it will be necessary to invoke suitable fusions of 
defect lines.
	
\item 
We expect that  new insights will be gained by combining the structures exhibited in the
present paper with a string-net approach to conformal blocks of the Drinfeld center
\ZC. Such a description has been explored, in the case of semisimple \C, in \cite{scYa}
for bulk field correlators in the Cardy case. In \cite{traub} it has further been shown
how to construct correlators of bulk and boundary fields for a fixed boundary condition
once a bulk algebra (as a modular Frobenius algebra) as well as a compatible boundary
algebra are given.  For finite spherical categories that are not semisimple,
no string-net construction is known so far. It appears to be a promising task
to try to accomplish such non-semisimple string-net constructions.

\item 
As already stressed, in the present paper we have restricted our attention to conformal 
field theories whose chiral data are encoded in a, possible non-semisimple,
modular tensor category. Eventually we would like to extend our analysis to cover also
conformal field theories for which the category of chiral data, while still being 
finite as an abelian category, is no longer modular, e.g.\ not rigid and with
a non-exact tensor product. Examples of full conformal field theories of this type 
have been discussed in the literature, see e.g.\ \cite{garW}. We believe
that our proposal is structurally very stable and that similar structures will
still be present in such more general classes of conformal field theories.

\item 
Any pivotal module category is in particular an exact module category. This is a very
strong property, e.g.\ for semisimple \C, exactness requires \M\ to be semisimple as well.
One may speculate that module categories of finite (relative) homological 
dimension could be used in more general constructions. An analysis of this issue will
considerably transcend the mathematical setting of the present paper. 

\end{enumerate}


\vskip 1.4em

\noindent
{\sc Acknowledgements:}\\[.3em]
We thank Yang Yang for helpful comments on the manuscript.
JF is supported by VR under project no.\ 2017-03836. CS is partially supported by
the Deutsche Forschungsgemeinschaft (DFG, German Research Foundation) under Germany's
Excellence Strategy - EXC 2121 ``Quantum Universe'' - 390833306 and under SCHW\,1162/6-1.
Part of this work was done while the second author was participating in
the ESI program ``Higher Structures and Field Theory'' in August 2021;
we thank ESI and the organizers for their hospitality.

\vskip 3.5em

\appendix

\section{Half-braiding, end structure, and commutativity of the bulk algebra} \label{sec:app}

The purpose of this appendix is twofold. First we show the validity of the equality
\eqref{eq:alg9d-1=2} of morphisms, which constitutes the algebraic formulation of the
sewing condition \eqref{eq:relation9d}. Second, we use that equality to obtain a proof 
of commutativity of the bulk algebra; our strategy is similar to the one in the proof of
Theorem 4.9 of \cite{shimi17}. We start by noting that the half-braiding $\gamma$ on $\FF$
is determined by its structure of an end, so that by Theorem 8 of \cite{fuSc25} we have
     \def\locpx  {1.5}
     \def\locpy  {4.1}
     \def\widthDinatmor {0.33}
  \be
  \raisebox{-6.6em} {\begin{tikzpicture}
  \coordinate (coodin) at (\locpx,0.79*\locpy) ;
  \coordinate (coob)   at (\locpx,0) ;
  \coordinate (coob2)  at (\locpx,\locpy) ;
  \coordinate (cooc)   at (0,0) ;
  \coordinate (cooc2)  at (0,\locpy) ;
  \coordinate (coogam) at (0.49*\locpx,0.28*\locpy) ;
  \scopeArrow{0.15}{\arrowDefect} \scopeArrow{0.95}{\arrowDefect}
  \draw[\colorC,line width=\widthObj,postaction={decorate}]
       (cooc) node[below=-1pt] {$\dot\FF$} -- (cooc2) node[above=-1pt] {$\BB^{n,m}$} ;
  \end{scope} \end{scope}
  \scopeArrow{0.15}{\arrowDefect} \scopeArrow{0.59}{\arrowDefect} \scopeArrow{0.95}{\arrowDefect}
  \draw[\colorC,line width=\widthObj,postaction={decorate}]
       (coob) node[below]{$\BB^{n.m}$} -- node[midway,right=-2pt,yshift=-5pt] {$\dot\FF$}
       (coob2) node[above=-1pt] {$\BB^{m.m}$} ;
  \end{scope} \end{scope} \end{scope}
  \node[rounded corners,line width=\widthboxline,fill=green!15,draw=green!70!black]
       at (coogam) {~~~~$\scriptstyle\gamma_{n,m}^{-1}$~~~~~} ;
  \begin{scope}[shift={(coodin)},rotate=180] \Dinatmor \end{scope}
  \node[yshift=-5pt] at (coodin) {$\scriptstyle \iota_m$} ;
  \end{tikzpicture}
  }
  \qquad = \quad~
     \def\widthDinatmor {0.43}
  \raisebox{-6.6em} {\begin{tikzpicture}
  \coordinate (coodin) at (0,0.41*\locpy) ;
  \coordinate (cooa1)  at (-0.23,0.41*\locpy) ;
  \coordinate (cooa2)  at (-0.23,0.55*\locpy) ;
  \coordinate (cooa3)  at (-0.5*\locpx,0.85*\locpy) ;
  \coordinate (cooa4)  at (-0.5*\locpx,\locpy) ;
  \coordinate (coob)   at (1.05*\locpx,0) ;
  \coordinate (coob0)  at (0.5*\locpx,0.61*\locpy) ;
  \coordinate (coob1)  at (0.23,0.41*\locpy) ;
  \coordinate (coob2)  at (0,0.55*\locpy) ;
  \coordinate (coob3)  at (0.5*\locpx,0.85*\locpy) ;
  \coordinate (coob4)  at (0.5*\locpx,\locpy) ;
  \coordinate (cooc)   at (0,0) ;
  \scopeArrow{0.38}{\arrowDefect}
  \draw[\colorC,line width=\widthObj,postaction={decorate}]
       (cooc) node[below=-1pt] {$\dot\FF$} -- (coodin) ;
  \end{scope}
  \scopeArrow{0.88}{\arrowDefect}
  \draw[\colorC,line width=\widthObj,postaction={decorate}]
       (cooa1) -- (cooa2) [out=90,in=270] to (cooa3) -- (cooa4) node[above=-1pt] {$\BB^{n,m}$} ;
  \draw[\colorC,line width=\widthObj,postaction={decorate}]
       (coodin) -- (coob2) [out=90,in=270] to (coob3) -- (coob4) node[above=-1pt] {$\BB^{m,m}$} ;
  \end{scope}
  \draw[\colorC,line width=\widthObj]
       (coob) node[below=-1pt] {$\BB^{n,m}$} [out=90,in=0] to (coob0) [out=180,in=90] to (coob1) ;
  \scopeArrow{0.99}{\arrowDefect}
  \draw[\colorC,line width=\widthObj,postaction={decorate}] (coob0) -- ++(-0.1,0) ;
  \end{scope}
  \begin{scope}[shift={(coodin)},rotate=180] \Dinatmor \end{scope}
	  \node[yshift=-5pt] at (coodin) {$\scriptstyle \iota_{c\Act m}$} ;
  \end{tikzpicture}
  }
  \label{eq:gammaF}
  \ee
Here $\gamma_{n,m}^{\phantom m}$ is the $\BB^{n,m}$-component of the half-braiding $\gamma$,
and we set $c \,{:=}\, \BB^{n,m}$ whereby, recalling that the internal Hom is a bimodule 
functor, see \eqref{eq:iHomBimod}, we have 
$\BB^{c\Act m,c\Act m} \eq \BB^{n,m} \oti \BB^{m,m}\oti (\BB^{n,m})^\vee$.
The morphism $(\BB^{n,m})^\vee \oti \BB^{n,m} \,{\to}\, \one_\C$ 
on the right hand side is an evaluation morphism in \C.

As a second ingredient we invoke the dinaturality of the family $(\iota_m)_{m\in\M}$.
Applied to the evaluation morphism $\iev{}_{m,n}$, it states that the two composite morphisms 
  \be
  \begin{array}{lr}
  ~ & \FF \rarr{~\iota_{n}~} \BB^{n,n} \rarr{\,\iHom(\iev{}_{m,n},\id_n)\,} \BB^{n,c\Act m}
  \Nxl3
  \text{and} \quad & \FF \rarr{\,\iota_{c\Act m}\,} \BB^{c\Act m,c\Act m} 
  \rarr{\,\iHom(\id_{c\Act m},\iev{}_{m,n})\,} \BB^{n,c\Act m}
  \eear
  \label{eq:dinat}
  \ee
coincide for any $c\iN \C$. Now note that in our case we have
$\BB^{n,c\Act m} \,{\equiv}\; \BB^{n,\BB^{n,m}\Act m} \eq \BB^{n,m} \oti (\BB^{n,m})^\vee$,
and further that (using \eqref{eq:tworems1})
  \be
  \iHom(\id^{}_{\BB^{n,m}\Act m},\iev{}_{m,n}) = \imu{}_{n,m,m} \oti \id^{}_{\BB^{m,n}}
  \ee  
as well as
  \be
  \iHom(\iev{}_{m,n},\id^{}_n) = \iDelta{}_{n,m,n} \,,
  \ee
with $\iDelta{}_{p,q,r}\colon \BB^{p,r} \,{\to}\, \BB^{p,q} \oti \BB^{q,r}$ the comultiplication
of boundary objects. It follows that the equality of the two morphisms \eqref{eq:dinat} amounts to
     \def\locpx  {1.5}
     \def\locpy  {4.1}
     \def\widthDinatmor {0.33}
  \be
  \raisebox{-6.9em} {\begin{tikzpicture}
  \coordinate (cooa1)  at (-0.23,0.83*\locpy) ;
  \coordinate (cooa2)  at (-0.23,1.25*\locpy) ;
  \coordinate (coob)   at (1.28*\locpx,0) ;
  \coordinate (coobb)  at (1.28*\locpx,0.3*\locpy) ;
  \coordinate (coob0)  at (0.45*\locpx,1.08*\locpy) ;
  \coordinate (coob1)  at (0.23,0.8*\locpy) ;
  \coordinate (cooc)   at (0,0) ;
  \coordinate (coomu)  at (0,0.8*\locpy) ;
  \coordinate (coodin) at (0,0.4*\locpy) ;
  \scopeArrow{0.18}{\arrowDefect} \scopeArrow{0.78}{\arrowDefect}
  \draw[\colorC,line width=\widthObj,postaction={decorate}]
       (cooc) node[below=-1pt] {$\dot\FF$}
       -- node[sloped,near end,below,xshift=-2pt] {$\BB^{n,n}$} (coomu) ;
  \end{scope} \end{scope}
  \draw[\colorC,line width=\widthObj] (coob) node[below=-1pt] {$\BB^{n,m}$}
       -- (coobb) [out=90,in=0] to (coob0) [out=180,in=90] to (coob1) ;
  \scopeArrow{0.99}{\arrowDefect}
  \draw[\colorC,line width=\widthObj,postaction={decorate}] (coob0) -- ++(-0.1,0) ;
  \end{scope}
  \scopeArrow{0.85}{\arrowDefect}
  \draw[\colorC,line width=\widthObj,postaction={decorate}]
       (cooa1) -- (cooa2) node[above=-2pt] {$\BB^{n,m}$} ;
  \end{scope}
  \node[rounded corners,line width=\widthboxline,fill=blue!10,draw=\colorBdyfield]
       at (coomu) {$\scriptstyle\iDelta{}^{\phantom t}_{n,m,n}$} ;
  \begin{scope}[shift={(coodin)},rotate=180] \Dinatmor \end{scope}
  \node[yshift=-5pt] at (coodin) {$\scriptstyle \iota_n$} ;
  \end{tikzpicture}
  }
  \quad = \quad
     \def\widthDinatmor {0.44}
  \raisebox{-6.9em} {\begin{tikzpicture}
  \coordinate (coodin) at (0,0.41*\locpy) ;
  \coordinate (cooa1)  at (-0.23,0.41*\locpy) ;
  \coordinate (cooa2)  at (-0.23,0.53*\locpy) ;
  \coordinate (cooa3)  at (-0.35*\locpx,0.77*\locpy) ;
  \coordinate (cooa4)  at (-0.35*\locpx,0.95*\locpy) ;
  \coordinate (coob)   at (1.14*\locpx,0) ;
  \coordinate (coob0)  at (0.5*\locpx,0.61*\locpy) ;
  \coordinate (coob1)  at (0.23,0.41*\locpy) ;
  \coordinate (coob2)  at (0,0.53*\locpy) ;
  \coordinate (coob3)  at (0.35*\locpx,0.77*\locpy) ;
  \coordinate (coob4)  at (0.35*\locpx,0.95*\locpy) ;
  \coordinate (cooc)   at (0,0) ;
  \coordinate (coomu)  at (0,0.95*\locpy) ;
  \coordinate (cootop) at (0,1.25*\locpy) ;
  \scopeArrow{0.38}{\arrowDefect}
  \draw[\colorC,line width=\widthObj,postaction={decorate}]
       (cooc) node[below=-1pt] {$\dot\FF$} -- (coodin) ;
  \end{scope}
  \scopeArrow{0.79}{\arrowDefect}
  \draw[\colorC,line width=\widthObj,postaction={decorate}] (cooa1) -- (cooa2) [out=90,in=270]
       to node[midway,sloped,below=-1pt,xshift=-11pt] {$\BB^{n,m}$} (cooa3) -- (cooa4) ;
  \draw[\colorC,line width=\widthObj,postaction={decorate}]
       (coodin) -- (coob2) [out=90,in=270]
       to node[near end,sloped,below=-2pt,xshift=12pt] {$\BB^{m,m}$} (coob3) -- (coob4) ;
  \end{scope}
  \scopeArrow{0.91}{\arrowDefect}
  \draw[\colorC,line width=\widthObj,postaction={decorate}]
       (coomu) -- (cootop) node[above=-2pt] {$\BB^{n,m}$} ;
  \end{scope}
  \draw[\colorC,line width=\widthObj]
       (coob) node[below=-1pt] {$\BB^{n,m}$} [out=90,in=0] to (coob0) [out=180,in=90] to (coob1) ;
  \scopeArrow{0.99}{\arrowDefect}
  \draw[\colorC,line width=\widthObj,postaction={decorate}] (coob0) -- ++(-0.1,0) ;
  \end{scope}
  \node[rounded corners,line width=\widthboxline,fill=blue!10,draw=\colorBdyfield]
       at (coomu) {~$\scriptstyle\imu{}^{\phantom m}_{n,m,m}$~} ;
  \begin{scope}[shift={(coodin)},rotate=180] \Dinatmor \end{scope}
	  \node[yshift=-5pt] at (coodin) {$\scriptstyle \iota_{c\Act m}$} ;
  \end{tikzpicture}
  }
  \label{eq:dinat+}
  \ee
where we also use an additional evaluation morphism in \C\ to bend the outgoing 
$(\BB^{n,m})^\vee$-line to an incoming $\BB^{n,m}$-line.

Now owing to the description \eqref{eq:gammaF} of the half-braiding of $\FF$ the 
right hand side of \eqref{eq:dinat+} equals the right hand side of the identity
\eqref{eq:alg9d-1=2} that we want to prove. Concerning the left hand side we note that
the evaluation morphism   $(\BB^{n,m})^\vee \oti \BB^{n,m} \,{\to}\, \one_\C$ 
in \C\ can be expressed 
in terms of the algebra and coalgebra structures as $\underline\varepsilon{}_m^{} 
\cir \imu{}_{m,n,m}$. After doing so, one can use the Frobenius relation
     \def\locpx  {1.5}
     \def\locpy  {4.1}
     \def\widthDinatmor {0.33}
  \be
  \raisebox{-8.1em} {\begin{tikzpicture}
  \coordinate (cooa1)  at (-0.23,0.75*\locpy) ;
  \coordinate (cooa2)  at (-0.23,1.57*\locpy) ;
  \coordinate (coob)   at (1.2*\locpx,0.22*\locpy) ;
  \coordinate (coobb)  at (1.2*\locpx,0.4*\locpy) ;
  \coordinate (coob0)  at (0.6*\locpx,1.2*\locpy) ;
  \coordinate (coob1)  at (0.23,0.7*\locpy) ;
  \coordinate (cooc)   at (0,0.22*\locpy) ;
  \coordinate (coomu)  at (0,0.67*\locpy) ;
  \coordinate (coomu2) at (0.6*\locpx,1.12*\locpy) ;
  \coordinate (coomu3) at (0.6*\locpx,1.42*\locpy) ;
  \scopeArrow{0.42}{\arrowDefect}
  \draw[\colorC,line width=\widthObj,postaction={decorate}]
       (cooc) node[below=-1pt] {$\BB^{n,n}$} -- (coomu) ;
  \end{scope}
  \scopeArrow{0.47}{\arrowDefect}
  \draw[\colorC,line width=\widthObj,postaction={decorate}]
       (coob) node[below=-1pt] {$\BB^{n,m}$} -- (coobb) [out=90,in=290] to (coob0);
  \draw[\colorC,line width=\widthObj,postaction={decorate}]
       (coob1) [out=99,in=260] to node[sloped,midway,below,xshift=-3pt] {$\BB^{m,n}$} (coob0) ;
  \end{scope}
  \scopeArrow{0.68}{\arrowDefect}
  \draw[\colorC,line width=\widthObj,postaction={decorate}] (coomu2) -- (coomu3) ;
  \end{scope}
  \scopeArrow{0.85}{\arrowDefect}
  \draw[\colorC,line width=\widthObj,postaction={decorate}]
       (cooa1) -- (cooa2) node[above=-2pt] {$\BB^{n,m}$} ;
  \end{scope}
  \node[rounded corners,line width=\widthboxline,fill=blue!10,draw=\colorBdyfield]
       at (coomu) {$\scriptstyle\iDelta{}^{\phantom t}_{n,m,n}$} ;
  \node[rounded corners,line width=\widthboxline,fill=blue!10,draw=\colorBdyfield]
       at (coomu2) {$\scriptstyle\imu{}^{\phantom m}_{m,n,m}$} ;
  \node[rounded corners,line width=\widthboxline,fill=blue!10,draw=\colorBdyfield]
       at (coomu3) {$\scriptstyle\underline\varepsilon{}_m^{}$} ;
  \end{tikzpicture}
  }
  \quad = \qquad
  \raisebox{-8.1em} {\begin{tikzpicture}
  \coordinate (cooev2) at (0.44*\locpx,1.57*\locpy) ;
  \coordinate (coomu)  at (0.44*\locpx,0.89*\locpy) ;
  \coordinate (coob)   at (0.88*\locpx,0.22*\locpy) ;
  \coordinate (coob2)  at (0.88*\locpx,0.39*\locpy) ;
  \coordinate (cooc)   at (0,0.22*\locpy) ;
  \coordinate (cooc2)  at (0,0.39*\locpy) ;
  \scopeArrow{0.45}{\arrowDefect}
  \draw[\colorC,line width=\widthObj,postaction={decorate}]
       (cooc) node[below=-1pt]{$\BB^{n,n}$} -- (cooc2) [out=90,in=238] to (coomu) ;
  \draw[\colorC,line width=\widthObj,postaction={decorate}]
       (coob) node[below]{$\BB^{n.m}$} -- (coob2) [out=90,in=302] to (coomu) ;
  \end{scope}
  \scopeArrow{0.79}{\arrowDefect}
  \draw[\colorC,line width=\widthObj,postaction={decorate}]
       (coomu) -- (cooev2) node[above=-2pt] {$\BB^{n,m}$} ;
  \end{scope}
  \node[rounded corners,line width=\widthboxline,fill=blue!10,draw=\colorBdyfield]
       at (coomu) {\,$\scriptstyle\imu{}^{\phantom m}_{n,n,m}$} ;
  \end{tikzpicture}
  }
  \ee
to see that the left hand side of \eqref{eq:dinat+} equals the left hand side
of \eqref{eq:alg9d-1=2}, thereby completing the proof of \eqref{eq:alg9d-1=2}.
 
\medskip

Next we note that the self-braiding of the bulk algebra $\FF$ in \ZC\ is given by the
component $\gamma_{\dot\FF}$ of the half-braiding $\gamma$. Commutativity of the
bulk algebra product $\imu$ thus means that $\imu \cir \gamma_{\dot\FF} \eq \imu$
or, what is the same, $\imu \cir \gamma_{\dot\FF}^{-1} \eq \imu$.
Owing to the universal property of the end this, in turn, is equivalent to having
  \be
  \iota_m \circ \imu \circ \gamma_{\dot\FF}^{-1} = \iota_m \circ \imu 
  \label{eq:imc=im}
  \ee
for every $m\iN \M$. Now the left hand side of \eqref{eq:imc=im} can be rewritten as
     \def\locpx  {3.2}
     \def\locpy  {4.9}
     \def\widthDinatmor {0.33}
  \be
  \raisebox{-7.4em} {\begin{tikzpicture}
  \coordinate (coodin) at (0.23*\locpx,0.94*\locpy) ;
  \coordinate (coomu)  at (0.23*\locpx,0.61*\locpy) ;
  \coordinate (cooc)   at (0,0) ;
  \coordinate (cooc2)  at (0,0.35*\locpy) ;
  \coordinate (cood)   at (0.46*\locpx,0) ;
  \coordinate (cood2)  at (0.46*\locpx,0.35*\locpy) ;
  \coordinate (coogam) at (0.23*\locpx,0.24*\locpy) ;
  \coordinate (cootop) at (0.23*\locpx,1.12*\locpy) ;
  \scopeArrow{0.16}{\arrowDefect} \scopeArrow{0.78}{\arrowDefect}
  \draw[\colorC,line width=\widthObj,postaction={decorate}]
       (cooc) node[below=-1pt] {$\dot\FF$} --
       (cooc2) node[above=1.3pt,xshift=-4pt] {$\dot\FF$} [out=90,in=238] to (coomu) ;
  \draw[\colorC,line width=\widthObj,postaction={decorate}]
       (cood) node[below=-1pt]{$\dot\FF$} --
       (cood2) node[above=1.3pt,xshift=5pt] {$\dot\FF$} [out=90,in=302] to (coomu) ;
  \end{scope} \end{scope}
  \scopeArrow{0.35}{\arrowDefect} \scopeArrow{0.93}{\arrowDefect}
  \draw[\colorC,line width=\widthObj,postaction={decorate}] (coomu) --
       node[near start,left=-2pt,yshift=7pt] {$\dot\FF$} (cootop) node[above=-2pt] {$\BB^{m,m}$} ;
  \end{scope} \end{scope}
  \node[rounded corners,line width=\widthboxline,fill=green!15,draw=green!70!black]
       at (coogam) {~~~~~$\scriptstyle\gamma_{\dot\FF}^{-1}$~~~~~} ;
  \node[rounded corners,line width=\widthboxline,fill=blue!10,draw=\colorBdyfield]
       at (coomu) {$~\,\scriptstyle\imu{}^{\phantom m}_{\phantom m}$} ;
  \begin{scope}[shift={(coodin)},rotate=180] \Dinatmor \end{scope}
  \node[yshift=-5pt] at (coodin) {$\scriptstyle \iota_m$} ;
  \end{tikzpicture}
  }
  \qquad = \qquad
  \raisebox{-7.4em} {\begin{tikzpicture}
  \coordinate (coodi0) at (0,0.56*\locpy) ;
  \coordinate (coodin) at (0.46*\locpx,0.56*\locpy) ;
  \coordinate (coomu)  at (0.23*\locpx,0.86*\locpy) ;
  \coordinate (cooc)   at (0,0) ;
  \coordinate (cooc2)  at (0,0.57*\locpy) ;
  \coordinate (cood)   at (0.46*\locpx,0) ;
  \coordinate (cood2)  at (0.46*\locpx,0.57*\locpy) ;
  \coordinate (coogam) at (0.23*\locpx,0.22*\locpy) ;
  \coordinate (cootop) at (0.23*\locpx,1.12*\locpy) ;
  \scopeArrow{0.12}{\arrowDefect} \scopeArrow{0.44}{\arrowDefect} \scopeArrow{0.82}{\arrowDefect}
  \draw[\colorC,line width=\widthObj,postaction={decorate}]
       (cooc) node[below=-1pt] {$\dot\FF$} -- node[near end,left=-2pt,yshift=-4pt] {$\dot\FF$}
       (cooc2) [out=90,in=238] to node[sloped,midway,above=-3pt,xshift=-8pt] {$\BB^{m,m}$} (coomu) ;
  \draw[\colorC,line width=\widthObj,postaction={decorate}]
       (cood) node[below=-1pt]{$\dot\FF$} -- node[near end,right=-1pt,yshift=-4pt] {$\dot\FF$}
       (cood2) [out=90,in=302] to node[midway,sloped,below=-17pt,xshift=9pt] {$\BB^{m,m}$} (coomu) ;
  \end{scope} \end{scope} \end{scope}
  \scopeArrow{0.81}{\arrowDefect}
  \draw[\colorC,line width=\widthObj,postaction={decorate}]
       (coomu) -- (cootop) node[above=-2pt] {$\BB^{m,m}$} ;
  \end{scope}
  \node[rounded corners,line width=\widthboxline,fill=green!15,draw=green!70!black]
       at (coogam) {~~~~~$\scriptstyle\gamma_{\dot\FF}^{-1}$~~~~~} ;
  \node[rounded corners,line width=\widthboxline,fill=blue!10,draw=\colorBdyfield]
       at (coomu) {$\scriptstyle\imu{}^{\phantom m}_{m,m,m}$} ;
  \begin{scope}[shift={(coodi0)},rotate=180] \Dinatmor \end{scope}
  \begin{scope}[shift={(coodin)},rotate=180] \Dinatmor \end{scope}
  \node[yshift=-5pt] at (coodi0) {$\scriptstyle \iota_m$} ;
  \node[yshift=-5pt] at (coodin) {$\scriptstyle \iota_m$} ;
  \end{tikzpicture}
  }
  \qquad = \qquad
  \raisebox{-7.4em} {\begin{tikzpicture}
  \coordinate (coodi2) at (0.46*\locpx,0.70*\locpy) ;
  \coordinate (coodin) at (0.46*\locpx,0.24*\locpy) ;
  \coordinate (coomu)  at (0.23*\locpx,0.91*\locpy) ;
  \coordinate (cooc)   at (0,0) ;
  \coordinate (cooc2)  at (0,0.61*\locpy) ;
  \coordinate (cood)   at (0.46*\locpx,0) ;
  \coordinate (coogam) at (0.23*\locpx,0.42*\locpy) ;
  \coordinate (cootop) at (0.23*\locpx,1.12*\locpy) ;
  \scopeArrow{0.16}{\arrowDefect} \scopeArrow{0.83}{\arrowDefect}
  \draw[\colorC,line width=\widthObj,postaction={decorate}]
       (cooc) node[below=-1pt]{$\dot\FF$} -- 
       (cooc2) [out=90,in=238] to node[sloped,midway,above=-4pt,xshift=-7pt] {$\BB^{m,m}$} (coomu) ;
  \end{scope} \end{scope}
  \scopeArrow{0.11}{\arrowDefect} \scopeArrow{0.34}{\arrowDefect}
  \scopeArrow{0.60}{\arrowDefect} \scopeArrow{0.83}{\arrowDefect}
  \draw[\colorC,line width=\widthObj,postaction={decorate}]
       (cood) node[below=-1pt]{$\dot\FF$} -- node[midway,right=-1.3pt,yshift=-5pt] {$\BB^{m,m}$}
       node[near end,right=-0.5pt,yshift=4pt] {$\dot\FF$}
       (coodi2) [out=90,in=302] to node[midway,sloped,below=-15pt,xshift=13pt] {$\BB^{m,m}$} (coomu) ;
  \end{scope} \end{scope} \end{scope} \end{scope}
  \scopeArrow{0.84}{\arrowDefect}
  \draw[\colorC,line width=\widthObj,postaction={decorate}]
       (coomu) -- (cootop) node[above=-2pt] {$\BB^{m,m}$} ;
  \end{scope}
  \node[rounded corners,line width=\widthboxline,fill=green!15,draw=green!70!black]
       at (coogam) {~~~~\,$\scriptstyle\gamma_{m,m}^{-1}$~~~~\,} ;
  \node[rounded corners,line width=\widthboxline,fill=blue!10,draw=\colorBdyfield]
       at (coomu) {$\scriptstyle\imu{}^{\phantom m}_{m,m,m}$} ;
  \begin{scope}[shift={(coodi2)},rotate=180] \Dinatmor \end{scope}
  \begin{scope}[shift={(coodin)},rotate=180] \Dinatmor \end{scope}
  \node[yshift=-5pt] at (coodi2) {$\scriptstyle \iota_m$} ;
  \node[yshift=-5pt] at (coodin) {$\scriptstyle \iota_m$} ;
  \end{tikzpicture}
  }
  \label{eq:comm-1}
  \ee
where the first equality holds by the compatibility \eqref{eq:alg9e} of bulk and boundary
products, while the second equality implements the functoriality of the half-braiding.
By invoking the equality \eqref{eq:alg9d-1=2} (specialized to $n \eq m$), the morphism on
the right hand side of \eqref{eq:comm-1} can be rewritten as 
$\imu{}^{\phantom m}_{m,m,m} \cir (\iota_m \oti \iota_m)$. Using once again \eqref{eq:alg9e}
this, in turn, equals the right hand side of \eqref{eq:imc=im}, and thus proves
commutativity of the bulk product $\imu$.

\newpage

\newcommand\wb{\,\linebreak[0]} \def\wB {$\,$\wb}
\newcommand\Bi[2]    {\bibitem[#2]{#1}}
\newcommand\inBo[8]  {{\em #8}, in:\ {\em #1}, {#2}\ ({#3}, {#4} {#5}), p.\ {#6--#7} }
\newcommand\inBO[9]  {{\em #9}, in:\ {\em #1}, {#2}\ ({#3}, {#4} {#5}), p.\ {#6--#7} {\tt [#8]}}
\newcommand\J[7]     {{\em #7}, {#1} {#2} ({#3}) {#4--#5} {{\tt [#6]}}}
\newcommand\JO[6]    {{\em #6}, {#1} {#2} ({#3}) {#4--#5} }
\newcommand\JP[7]    {{\em #7}, {#1} ({#3}) {{\tt [#6]}}}
\newcommand\Jpress[7]{{\em #7}, {#1} {} (in press) {} {{\tt [#6]}}}
\newcommand\BOOK[4]  {{\em #1\/} ({#2}, {#3} {#4})}
\newcommand\PhD[2]   {{\em #2}, Ph.D.\ thesis #1}
\newcommand\Prep[2]  {{\em #2}, preprint {\tt #1}}
\newcommand\uPrep[2] {{\em #2}, unpublished preprint {\tt #1}}
\def\adma  {Adv.\wb Math.}
\def\alrt  {Algebr.\wb Represent.\wB Theory}         
\def\apcs  {Applied\wB Cate\-go\-rical\wB Struc\-tures}
\def\atmp  {Adv.\wb Theor.\wb Math.\wb Phys.}   
\def\comp  {Com\-mun.\wb Math.\wb Phys.}
\def\coma  {Con\-temp.\wb Math.}
\def\cpma  {Com\-pos.\wb Math.}
\def\ijmp  {Int.\wb J.\wb Mod.\wb Phys.\ A}
\def\joal  {J.\wB Al\-ge\-bra}
\def\jopa  {J.\wb Phys.\ A}
\def\jpaa  {J.\wB Pure\wB Appl.\wb Alg.}
\def\jram  {J.\wB rei\-ne\wB an\-gew.\wb Math.}
\def\nupb  {Nucl.\wb Phys.\ B}
\def\phlb  {Phys.\wb Lett.\ B}
\def\phrl  {Phys.\wb Rev.\wb Lett.}
\def\pnas  {Proc.\wb Natl.\wb Acad.\wb Sci.\wb USA}
\def\sigm  {SIGMA}
\def\taac  {Theo\-ry\wB and\wB Appl.\wb Cat.}
\def\tams  {Trans.\wb Amer.\wb Math.\wb Soc.}
\def\toap  {Topology\wB Applic.}
\def\trgr  {Trans\-form.\wB Groups}

\small

\end{document}